\newtheorem{result}{Result}
\spnewtheorem{claim}{Claim}{\itshape}{}
\newcommand{\qpt}{\ensuremath{\mathsf{QPT}}}
\newcommand{\ip}{\ensuremath{\mathsf{IP}}}
\newcommand{\qip}{\ensuremath{\mathsf{QIP}}}
\newcommand{\qiptwo}{\ensuremath{\mathsf{QIP}(2)}}
\newcommand{\qma}{\ensuremath{\mathsf{QMA}}}
\newcommand{\pspace}{\ensuremath{\mathsf{PSPACE}}}
\newcommand{\bqp}{\ensuremath{\mathsf{BQP}}}
\newcommand{\bqsm}{BQSM}
\newcommand{\dfss}{\textsc{dfss-bc}}
\newcommand{\dfrssot}{\textsc{dfrss-ot}}
\newcommand{\ourbc}{\textsc{abo-bc}}
\newcommand{\pfs}{\ensuremath{\mathsf{RR}}}
\newcommand{\hyb}{\ensuremath{\mathsf{Hyb}}}
\newcommand{\nizkot}{\ensuremath{{\sf NIP}}}
\newcommand{\committer}{\ensuremath{\mathsf{C}}}
\title{ The Round Complexity of Proofs in the Bounded Quantum Storage Model}
\author{Alex B. Grilo\inst{1} \and Philippe Lamontagne\inst{2,3}}
\institute{Sorbonne Universit\'e, CNRS, LIP6, France \and National Research Council Canada \and Université de Montréal}
\date{\today}
\begin{document}

\maketitle

\begin{abstract}
The round complexity of interactive proof systems is a key question of practical
  and theoretical relevance in complexity theory and cryptography. Moreover,
  results such as QIP = QIP(3) (STOC'00) show that quantum resources 
  significantly help in such a task.

  In this work, we initiate the study of round compression of
  protocols in the {\em bounded quantum storage model} (BQSM). In
this model, the malicious parties have a bounded quantum memory
and they cannot store the all the qubits that are transmitted in the
protocol.

  Our main results in this setting are the following:

  \begin{enumerate}
  \item There is a non-interactive (statistical) witness indistinguishable
    proof for any language in NP (and even QMA) in BQSM in the {\em plain model}. We notice that in this
    protocol, only the memory of the verifier is bounded.
  \item Any classical proof system can be compressed in a two-message
    quantum proof system in BQSM. Moreover, if the original proof system is
    zero-knowledge, the quantum protocol is zero-knowledge too. In this result,
    we assume that the prover has bounded memory.
\end{enumerate}
Finally, we give evidence towards the ``tightness'' of our results. First, we
show that NIZK in the plain model against BQS adversaries is unlikely with
standard techniques. Second, we prove that without the BQS model there is no
2--message zero-knowledge quantum interactive proof, even under computational
assumptions.

\end{abstract}

\section{Introduction}
\label{sec:intro}

The round complexity of interactive proof systems\footnote{In an interactive proof system,  an all-powerful prover wants
convince a computationally bounded verifier that $x\in L$ for some language $L$
by exchanging polynomially many messages.  We want such interactive protocols
such that the prover can convince the verifier if $x \in L$, whereas if $x\notin
L$ the prover cannot convince the verifier except with negligible probability}
 is a central question in
complexity theory and cryptography. For example, while it is expected that not
all interactive proof systems can be compressed to a constant number of
rounds, showing such a result would have major implications in complexity theory
such as P $\ne$ PSPACE. In cryptographic settings, the round complexity is very
relevant to the practical applications of protocols, specially in the
setting of zero-knowledge (ZK) proof systems~\footnote{In a zero-knowledge interactive
proof system, the verifier ``learns nothing'' from the interaction with the
prover. This is formally defined as requiring the existence of a simulator which
can produce the same output as the verifier, but without the help of the prover.
Zero-knowledge proofs are extremely useful in building other cryptographic
primitives, such as a maliciously secure multiparty computation~\cite{4568209},
IND-CCA encryption~\cite{blumNoninteractiveZeroknowledgeIts1988}, identification
and digital signatures schemes~\cite{fiat_how_1987}.}. While there exist
$4$-messages ZK protocols for \npol{}~\cite{feigeZeroKnowledgeProofs1990}, it is known that $2$-messages ZK protocols for
\npol{} are impossible~\cite{goldreich_definitions_1994}; and in specific
settings such as black-box zero-knowledge, even $3$-message private-coin protocols and constant-round
public-coin protocols are known to be impossible~\cite{goldreichCompositionZeroKnowledgeProof1996}.
These negative results on the round complexity can often be circumvented through
additional resources.
For example, in the
random oracle model, any public-coin zero-knowledge proof can be made
non-interactive through the use of the Fiat-Shamir
heuristic~\cite{fiat_how_1987}. While it has been shown that such a heuristic
cannot be implemented in a black-box
way~\cite{GoldwasserK03,bitansky2013fiat}, it is possible to instantiate it
in specific settings and achieve, for example, non-interactive ZK for \npol{} in
the common reference string (CRS)
model from standard
cryptographic assumptions~\cite{blumNoninteractiveZeroknowledgeIts1988,PeikertS19}. 

With the development of quantum computing, the notion of interactive protocols
has been also extended to the quantum setting. Here, the prover and
verifier are now allowed to exchange quantum messages back-and-forth to prove
that $x \in L$. One of the first results in this direction already indicated
that quantum resources are useful in reducing the rounds of protocols: it was
shown that any quantum interactive protocols can be compressed to a $3$-message
protocol~\cite{KitaevWatrous}. The natural question
raised by such a result is the power of two-messages quantum interactive proof
systems. More concretely, can we compress any quantum (or less ambitiously
classical) protocol into a one-round protocol with quantum communication? This
is tightly connected with the question of instantiation of Fiat-Shamir with
quantum resources which was recently shown black-box impossible in~\cite{dupuis_fiat-shamir_2022}. 

In this work, we make progress in this direction by studying non-interactive and
compression of protocols in the {\em bounded quantum storage model} (BQSM). In
this model, we assume that the malicious parties have a bounded quantum memory
and that they cannot store the all the qubits that are transmitted in the
protocol. We notice that in our protocols, the honest parties do not need
quantum memory at all: they measure the qubits as soon as they are received.
This model has been shown very powerful, allowing the
implementation of several important cryptographic primitives with
information-theoretic
security~\cite{doi:10.1137/060651343,damgard_tight_2007,damga_ard_secure_2007,barhoush_powerful_2023}.
In this work, we show that such a powerful
tool is also relevant for round-efficient interactive protocols.
More concretely, we show the following:
\begin{enumerate}
  \item There is a non-interactive (statistical) witness indistinguishable
    proof for any language in \npol{} (and even \qma{}) in the {\em plain model} against
    BQS adversaries. We notice that in this
    protocol, only the memory of the verifier is bounded.
  \item Any classical proof system can be compressed in a two-message
    quantum proof system in BQSM. Moreover, if the original proof system is
    zero-knowledge, the quantum protocol is zero-knowledge too. In this result,
    we assume that the prover has bounded memory.
\end{enumerate}

We present now our results in more detail and give a brief overview on the
techniques to prove them.

\subsection{Our Results}
\label{sec:results}

As previously mentioned, in this work, we investigate the round complexity of proof systems in the bounded quantum
storage (BQS) model. It is based on the physical assumption that the adversary
has a bounded-size quantum memory of $q(\lambda)$ qubits where $\lambda$ is the
security parameter. Our main results are two compilers for reducing the round complexity of proofs
in the BQS model. Each one operates differently and has its own
applications. The bounded party differs in each of our main results; either the
verifier or the prover has bounded quantum memory, but never both. The memory
bound $q$ on the malicious party is independent of the underlying proof system
and a larger bound can be tolerated by increasing the size of the quantum
messages.

\paragraph{Non-interactive proof for \npol.} 
In our first result, we provide a compiler $\nizkot$ that takes a
$3$--message public-coin interactive proof system with 1-bit challenges and
compresses it to one message.
The main idea of the compiler is to use non-interactive oblivious transfer (OT) in non-interactive
proofs, an idea which  was introduced by~\cite{kilian_minimum_1990} and first appeared in writing in~\cite{bellare_non-interactive_1990}.

More concretely, the starting point of our protocol is the non-interactive quantum oblivious transfer protocol
of~\cite{damgard_tight_2007} which is secure against BQS receivers. We can
construct a non-interactive proof by having the prover send its first
message\footnote{The first message may be classical in a $\Sigma$--protocol or
  quantum, in which case we call it a $\Xi$--protocol~\cite{BG22}.} $a$ in the
clear and input the responses $r_0,r_1$ to both possible challenges $c\in\bool$
as its inputs to OT. Our compiler preserves the soundness of the underlying
interactive proof, and it can be amplified through parallel repetition.
Intuitively, the security of BQS-OT implies that a quantum memory bounded
verifier will only receive one of the two transcripts, which reveals no
information since accepting transcripts can be simulated if the underlying
$\Sigma$-protocol is honest-verifier zero-knowledge.

While we manage to prove that the protocol satisfies the witness
indistinguishable property, proving zero-knowledge is challenging since
it is hard for the simulator to ``decode'' the measurements of a potentially
malicious verifier. 
In particular, we prove in Section~\ref{sec:hvzk-wh-wi} that a ``natural'' simulation technique
cannot work.

\begin{result}
  Let $\Pi$ be a $\Sigma$--protocol. Then $\nizkot[\Pi]$ preserves
  soundness and preserves witness indistinguishability against BQS verifiers.
\end{result}

Our compiler can be extended in a trivial way to $\Sigma$--protocols with
logarithmic challenge length (by using a $1$-out-of-$2^{p}$ OT with $p\in
O(\lg(\lambda))$). Furthermore, the first message of the prover may be quantum,
so our compiler can be applied to $\Xi$--protocols as long as the
verifier is receive-and-measure. Our result thus implies a NIWI for \qma{} based
on the $\Xi$--protocol from~\cite{BG22} which has short challenges and is
receive-and-measure for the verifier.

This compiler allows us to achieve a non-interactive (statistically) witness
indistinguishable proof for all languages in $\npol$ in the BQS model without any prior setup.

\begin{result}\label{re:niwi}
  For any $L\in \npol$, there is a quantum non-interactive proof system for $L$
  with unconditional soundness and witness indistinguishability against BQS
  verifiers.
\end{result}

To obtain Result~\ref{re:niwi}, we apply our compiler to the typical proof
system for the $\npol$--complete language of graph Hamiltonicity. This would
normally introduce a computational assumption on either the prover or the
verifier since the proof uses a commitment scheme, however we can instead use a
quantum bit commitment, which only needs to satisfy a very weak notion of
binding.

A stronger notion than witness indistinguishability (yet still weaker than
zero-knowledge) is witness hiding. We show that witness hiding can be preserved
by our compiler in a regime where the soundness error is inverse polynomial. See
Appendix~\ref{sec:witn-hiding} for details.
 
\paragraph{A Round Collapse Theorem in the \bqsm.}
We show that under the BQS assumption, the round complexity of proof systems
essentially collapses to two messages (one round). We present a \emph{round
  reduction} compiler $\pfs$ that takes as input a $\poly[\lambda]$ rounds
 interactive proof $\Pi$ and produces a single round ($2$ messages)
proof for the same language with the following properties.
\begin{result}
  Let $\Pi$ be a $\poly[\lambda]$--round interactive proof system,
  then there is a $1$--round quantum interactive proof $\pfs[\Pi]$ such that
  \begin{enumerate}
  \item soundness is preserved against BQS provers;
  \item zero-knowledge is preserved.
  \end{enumerate}
\end{result}
Our compiler $\pfs$ is conceptually very simple. It relies on a distinctive
property of the original bit commitment in the \bqsm{}, in that the committer commits to a
bit $b$ by measuring a state it gets from the receiver. This allows us to remove
one round of interaction by having the verifier send a state $\ket\psi$ for the
commitment at the same time as its next challenge $c_i$. The prover commits to
its message $a_i$ by measuring $\ket \psi$, then receives $c_i$, and can respond
with its next message $a_{i+1}$. Since the prover has bounded quantum memory,
it will have to perform a (partial) measurement on $\ket \psi$ before
receiving the verifier's challenge $c_i$. By the binding property of the
commitment against BQS provers, this implies that $a_i$ is independent of $c_i$,
and thus any attack against this protocol is also an attack against $\Pi$. By
repeating this technique for every round in protocol $\Pi$, we end up with a
protocol with one round that has the same soundness error, plus a negligible
term from the BQS-BC binding theorem.

By using the correspondence $\ip=\pspace$~\cite{Shamir}, we obtain the following.

\begin{result}
  $\pspace = \qiptwo^{\mathsf{BSQM}}$, i.e., there exists a $2$--message quantum
  protocol for every problem in \pspace{} if the computationally unbounded
  prover has a bounded quantum memory.
\end{result}

Furthermore applying our compiler to the doubly efficient protocols for
delegation of classical computation~\cite{GKR15,RRR21}, we achieve the following
application.

\begin{result}
  In the BQSM, there is a quantum interactive protocol for any language in
  $\mathsf{P}$ such that the honest prover runs in polynomial time, the verifier
  runs in linear time and logarithmic space, and there is a single round of
  communication.
\end{result}

By applying our compiler to a concrete scheme, we get the first $1$--round
interactive proof for $\npol$ that is both statistically sound (against BQS
provers) and statistically ZK against arbitrary verifiers.

\paragraph{Other Contributions.}

We give evidence towards the ``tightness'' of our results. We show that NIZK in
the plain model against BQS adversaries is unlikely with standard techniques. We
also show that an assumption such as the BQSM is necessary for our round
compression result by proving that there is no 2--message zero-knowledge quantum
interactive proof system when the prover is not memory-bounded. This result is
an extension of the impossibility of~\textcite{goldreich_definitions_1994} to
the quantum case and is presented in Appendix~\ref{sec:impo-aux}.

Our round reduction transform uses a string commitment built by parallel
composition of the original BQS-BC scheme. To commit to $n(\lambda)\in
O(\lambda)$ bit strings requires sending $\lambda\cdot n(\lambda)\in
O(\lambda^2)$ qubits against a $O(\lambda)$--bounded adversary. Thus, the memory
bound is sublinear in the number of transmitted qubits. In
Appendix~\ref{sec:new-bqsm-bc}, we propose a new string commitment where the
length of committed strings, the number of transmitted qubits and the memory
bound are all linear in $\lambda$. While we were unable to prove that this new
commitment meets the definition of binding required by our $\pfs$ transform, we
can show that it is sum-binding, so it might be useful in improving the
efficiency of other BQSM schemes.

\subsection{Related Work}
\label{sec:related}

Classical non-interactive witness indistinguishable proof systems can be built
from strong computational assumptions such as a derandomization circuit
complexity
assumption~\cite{barakDerandomizationCryptography2003,bitanskyZAPsNonInteractiveWitness2015}
and the decision linear assumption on bilinear
groups~\cite{grothNoninteractiveZapsNew2006}.

Quantum NIZK for \qma{} can be achieved in the following models: in the secret
parameter model~\cite{broadbent_qma-hardness_2020}, in the QROM with quantum
preprocessing~\cite{morimaeClassicallyVerifiableNIZK2022}, in the designated
verifier model~\cite{shmueliMultitheoremDesignatedVerifierNIZK2021}, using
pre-shared EPR pairs and subexponential
assumptions~\cite{bartusekSecureComputationShared2023a}, and with a CRS with an
instance-dependent quantum message from the verifier to the
prover~\cite{coladangelo_non-interactive_2020}. While we call the bounded
quantum storage assumption a ``model'', our results do not rely on any prior
setup.

The bounded quantum storage model was introduced in~\cite{doi:10.1137/060651343}
as a physical assumption upon which information theoretically secure two-party
cryptographic primitives such as oblivious transfer (OT) and bit-commitment (BC)
could be built. The BQSM has found further application to quantum key
distribution~\cite{damgard_tight_2007,damga_ard_secure_2007} and to secure
identification~\cite{damga_ard_secure_2007}. The noisy quantum storage
model~\cite{wehner_cryptography_2008,schaffner_robust_2009,konig_unconditional_2012}
(NQSM) is a generalization of the BQSM, where the adversary's quantum memory is
subject to noise, that enables protocols for OT and BC. There are OT protocols
in the BQSM and NQSM where the tolerated bound or noise level is an arbitrary
large fraction of the number of exchanged
qubits~\cite{dupuis_entanglement_2015}. The model was recently exploited to
achieve strong primitives such as one-time
programs~\cite{barhoush_powerful_2023}. Composability frameworks have been
proposed for the
BQSM~\cite{unruh_concurrent_2011,wehner_composable_2008,fehr_composing_2009}.
These results and that of~\cite{barhoush_powerful_2023} require extracting the
malicious party's input, which in general is inefficient. This is not a problem
when the class of adversary is quantum memory bounded and computationally
unbounded, but it doesn't work when simulation needs to be efficient, as in ZK
proofs. Finally, post-quantum zero-knowledge  against BQS adversaries was
recently studied~\cite{ananthPostQuantumZeroKnowledgeSpaceBounded2022} in the
context where all information exchanged by the parties
is classical, but the adversaries may be quantum.

\section{Preliminaries}
\label{sec:prelims}

For a set $\mathcal{S}$ we write $2^\mathcal{S}$ to denote the powerset, or set
of subsets, of $\mathcal{S}$. We let $\Delta:\bool^n\times\bool^n\rightarrow
[0,1]$ denote the relative Hamming distance between two $n$--bit strings. It is a
well-known fact that for any $x\in\bool^n$, $|\{x':\Delta(x,x')<\delta\}|\leq
2^{H(\delta)n}$ where $H$ is the binary Shannon entropy.

We use ``+'' and ``$\times$'' to refer to the computational and Hadamard bases.
We often specify the base using a bit and write $\ket x_\theta:=H^{\theta}\ket
x$ for $x,\theta\in\bool$ where $H$ is the Hadamard transform.

Throughout this paper, $\|\cdot\|$ denotes the trace norm
$\|A\|=\trace{\sqrt{A^*A}}$ when its argument is an operator and the Euclidean
norm $\|\ket \psi\| = \sqrt{\braket{\psi}{\psi}}$ when its argument is a vector.

If $U$ is an isometry and $\rho$ a mixed state, we write $U(\rho):= U\rho U^*$.
If $\ket\psi$ is pure, we sometimes write $\psi$ as shorthand for the mixed
state $\proj \psi$.

For a string $a\in\bool^n$, we let $a_i^j$ for $1\leq i<j\leq n$ denote the
substring of $a$ composed of the bits $a_i,\dots,a_j$. When $a_1,\dots,a_k$ are
Boolean strings, we let $(a_i)_j$ denote the $j$th bit of the $i$th string.

The
\emph{min-entropy} of a classical random variable $X$ conditioned on an event
$\Psi$ is $H_\infty(X|\Psi)=-\log \max_x \Pr[X=x|\Psi]$. The \emph{max-entropy}
of a quantum or classical register $A$ in state $\rho$ is $H_0(A)_\rho=
\log\rank{\rho_A}$. A trivial upper-bound on $H_0(A)$ is $\dim A$.
The \emph{min-entropy splitting lemma} will also be useful. For a proof of this
lemma, please refer to the full version of~\cite{damgard_tight_2007}.
\begin{lemma}[Min-entropy splitting]\label{lem:splitting}
  Let $X_0$, $X_1$ and $Z$ be random variables with $H_\infty(X_0X_1|Z)\geq
  \alpha$. Then there exists a random variable $C$ with support over $\bool$
  such that $H_\infty(X_{1-C}|ZC)\geq \alpha/2-1$.
\end{lemma}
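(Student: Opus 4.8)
The plan is to prove the lemma by exhibiting an explicit choice of the splitting bit $C$, which records which of $X_0,X_1$ is the ``more predictable'' coordinate conditioned on $Z$. Concretely, I would let $C$ be the deterministic function of $(X_0,X_1,Z)$ defined by $C:=0$ whenever $\Pr[X_0=x_0\mid Z=z]\ge\Pr[X_1=x_1\mid Z=z]$ (breaking ties arbitrarily) and $C:=1$ otherwise, so that $X_{1-C}$ is always the coordinate carrying the \emph{smaller} conditional probability. The intuition is that the joint uncertainty budget of $2^{-\alpha}$ must be spread over at least one of the two coordinates, and that this rule selects a coordinate where (a square root of) that budget survives even after revealing $C$.

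The heart of the argument is the pointwise estimate that for every value $z$ of $Z$, every $c\in\bool$, and every $x$,
\[
  \Pr[X_{1-C}=x,\ C=c\mid Z=z]\ \le\ 2^{-\alpha/2}.
\]
I would prove this for $c=0$, the case $c=1$ being symmetric. Writing $P(x_0,x_1):=\Pr[X_0=x_0,X_1=x_1\mid Z=z]$ and $P_i(x):=\Pr[X_i=x\mid Z=z]$, the left-hand side equals $\sum_{x_0:\,P_0(x_0)\ge P_1(x)}P(x_0,x)$, which admits two bounds: it is at most $\sum_{x_0}P(x_0,x)=P_1(x)$, and it is at most $2^{-\alpha}$ times the number of $x_0$ with $P_0(x_0)\ge P_1(x)$, which is at most $1/P_1(x)$. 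Since $\min\{P_1(x),\,2^{-\alpha}/P_1(x)\}\le 2^{-\alpha/2}$, the estimate follows. The hypothesis $H_\infty(X_0X_1\mid Z)\ge\alpha$ enters precisely through the bound $P(x_0,x_1)\le 2^{-\alpha}$; if one works with the averaged notion of conditional min-entropy one gets instead a $z$-dependent bound $2^{-\alpha_z/2}$ and closes the argument by concavity of $\sqrt{\cdot}$.

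To finish, I would assemble the conditional-min-entropy bound from the pointwise estimate. Summing it over the two values of $C$ gives, for each $z$, $\sum_{c\in\bool}\max_x\Pr[X_{1-C}=x,C=c\mid Z=z]\le 2\cdot 2^{-\alpha/2}=2^{1-\alpha/2}$; since $\Pr[X_{1-C}=x\mid Z=z,C=c]=\Pr[X_{1-C}=x,C=c\mid Z=z]\,/\,\Pr[C=c\mid Z=z]$, taking the expectation over $(Z,C)$ makes the $\Pr[C=c\mid Z=z]$ factors cancel and yields exactly $H_\infty(X_{1-C}\mid ZC)\ge\alpha/2-1$, the additive $-1$ coming from the union over the single bit $C$.

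The step I expect to require the most care is the conditioning on $C$ itself: because $C$ is a function of $(X_0,X_1,Z)$, revealing it leaks information, and a naive argument loses control whenever $\Pr[C=c\mid Z=z]$ is small. Routing the whole estimate through the \emph{joint} quantity $\Pr[X_{1-C}=x,C=c\mid Z=z]$ rather than the conditional probability directly is what sidesteps this, and it is also why the precise conditional-min-entropy convention matters; I would state the lemma with the averaged notion so that the offending factors cancel after averaging over $(Z,C)$.
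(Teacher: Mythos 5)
Your proof is correct, and it follows the same overall strategy as the proof the paper delegates to the full version of~\cite{damgard_tight_2007}: first establish the pointwise bound $\Pr[X_{1-C}=x,\,C=c\mid Z=z]\le 2^{-\alpha/2}$ (equivalently $H_\infty(X_{1-C}C\mid Z)\ge\alpha/2$), then pay one bit for moving $C$ into the conditioning. Your choice of $C$ differs cosmetically from the reference's --- there one sets $C=1$ iff $\Pr[X_1=x_1\mid Z=z]\ge 2^{-\alpha/2}$, a function of $(X_1,Z)$ alone, whereas you pick whichever coordinate has the smaller conditional marginal --- but both lead to the same counting estimate (at most $1/p$ outcomes can have conditional probability $\ge p$, and each joint probability is bounded by $2^{-\alpha}$), so this is a variant rather than a genuinely different route. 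The one thing worth flagging is that your closing remark identifies a real imprecision in the lemma as stated: with the worst-case convention $H_\infty(X\mid Y)=\min_y H_\infty(X\mid Y=y)$ the bare $-1$ bound fails, because $\Pr[C=c\mid Z=z]$ can be arbitrarily small, and the cited source correspondingly states the clean version as $H_\infty(X_{1-C}C\mid Z)\ge\alpha/2$ and only moves $C$ into the conditioning at the price of a smoothing term; phrasing the conclusion with the averaged conditional min-entropy, as you suggest, is the cleanest way to make $\alpha/2-1$ literally correct and is all the paper's applications need.
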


\subsection{The Bounded Quantum Storage Model}
\label{sec:bqsm}

In the BQSM, the adversary can act arbitrarily on the whole state before a point
at which the memory bound is
applied~\cite{doi:10.1137/060651343,damga_ard_secure_2007,damgard_tight_2007}.
In particular, it is allowed an arbitrary CPTP map that transforms its $\lambda$
qubits into $q(\lambda)<\lambda$ qubits plus unlimited classical
side-information. This memory bound is applied at one or many points in the
protocol. Between memory bounds, the adversary again becomes unbounded and may
use arbitrarily-many auxiliary qubits to aid in its computation. In this paper,
the memory bounds apply only to a malicious party, as the protocols are
prepare-and-measure, which requires no quantum memory.

We review two important protocols in the \bqsm{} and state their security
properties below. 

\subsubsection{Bit Commitment in the \bqsm{}.}
\label{sec:bc-bqsm}

We begin by discussing the bit commitment scheme
of~\cite{doi:10.1137/060651343}. One of the unique features of this protocol is
that the \emph{commiter}  commits to a bit through the measurement
of a received quantum state, and does not need to send any message back to the receiver of the
commitment. The opening phase consists of the transmission of the
committed bit and along with measurement outcomey, which will enable the
consistency verification. This commitment scheme is perfectly hiding since no information is
sent to the receiver prior to the reveal phase. The
original~\cite{doi:10.1137/060651343} bit commitment protocol in the \bqsm{}
proceeds is described below.

\begin{quote}
  \rule{\linewidth}{1pt}
  \begin{center}
    {\bf Protocol} $\dfss{}$
  \end{center}

  {\bf Input:} a bit $b\in\bool$ for the committer.

  {\bf Commit phase:}
  \begin{enumerate}
  \item \verifier{} sends $\ket{x}_\theta$ for $x\in\bool^n$ and
    $\theta\in\{+,\times\}^n$ to the committer.
  \item \committer{} commits to bit $b$ by measuring all qubits in basis $+$ if
    $b=0$ and in basis $\times$ if $b=1$, obtaining a measurement outcome $x'$. 
  \end{enumerate}
  
  \begin{center}
    \emph{--- Memory bound applies ---}
  \end{center}

  {\bf Reveal phase:}
  \begin{enumerate}
  \item To open the commitment, \committer{} sends $b$ and $x'$ to \verifier{} who
    checks that $x'_i=x_i$ whenever $\theta_i=b$.
  \end{enumerate}

  \rule{\linewidth}{1pt}
\end{quote}

\paragraph{Binding Property of BC in the \bqsm{}.}
\label{sec:binding}

Since unconditionally secure bit commitment is impossible in the quantum
setting~\cite{loWhyQuantumBit1998,mayersUnconditionallySecureQuantum1997},
binding relies on the quantum storage bound of the malicious committer. A
malicious committer $\tilde\committer$ is bound to a single value by the fact
that it is forced to perform a partial measurement on the register it receives.
This notion is formalized by the following definition.
Let
\begin{equation}
  \label{eq:binding-state}
  \rho_{EWV} = \sum_{w,v} P_{WV}(w,v)\cdot \rho_{E}^{w,v}\otimes \proj{w} \otimes \proj{v}
\end{equation}
be the joint state at the end of the commit phase after the memory bound is
applied, where $W$ is $\tilde\committer$'s classical register, $E$ is
$\tilde\committer$'s $q$--qubit quantum register and $V$ is the receiver's
state.

\begin{definition}\label{def:binding-DFRSS} 
  A commitment scheme in the bounded-quantum-storage model is called {\em
    $\epsilon$-binding}, if for every (dishonest) committer
  $\tilde{\committer}$, inducing a joint state $\rho_{EWV}$ after the commit
  phase, there exists a classical random variable $B'$ with support in
  $\bool^n$, given by its conditional distribution $P_{B'|WV}$, such that for
  any $b'\in\bool^n$, the state
  \begin{equation}
    \label{eq:rho-b} \rho_{EWV}^{b'} = \sum_{w,v} P_{WV|B'}(w,v|b) \cdot \rho_{E}^{w,v}\otimes\proj{w} \otimes \proj{v} 
  \end{equation} satisfies the following condition. When executing the opening
  phase on the state $\rho_{EWV}^{b'}$, for any strategy of $\tilde\committer$, the
  honest verifier accepts an opening to $b\neq b'$ with probability at most $\epsilon$.
\end{definition}

It was shown in~\cite{damgard_tight_2007} that \dfss{} satisfies the above
definition (for $b\in\bool$). This implies a string commitment protocol where
$\committer$ commits bit-wise to $b_i$ using protocol $\dfss$.

\begin{theorem}[Security of DFSS-BC]\label{thm:binding-dfss} The quantum bit commitment scheme \dfss{} is binding according to
  Definition~\ref{def:binding-DFRSS} against $q$--bounded committers if
  $n/4-q\in\Omega(n)$.
\end{theorem}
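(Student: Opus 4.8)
The plan is to reprise the argument of~\cite{damgard_tight_2007}: the theorem is their binding result for \dfss{}, and the proof combines their high-order entropic uncertainty relation for BB84 measurements with the min-entropy splitting lemma (Lemma~\ref{lem:splitting}) and a dimension bound for conditioning on a bounded quantum register. It is cleanest to work in the purified (EPR) picture: instead of sending $\ket{x}_\theta$, the receiver prepares $n$ EPR pairs, sends one half of each to $\tilde\committer$, and only later measures its halves $B$ in a uniformly random basis $\Theta$ to define $x$. The malicious committer $\tilde\committer$ applies its attack map to its halves, producing the classical register $W$ and, after the memory bound, the $q$--qubit register $E$; the receiver's register $V$ contains $\Theta$ (and, once it measures, $x$). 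Given $\Theta=\theta$, let $X_0$ be the substring of $x$ on the positions $i$ with $\theta_i=0$ (the $+$--basis positions) and $X_1$ the substring on the positions with $\theta_i=1$; then $(X_0,X_1)$ is just $x$ reordered and is uniform, and the honest verifier's reveal-phase check ``$x'_i=x_i$ whenever $\theta_i=b$'' is passed by an opening to $b$ \emph{if and only if} $\tilde\committer$ outputs a string agreeing with $X_b$ on all of its coordinates. So it suffices to produce a bit $B'\in\bool$, computable from $(W,V)$, such that $X_{1-B'}$ is essentially unguessable from $\tilde\committer$'s post-commitment side information.

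The key external ingredient is the entropic uncertainty relation of~\cite{damgard_tight_2007}: the outcome $X$ of measuring $B$ in basis $\Theta$ satisfies $H^\varepsilon_\infty(X\mid \Theta W)\geq(\tfrac12-\delta)n$ for a smoothing parameter $\varepsilon=2^{-\Omega(n)}$ and a $\delta$ that can be made an arbitrarily small constant; crucially this conditions only on $\tilde\committer$'s \emph{classical} register $W$, which is legitimate because $W$ is produced before $\Theta$ is chosen. (Smoothing is essential here: without it the bound degrades to roughly $0.23\,n$, achieved by atypical measurement outcomes.) Since $\Theta$ determines the partition into $(X_0,X_1)$, this reads $H^\varepsilon_\infty(X_0X_1\mid \Theta W)\geq(\tfrac12-\delta)n$. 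Applying Lemma~\ref{lem:splitting} with $Z=(\Theta,W)$ yields a bit $C=C(\Theta,W)$ with $H^\varepsilon_\infty(X_{1-C}\mid \Theta W C)\geq(\tfrac14-\tfrac\delta2)n-1$, and then the dimension bound for conditioning on a $q$--qubit register gives
\[
  H^\varepsilon_\infty\bigl(X_{1-C}\,\big|\,\Theta\,W\,C\,E\bigr)\ \geq\ \bigl(\tfrac14-\tfrac\delta2\bigr)n-1-q\ =\ \Omega(n),
\]
using the hypothesis $n/4-q\in\Omega(n)$ and $\delta$ small enough.

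Now set $B':=C$; it is a (randomized) function of $(W,V)$, and its conditional distribution $P_{B'|WV}$ is the random variable required by Definition~\ref{def:binding-DFRSS}. Removing the conditioning on $\Theta$ only increases min-entropy, so $X_{1-B'}$ has $\Omega(n)$ smooth min-entropy given everything $\tilde\committer$ holds at reveal time, and by the operational interpretation of smooth conditional min-entropy, any reveal strategy of $\tilde\committer$ (an arbitrary measurement of $E$ together with $W$) outputs a string agreeing with $X_{1-B'}$ — hence a successful opening to $b\neq B'$ — with probability at most $2^{-\Omega(n)}+\varepsilon = 2^{-\Omega(n)}=:\epsilon$. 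A routine averaging/conditioning argument upgrades this to the per-value form of Definition~\ref{def:binding-DFRSS} (working from the states $\rho^{b'}_{EWV}$), which gives the theorem; the bit-wise string commitment claimed after the statement follows by running an independent copy of \dfss{} for each $b_i$ and letting $B'$ be the concatenation of the per-instance committed bits.

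The main obstacle is the entropic uncertainty relation itself, which is the technical heart of~\cite{damgard_tight_2007} and which we use here as a black box; everything else is bookkeeping. Within that bookkeeping the two points that need care are (i) performing the min-entropy splitting on the classical data \emph{before} accounting for the quantum memory, so the loss is exactly $q$ (this is what makes $n/4-q\in\Omega(n)$ the right threshold rather than something weaker), and (ii) verifying that the extracted bit $B'$ depends only on $(W,V)$, so that the definition is met verbatim rather than in a weaker ``sum-binding'' form.
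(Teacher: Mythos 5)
The paper does not prove this theorem; it cites it directly from the full version of~\cite{damgard_tight_2007}, so there is no in-paper proof to compare against. Your sketch is a faithful outline of the DFRSS argument and in particular gets the crucial structural point right: purify, invoke the smooth entropic uncertainty relation conditioned only on the pre-$\Theta$ classical side information $W$, split with Lemma~\ref{lem:splitting} while everything is still classical, and only \emph{then} pay the $H_0(E)\leq q$ chain-rule penalty, which is exactly what produces the $n/4-q\in\Omega(n)$ threshold (the same ordering the paper itself uses in its proof of the parallel-repetition corollary for \dfrssot). The one step you wave at --- upgrading the on-average smooth-min-entropy guessing bound $H_\infty^\varepsilon(X_{1-C}\mid \Theta W C E)=\Omega(n)$ to the \emph{per-$b'$} conditional-state form demanded by Definition~\ref{def:binding-DFRSS} --- is genuinely the most delicate part of the DFRSS proof (an average bound does not by itself control the conditioned state $\rho^{b'}_{EWV}$ when $\Pr[B'=b']$ is tiny), so in a full write-up that ``routine averaging/conditioning argument'' would need to be spelled out by following the reference's treatment rather than left implicit.
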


\subsubsection{Oblivious Transfer in the BQSM.}
\label{sec:ot-bqsm}

The original OT protocol in the \bqsm{} was a Rabin OT (where the sender has one
input and the receiver gets to see it with probability $\frac 12$). We use the
$\binom 21$--OT from~\cite{damgard_tight_2007} which is presented below. It is a
\emph{non-interactive} protocol which consists of a single message with quantum
and classical parts from the sender to the receiver. The memory bound is applied
after the transmission of the quantum state.

\begin{quote}
  \rule{\linewidth}{1pt}
  \begin{center}
    {\bf Protocol} \dfrssot{}
  \end{center}

  {\bf Input:} two bits $s_0,s_1\in\bool$ for the sender. A bit $c\in\bool$ for
  the receiver.

  {\bf Sender:}
  \begin{itemize}
  \item Prepare $\ket x_\theta$ and send it to the receiver.
  \end{itemize}
  \begin{center}
    \emph{--- Memory bound applies ---}
  \end{center}
  \begin{itemize}
  \item Pick two universal hash functions $h_0,h_1\in\mathcal{H}$ and
    set $m_0=s_0\oplus h_0(x_0)$ and $m_1=s_1\oplus
    h_1(x_1)$ where $x_0$ (resp. $x_1$) is the substring of $x$ for which
    $\theta_i=+$ (resp. $\times$).
  \item Send $(\theta,h_0,h_1,m_0,m_1)$ to the receiver.
  \end{itemize}    

  {\bf Receiver:}
  \begin{itemize}
  \item Measure each qubit of the quantum state in basis $[+,\times]_c$ to get a
    result $x'$.
  \item Compute $x'_c$ using $\theta$ and output $m_c\oplus h_c(x'_c)$.
  \end{itemize}
  \rule{\linewidth}{1pt}
\end{quote}

Correctness of the protocol follows from the fact that $x'_c=x_c$ if both
parties follow the protocol. The security is established by the following
result. 

\begin{theorem}[Security of DFRSS-OT~\cite{damgard_tight_2007}]\label{thm:bqs-ot}Let $R$ be a malicious $q$-bounded receiver against $\ell$--bit \dfrssot{} and let
  $\rho_{M_0 M_1 H_0 H_1 E}$ be the state of $R$ right after the classical
  message from the sender (where $\dim E\leq 2^q$). Then there exists a random
  variable $C$ such that
  \begin{equation}
    \label{eq:security-ot}
    \left\|
      \rho_{M_{1-C} M_C C H_0 H_1 E} -
      \frac{\id_{M_{1-C}}}{2^{\ell}} \otimes \rho_{M_C C H_0 H_1 E}
    \right\|
    \leq 2^{-\frac n4 +\ell + q}
  \end{equation}
\end{theorem}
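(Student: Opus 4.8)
The plan is to reduce inequality~(\ref{eq:security-ot}) to an entropic statement about the raw string that a memory-bounded receiver can retain, and then chain together an uncertainty relation, the min-entropy splitting lemma (Lemma~\ref{lem:splitting}) and privacy amplification, following~\cite{damgard_tight_2007}. Fix a malicious $q$-bounded receiver $R$. Before the memory bound $R$ holds the $n$-qubit state $\ket x_\theta$, with $X,\Theta$ uniform and independent; afterwards it holds a register $E$ with $\log\dim E\le q$, together with classical side-information, which the uncertainty relation below accommodates. Conditioned on $\theta$, learning $x$ is the same as learning the pair of substrings $(x_0,x_1)$, and $X_0,X_1$ are independent and uniform given $\Theta$. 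Since the sender's classical message is a randomized function of $(X_0,X_1,S_0,S_1)$ and the independent seeds $H_0,H_1$, with $M_b=S_b\oplus h_b(X_b)$, it suffices to exhibit a random variable $C$ such that $h_{1-C}(X_{1-C})$ is within $2^{-n/4+\ell+q}$ of uniform and independent of $(\Theta,H_0,H_1,E,C,h_C(X_C))$: then $M_{1-C}$ is a one-time pad on a near-uniform key, which (using that $M_C$ depends on $X_C$ only through $h_C(X_C)$ and that $S_0,S_1$ are independent of everything else) is exactly~(\ref{eq:security-ot}).

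The heart of the argument, and the step I expect to be hardest, is a min-entropy uncertainty relation tailored to this setting. Because the two bases are mutually unbiased, every one of the $n$ qubits that $R$ had to measure at least partially before learning $\theta$ leaves on the order of half a bit of uncertainty about the corresponding bit of $x$, and a $q$-qubit memory can compensate for only $O(q)$ of them; combining this with min-entropy splitting yields a random variable $C$ with
\[
H_\infty\!\left(X_{1-C}\mid \Theta\,E\,C\right)\ \geq\ \frac n2 - O(q) - O(\ell).
\]
This is essentially the content of the ``tight high-order entropic uncertainty relation'' of~\cite{damgard_tight_2007}; it is proved by an induction over the $n$ qubits that reduces the $n$-qubit bound to a single-qubit uncertainty relation, together with a careful analysis of how conditioning on a bounded-dimensional quantum register degrades (smooth) min-entropy. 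Getting the constant right in this step is what ultimately produces the $n/4$ in the final error.

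It then remains to apply privacy amplification. Conditioning further on the $\ell$-bit value $h_C(X_C)$ costs at most $\ell$ bits of min-entropy, so $H_\infty(X_{1-C}\mid \Theta\,H_0\,H_1\,E\,C\,h_C(X_C))\geq n/2-O(q)-O(\ell)$. As $h_{1-C}$ is drawn from a universal family independently of everything else and outputs $\ell$ bits, the quantum leftover hash lemma gives that $h_{1-C}(X_{1-C})$ is $\varepsilon$-close in trace distance to uniform and in tensor product with $(\Theta,H_0,H_1,E,C,h_C(X_C))$, where $\varepsilon$ is exponentially small in $H_\infty(X_{1-C}\mid\cdots)-\ell$; with the parameters of~\cite{damgard_tight_2007} this gives $\varepsilon\leq 2^{-n/4+\ell+q}$. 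Substituting $M_{1-C}=S_{1-C}\oplus h_{1-C}(X_{1-C})$ preserves this closeness, and since $M_C$ and $C$ are functions of $(S_C,H_0,H_1,C,h_C(X_C))$ we obtain $\left\|\rho_{M_{1-C}M_C C H_0 H_1 E}-\frac{\id_{M_{1-C}}}{2^\ell}\otimes\rho_{M_C C H_0 H_1 E}\right\|\leq 2^{-n/4+\ell+q}$, which is~(\ref{eq:security-ot}).
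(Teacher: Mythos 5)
The statement you were asked to prove is, in the paper, \emph{not} proved at all: it is imported verbatim from~\cite{damgard_tight_2007} as a known result (the theorem environment is tagged with that citation, and the paper supplies a proof only for the subsequent \emph{parallel repetition} corollary, which builds on this theorem as a black box). So there is no in-paper proof to compare against; what you have written is a reconstruction of the DFSS argument, and the reduction you lay out — uncertainty relation for BB84 measurements, then min-entropy splitting (Lemma~\ref{lem:splitting}), then privacy amplification via the quantum leftover hash lemma — is indeed the skeleton of that proof.

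That said, your intermediate min-entropy claim is off by a factor of two, and in a way that would break the final accounting if taken literally. Lemma~\ref{lem:splitting} takes a \emph{joint} bound $H_\infty(X_0X_1\mid Z)\geq\alpha$ and yields $H_\infty(X_{1-C}\mid ZC)\geq \alpha/2-1$. The uncertainty relation gives $\alpha\approx n/2$ for the joint string (conditioned on $\Theta$), so after splitting and absorbing a $q$-qubit quantum register you land around $n/4-q-1$, not $n/2-O(q)-O(\ell)$ as you wrote. Independently, the $O(\ell)$ term does not belong in that display: the only $\ell$-bit loss in min-entropy comes from conditioning on $h_C(X_C)$ in the privacy-amplification stage, which you correctly account for a few lines later, so it is double-counted in your statement. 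You flag that ``getting the constant right ... produces the $n/4$,'' which is the right instinct, but the chain as written passes through a min-entropy value that is larger than any output of the splitting lemma can be. If you restate the intermediate bound as $H_\infty(X_{1-C}\mid \Theta E C)\gtrsim n/4-q$ and keep the $\ell$-bit loss where you already have it, the outline is sound as a high-level account of~\cite{damgard_tight_2007}, with the detailed uncertainty-relation and smooth-min-entropy bookkeeping rightly deferred to that reference.
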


\paragraph{Parallel repetition of \dfrssot.}
While protocol \dfrssot{} does not generally compose in
parallel\footnote{See~\cite{wehner_composable_2008} for a counter-example. The
  issue occurs when the same party acts as the receiver of an OT instance while
  simultaneously acting as the sender in another. }, it does compose in the case
where the same party is the sender in every instance. By parallel repetition, we mean
the protocol where the quantum part of every instance is sent before the memory
bound, followed by the classical part of every instance.   
\begin{corollary}[Parallel repetition of DFRSS-OT]\label{thm:bqs-ot}Let $R$ be a malicious $q$-bounded receiver against $k$ parallel repetitions
  of $\ell$--bit \dfrssot{}. Let $\rho_{\vec M_0 \vec M_1 \vec H_0 \vec H_1 E}$
  be the state of $R$ right after the classical message from the sender (where
  $\dim E\leq 2^q$). Then there exist random variables $\vec C=C^1,\dots,C^k$ such that
  \begin{equation}\label{eq:security-parallel-ot}
    \left\|
      \rho_{\vec M_{\neg \vec C}\vec M_{\vec C} \vec C\vec H_0\vec H_1E} -
      \frac{\id_{\vec M_{\neg\vec C}}}{2^{k\cdot\ell}} \otimes
      \trace[\vec M_{\neg \vec C}]{\rho_{\vec M_{\vec C} \vec C \vec  H_0 \vec H_1 E}} 
    \right\|
    \leq k\cdot 2^{-\frac n4 +\ell + q}
  \end{equation}
  where $\vec M_{\neg \vec C}$ denotes registers $M^i_{1-C^i}$ for $i\in [k]$
  and $M_{\vec C}$ denotes registers $M^i_{C^i}$.
\end{corollary}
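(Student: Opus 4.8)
The plan is to prove the parallel statement by a hybrid argument over the $k$ instances, applying the single-instance security theorem for \dfrssot{} once for each instance. First I would introduce a chain of states $\hyb_0,\dots,\hyb_k$ with $\hyb_0=\rho_{\vec M_{\neg\vec C}\vec M_{\vec C}\vec C\vec H_0\vec H_1E}$ and $\hyb_k$ equal to the right-hand side of \eqref{eq:security-parallel-ot}, where $\hyb_t$ is obtained from $\hyb_0$ by tracing out $M^1_{\neg C^1},\dots,M^t_{\neg C^t}$ and re-attaching them as maximally mixed registers. Since $\|A\otimes\sigma\|=\|A\|$ whenever $\sigma$ is a state, $\|\hyb_{t-1}-\hyb_t\|$ equals $\bigl\|\rho_{M^t_{\neg C^t}R_t}-\frac{\id_{M^t_{\neg C^t}}}{2^\ell}\otimes\rho_{R_t}\bigr\|$, where $R_t$ collects every register of $\hyb_0$ other than $M^1_{\neg C^1},\dots,M^t_{\neg C^t}$; by the triangle inequality it then suffices to prove $\|\hyb_{t-1}-\hyb_t\|\le 2^{-n/4+\ell+q}$ for every $t\in[k]$.

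To bound a single step I would use the fact that all $k$ instances are run by the same sender with a single memory bound, which lets instance $t$ be isolated as a stand-alone \dfrssot{}: let $R^{(t)}$ be the receiver that runs the $k$-parallel malicious receiver $R$ internally, simulating the other $k-1$ senders with fresh independent randomness while forwarding to its internal copy of $R$ the messages of the external instance-$t$ sender. Since $R^{(t)}$ may discard the other instances' quantum states immediately after the memory bound and keep only their classical descriptions, it remains $q$-bounded; and the joint state it induces, restricted to the registers it shares with $\rho$, coincides with $\rho$, because the honest senders are fixed algorithms run with independent randomness and it is immaterial which copy is external. Applying the single-instance theorem to $R^{(t)}$ yields a bit $C^t$ and the bound $\bigl\|\rho_{M^t_{\neg C^t}M^t_{C^t}C^tH^t_0H^t_1E^{(t)}}-\frac{\id_{M^t_{\neg C^t}}}{2^\ell}\otimes\rho_{M^t_{C^t}C^tH^t_0H^t_1E^{(t)}}\bigr\|\le 2^{-n/4+\ell+q}$, where $E^{(t)}$ is the residual side information of $R^{(t)}$ --- its $q$-qubit quantum register, its classical data, and the full transcripts of the $k-1$ simulated instances; this is legitimate because the single-instance bound is governed by $q$, the number of qubits of quantum memory, and is unaffected by extra classical side information. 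Provided the $C^i$ are produced coherently, the tuple $(M^t_{C^t},C^t,H^t_0,H^t_1,E^{(t)})$ then determines $\vec H_0,\vec H_1$, the register $E$, all $M^i_0,M^i_1$ with $i\ne t$, and the whole vector $\vec C$, so $R_t$ is the image of these registers under a fixed CPTP map (a partial trace followed by classical post-processing); monotonicity of the trace distance then gives $\|\hyb_{t-1}-\hyb_t\|\le 2^{-n/4+\ell+q}$, and summing over $t\in[k]$ delivers the bound $k\cdot 2^{-n/4+\ell+q}$ of \eqref{eq:security-parallel-ot}.

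The step I expect to be the real obstacle --- and the reason a lemma is needed rather than a black-box corollary, since \dfrssot{} does not compose under arbitrary parallel repetition --- is making the ``coherence'' of the $C^i$ precise, together with the side-information argument it rests on. On the side-information side, one must check that the transcripts of the other instances do not erode the min-entropy the single-instance theorem relies on; this works because those transcripts come from independent randomness, so that for each fixing of the other instances' randomness the map producing $R$'s memory is still a $q$-qubit-output map applied to instance $t$'s qubits, and one averages. The more delicate point is that each $C^i$ produced by the min-entropy splitting lemma (Lemma~\ref{lem:splitting}) depends on the string transmitted in instance $i$, so if $C^i$ is built using the other instances' data as side information it also depends on instance $t$'s secret string, which is absent from the side information available in the reduction isolating instance $t$ --- yet hybrid step $t$ needs all of $\vec C$ in its conditioning. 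Resolving this, for instance by defining $\vec C$ through a single application of the splitting argument to the full $k$-instance state and letting each reduced receiver retain exactly the classical data needed to reconstruct all of $\vec C$ at every step, is where I expect most of the work to go; once it is in place, the chaining of the per-step estimates above closes the proof.
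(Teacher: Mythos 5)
Your high-level skeleton matches the paper: both proofs control the trace distance by a triangle inequality over $k$ intermediate states, replacing $M^i_{1-C^i}$ by the maximally mixed state one instance at a time. But the per-step argument you give --- isolating instance $t$ by an internalized receiver $R^{(t)}$ and invoking the single-instance theorem as a black box --- is not what the paper does, and the difficulty you flag in your last paragraph is real and not optional: the $C^t$ produced by the single-instance theorem applied to $R^{(t)}$ is, a priori, a different random variable from the $C^t$ that must appear simultaneously with $C^1,\dots,C^{t-1},C^{t+1},\dots,C^k$ in every hybrid step, and without pinning that down the chain does not close. Your proposed fix (``define $\vec C$ through a single application of the splitting argument to the full $k$-instance state'') is the right direction but stops short of the observation that actually resolves it.

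The paper's resolution is to bypass the black-box reduction entirely. It passes to the purified scheme, where the sender sends EPR halves and measures its own halves in basis $\Theta^i$ only after the memory bound. Then $(X^i,\Theta^i)$ is exactly independent of $(X^j,\Theta^j)$ for $j\neq i$. The min-entropy splitting lemma is applied to $X^i_0 X^i_1$ given $\Theta^i$, and --- this is the crucial point your sketch is missing --- the random variable $C^i$ it produces is a function of the conditional distribution $P_{X^i\mid\Theta^i}$ alone, which is local to instance $i$. Independence across instances then makes the $k$ variables $C^1,\dots,C^k$ simultaneously well-defined on the joint probability space, with no consistency issue to patch. Separately, to get the single-instance bound for instance $i$ under the extra conditioning on $(\Theta^j)_j,(X^j)_{j\neq i},Z,E$, the paper notes that $X^i$ is independent of the other instances' variables, and that conditioning on the $q$-qubit register $E$ costs only $H_0(E)\leq q$ of min-entropy --- exactly the $q$ already present in the single-instance exponent. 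With those two points established, your hybrid chaining goes through verbatim, but neither point falls out of the $R^{(t)}$ reduction framing you set up.
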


\begin{proof}
  Consider the purified variant of the scheme, where the sender sends halves of
  EPR pairs in the first step and measures its halves in basis $\theta$ after
  the memory bound. Consider $k$ parallel executions of this purified scheme.
  Let $X^i$ and $\Theta^i$ be the measurement result and basis for the $i$th
  repetition. The distribution $(X^i,\Theta^i)$ is independent from that of
  $(X^j,\Theta^j)$ for $j\neq i$. Since $H_\infty(X^i| \Theta^i)\geq (\frac
  12-\epsilon)n$, by the min-entropy splitting lemma there exists $C^i$ such
  that $M_{1-C^i}$ is indistinguishable from uniform. Note that the min-entropy
  bound holds even if we condition on the random variables from other executions
  and on the receiver's registers:
  \begin{equation*}
    H_\infty(X^i| (\Theta^j)_j (X^{j})_{j\neq i}ZE)\geq   H_\infty(X^i| \Theta^i)-H_0(E)\geq (\frac 12-\epsilon)n -q
  \end{equation*}
  Also note that the random variable $C^i$ depends only on the conditional
  distribution $P_{X^i|\Theta^i}$, so the $C^i$s are simultaneously well-defined
  for each $i\in[k]$. We have that for each $i$,
  \begin{equation}
    \left\|
      \rho_{M^i_{1- C^i}M^i_{C^i} C^iH^i_0H^i_1E} -
      \frac{\id_{M^i_{1- C^i}}}{2^{k\cdot\ell}} \otimes \trace[M^i_{1-C^i}]{\rho_{M^i_{C^i} C^i H^i_0 H^i_1 E}}
    \right\|
    \leq  2^{-\frac n4 +\ell + q}
  \end{equation}
  and, by starting with $\rho_{\vec M_{\neg \vec C}\vec M_{\vec C} \vec C\vec
    H_0\vec H_1E}$ and invoking the triangle inequality $k$ times (where each
  time we replace $M^i_{1-C^i}$ with the completely mixed state), we get the
  corollary's statement. \qed
\end{proof}

\subsection{Quantum Interactive Proofs and Quantum Zero-Knowledge}
\label{sec:qzk}

An interactive proof system is a protocol between two participants, a prover
\prover{} and a verifier \verifier{}. We consider proofs of language membership
where each participant receives a common input $x$, and the prover may receive
an additional input $w$, such as a witness that $x$ is a member of a $\npol$
language. A proof system is \emph{classical} if the message exchanged are
classical, but \prover{} and \verifier{} are allowed to be quantum. We say that
a classical or quantum proof system is \emph{public coin} if the verifier's
messages are uniformly and independently distributed.

We denote by $\prover(x)\leftrightharpoons \verifier(x)$ the output of the
verifier after the interactive proof. An interactive proof system for a language
$L$ is $\delta$--correct if for all $x\in L$,
\begin{equation}
  \label{eq:correctness}
  \Pr[\prover(x)\leftrightharpoons
  \verifier(x)=1]\geq \delta\enspace .
\end{equation}
 It is (computationally) $\epsilon$--sound if for all
(\qpt) malicious prover $\tilde \prover$, for all $x\notin L$,
\begin{equation}
  \label{eq:soundness}
  \Pr[\tilde\prover(x)\leftrightharpoons \verifier(x)=1]\leq \epsilon \enspace .
\end{equation}

We now define quantum zero-knowledge~\cite{watrous_zero-knowledge_2009}.

\begin{definition}[Indistinguishability of Quantum States]
  Let $L$ be an infinite set of strings and let $\psi=\{\psi_x\}_{x\in L}$ and
  $\phi=\{\phi_x\}_{x\in L}$ be two families of quantum states. We say that
  $\psi$ and $\phi$ are \emph{computationally indistinguishable} if for all
  $x\in L$ for every $\poly[|x|]$--size quantum circuit $\dist$ and for all
  state $\sigma$ over $\hilbert^{\otimes \poly[|x|]}$,
  \begin{equation*}
    \| \dist(\psi_x\otimes \sigma) - \dist(\phi_x\otimes \sigma)\| \leq \negl[|x|]\enspace .
  \end{equation*}
  $\psi$ and $\phi$ are \emph{statistically indistinguishable} the above holds
  with respect to all $\dist$ and all states $\sigma$.
\end{definition}

\begin{definition}[Indistinguishability of Quantum Channels]
  \label{def:CPTP-indist}
  Let $L$ be an infinite set of strings and let $\Psi=\{\Psi_x\}_{x\in L}$ and
  $\Phi=\{\Phi_x\}_{x\in L}$ be two families of superoperators agreeing on their
  input and output spaces: $\Psi_x,\Phi_x:\hilbert^{\otimes
    n(|x|)}\rightarrow\hilbert^{\otimes m(|x|)}$. We say that $\Psi$ and $\Phi$
  are \emph{computationally indistinguishable} if for all $x\in L$, for
  every $\poly[|x|]$--size measurement circuit $\dist:\hilbert^{\otimes
    m(|x|)+k(|x|)} \rightarrow \hilbert$ and for every
  $\sigma\in\hilbert^{\otimes m(|x|)+k(|x|)}$,
  \begin{equation}
    \label{eq:CPTP-indist}
    \|\dist (\Psi_x\otimes \id^{\otimes k(|x|)}(\sigma)) -
    \dist (\Phi_x\otimes \id^{\otimes k(|x|)}(\sigma)) \|
    \leq \negl[|x|]
  \end{equation}
  where $m(|x|)$, $n(|x|)$ and $k(|x|)$ are $\poly[|x|]$. $\Psi$
  and $\Phi$ are \emph{statistically indistinguishable} if the above holds with
  respect to all CPTP map $\dist$ and all states $\sigma$ (for unbounded
  $k(|x|)$).
\end{definition}

\begin{definition}[Quantum Zero-Knowledge]\label{def:t-zk}
An interactive proof system $\Pi=\langle \prover, \verifier\rangle$ for a
  language $L$ is \emph{computationally quantum zero-knowledge} (qZK) if for
  every $\poly[|x|]$--time verifier $\verifier^*$ receiving the common input
  $x\in L$ and a $\poly[|x|]$--size quantum register $E$, there exists a
  $\poly[|x|]$--time simulator $\simulator_{\verifier^*}$ that receives the same
  inputs and such that the quantum channel families $\{\prover
  \leftrightharpoons \verifier(x,\cdot))\}_{x\in L}$ and
  $\{\simulator_{\verifier^*}(x,\cdot)\}_{x\in L}$ are computationally
  indistinguishable. We say that $\Pi$ is \emph{statistically quantum
    zero-knowledge} $\verifier^*$ is unbounded and if the two channel families
  are statistically indistinguishable. We say it is (computationally or
  statistically) \emph{quantum honest verifier zero-knowledge} (qHVZK) if
  indistinguishability holds with respect to the honest verifier
  $\verifier^*=\verifier$.
\end{definition}

\begin{definition}[$\Xi$--protocols]\label{def:xi-protocol}
  A $\Xi$--protocol for a language $L$ is an interactive proof system
  $\Pi=(\prover_1,\prover_2,\verifier)$ with the following structure.
  \begin{enumerate}
  \item The prover receives as input $x$ and a witness $\ket{w}$, computes
    $\ket \phi_{AB}\leftarrow \prover_1(x)$ and sends $\phi_B$ to the verifier.
  \item The verifier chooses a uniformly random challenge $c\in\bool^\ell$ and
    sends $c$ to the prover.
  \item The prover computes $r\leftarrow\prover_2(x,\phi_B,c)$ and sends $r$ to
    the verifier.
  \item The verifier accepts if $\verifier(x,\phi_A,c,r)=1$ and rejects
    otherwise.
  \end{enumerate}
  A $\Sigma$--protocol is a $\Xi$--protocol where $\ket w$ and $\ket\phi_{AB}$
  are classical. A $\Xi$--protocol is \emph{prepare-and-measure} for the
  verifier if the verifier measures $\phi_A$ upon reception in a basis chosen by
  $c$ and the predicate $\verifier$ is applied on the measurement outcome.
\end{definition}

\section{Non-Interactive Proofs in the BQSM}
\label{sec:nizk-ot}

We present a generic transform to turn arbitrary $\Sigma$--protocols with small
challenge space to non-interactive proofs. We actually consider a slight
generalization of $\Sigma$--protocols where the first message send by the prover
can be a quantum state, while the challenge by the verifier should be uniformly
random bits and the third message by the prover is classical. \textcite{BG22} called
this type of protocols as $\Xi$--protocols (Definition~\ref{def:xi-protocol}), and we will use their notation to
stress that the first message can be quantum.

The soundness of our transform does not rely on any setup assumption.
We will show later that while we cannot show zero-knowledge for such a transform,
we can prove some weaker notions. We notice that  since we are working in the
bounded storage model, we consider $\Xi$ protocols where an honest verifier
measures the qubits of the first message as they arrive based on the chosen
challenge.

\begin{quote}
  \rule{\linewidth}{1pt}
  \begin{center}
    {\bf Protocol} $\nizkot[\Pi]$ for a $\Xi$--protocol $\Pi$
  \end{center}

  {\bf Prerequisite: } A $3$--message, $1$-bit public coin, interactive proof
  $\Xi=(\prover_1,\prover_2,\verifier)$.

  \vspace{1em}
  {\bf Prover: } 
  \begin{enumerate}
  \item For $i\in[k]$,
    \begin{enumerate}
    \item Prepare the $n$--qubit BB84 state $\ket {x_i}_{\theta_i}$
    \item Compute $\ket{\phi_i}\leftarrow \prover_1$ with fresh randomness each
      time
    \item Compute responses $r_i^c$ using $\prover_2$ for $c\in \{0,1\}$
    \item Sample two universal$_2$ hash functions $h_i^0$ and $h_i^1$
    \item Compute $m_i^0=r_i^0\oplus h_i^0(x^+_i)$ and $m_i^1=r_i^1\oplus
      h_i^1(x^\times_i)$
    \end{enumerate}
  \item Send $\bigotimes_i \ket{\phi_i}\ket{x_i}_{\theta_i}$ to the verifier
  \end{enumerate}    
  \begin{center}
      \emph{--- Memory bound applies ---}
    \end{center}
    \begin{enumerate}[resume]
  \item Send
    $\bigotimes_i\ket{\theta_i,h_i^0,h_i^1,m_i^{0},m_i^{1}}$
    to the verifier
  \end{enumerate}

  \vspace{1em}
  {\bf Verifier: } 
  \begin{enumerate}[resume]
  \item Pick a $k$ random selection bits $c_1,\dots c_k$
  \item For $i\in[k]$,
    \begin{enumerate}
    \item Measure $\ket{x_i}_{\theta_i}$ on basis $c_i$ getting $x'_i$
    \item Measure $\ket{\phi_i}$ according to $\verifier$ to get an outcome
      $a_i$
  \end{enumerate}
  \begin{enumerate}[resume]
  \item Compute $x^{c_i}_i$ from $\theta_i$ and $x'_i$; and compute $r^{c_i}$
    from $x^{c_i}_i$, $h_i^{c_i}$ and $m^{c_i}_i$
  \item Check that for all $i\in [k]$ $\verifier(a_i,c_i,r_i^{c_i})=1$, otherwise abort
  \end{enumerate}

  \end{enumerate}

  \rule{\linewidth}{1pt}
\end{quote}

The soundness of the protocol follows from the fact the prover is oblivious to
which response the verifier has learned. Since BQS-OT is secure against
unbounded senders, the soundness of $\nizkot[\Pi]$ is unconditionally
reducible to the soundness of $\Pi$.

We notice that this technique can be used to compress $\log n$ rounds protocols
with $\log n$ bit challenges by using $\poly$ instances of OT. Let's say for
simplicity that we have a $k$ rounds protocol with $1$ bit challenges with $k\in
O(\log n)$, and have access to a $\binom{2^k}{1}$--OT. Then for each of the
$2^k$ inputs $0\leq j< 2^k$, the prover sends the transcript it would produce if
$j$'s bits were the challenges.  We can extend this to $m=O(\log n)$ bit
challenges by considering $j\in \bool^{2^{m+k}}$.

\begin{theorem}
  Let $\Pi$ be a 1-bit challenge $\Xi$--protocol with soundness $\frac
  12$ against quantum adversaries. Then $\nizkot[\Pi]$ is a sound quantum
  non-interactive proof with soundness error $\frac 1{2^k}$.
\end{theorem}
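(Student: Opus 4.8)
The plan is to reduce soundness of $\nizkot[\Pi]$ to the soundness of the underlying $\Xi$--protocol $\Pi$, using the security of the parallel BQS-OT (Corollary~\ref{thm:bqs-ot}) applied against the honest verifier, who plays the role of the OT receiver. Fix $x\notin L$ and a malicious prover $\tilde\prover$ for $\nizkot[\Pi]$; recall that $\tilde\prover$ is \emph{not} memory-bounded here — it is the verifier whose memory is bounded — but the OT security theorem we invoke is a statement about the \emph{receiver}, which is exactly the honest verifier, so this is fine. For each repetition $i\in[k]$, the verifier picks a uniform selection bit $c_i$ and, by correctness of the OT, recovers exactly one of the two responses $r_i^{0}, r_i^{1}$ that $\tilde\prover$ encoded.

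First I would set up the purified/EPR picture (as in the proof of Corollary~\ref{thm:bqs-ot}): consider the state $\tilde\prover$ sends, comprising the $k$ first-message registers $\phi_i$ together with the BB84 registers, followed by the classical message $(\theta_i, h_i^0, h_i^1, m_i^0, m_i^1)_i$. From the verifier's side this is exactly the receiver's view in $k$ parallel executions of $\ell$--bit \dfrssot{} (with $\ell$ the response length), so Corollary~\ref{thm:bqs-ot} gives random variables $\vec C = C^1,\dots,C^k$ such that the registers $\vec M_{\neg\vec C}$ encoding the ``unselected'' responses are within trace distance $k\cdot 2^{-n/4+\ell+q}$ — which I will take to be negligible by choosing $n$ large enough relative to $\ell$ and the (here nonexistent / zero) memory bound on the verifier — of being uniform and independent of everything else. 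The point is that, conditioned on $\vec C$, the responses the verifier will check, namely $r_i^{C^i}$, together with the first messages $a_i$ (obtained by measuring $\phi_i$ per $\verifier$), are independent of the challenges $c_i$ the verifier then samples; equivalently, one can think of $\tilde\prover$ as having committed to a first message $a_i$ and a single response function that only depends on $C^i$, not on the verifier's freshly sampled $c_i$.

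Next I would build, from $\tilde\prover$, a malicious prover $\tilde\prover'$ against a single execution of $\Pi$ as follows: $\tilde\prover'$ internally runs $\tilde\prover$, extracts (from the purified picture) the $k$ pairs $(a_i, r_i^{0}, r_i^{1})$, picks a uniformly random index $i^\ast\in[k]$, sends $a_{i^\ast}$ as its first message, receives the external challenge $c\in\bool$, and responds with $r_{i^\ast}^{c}$. By the independence established above — the unselected response is uniform/independent so the verifier's check on coordinate $i$ is essentially a fresh challenge to a pre-committed $(a_i,r_i^0,r_i^1)$ triple — the probability that the honest $\nizkot$-verifier accepts on every coordinate is at most the product over $i$ of the probability that $\verifier(a_i,c_i,r_i^{c_i})=1$ for a fresh uniform $c_i$, up to the negligible OT error. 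Each such factor is at most the soundness error of $\Pi$, namely $\tfrac12$, since $x\notin L$ and $(a_i,r_i^0,r_i^1)$ is fixed before $c_i$ is chosen; hence $\Pr[\tilde\prover \leftrightharpoons \verifier = 1] \le 2^{-k} + \negl[\lambda]$. (Stating the theorem with clean error $2^{-k}$ presumes $n$ is chosen so the OT term vanishes; I would either absorb it into a $\negl$ slack or note the precise $n$ requirement.)

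The main obstacle, and the step that needs the most care, is making precise the claim that ``the verifier's acceptance on coordinate $i$ behaves like a fresh soundness challenge to a committed transcript.'' The subtlety is twofold: (1) the first messages $a_i$ are obtained by the verifier \emph{measuring} $\phi_i$ according to $\verifier$, which itself depends on $c_i$ in a prepare-and-measure $\Xi$--protocol — so I must argue that for a $1$-bit challenge the relevant measurement outcome / predicate structure still lets us treat $(a_i,c_i,r_i^{c_i})$ as a valid transcript of $\Pi$, invoking that $\Pi$ has soundness $\tfrac12$ against quantum strategies even when the first message is quantum; and (2) the $k$ coordinates are not perfectly independent — the OT corollary only guarantees approximate, simultaneous decoupling of the $\vec M_{\neg\vec C}$ registers — so turning ``the unselected responses are jointly near-uniform'' into ``acceptance probability $\le \prod_i \frac12$'' requires a careful hybrid argument replacing the unselected response in each coordinate one at a time, paying the $k\cdot 2^{-n/4+\ell+q}$ error once. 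I expect both points to be handled by a standard hybrid over the $k$ coordinates combined with the single-shot soundness of $\Pi$ applied through the extracted prover $\tilde\prover'$, with the OT corollary supplying exactly the decoupling needed at each hybrid step.
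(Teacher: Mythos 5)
Your approach takes a genuinely different route from the paper's, and it has a core conceptual gap. The paper's proof is a direct, \emph{lossless} reduction: from a malicious prover $\adv$ against $\nizkot[\Pi]$, it constructs a reduction $\rdv$ against the $k$-fold parallel repetition $\Pi^k$. Crucially, $\rdv$ is \emph{not} memory-bounded (soundness of $\nizkot$ places no memory bound on anyone relevant to this reduction), so $\rdv$ simply stores the quantum registers $A_iX_i$, waits for $\adv$'s classical message to learn $\theta_i$, then measures $X_i$ in basis $\theta_i$ to recover \emph{both} responses $r_i^0$ and $r_i^1$. It then forwards $\rho_{A_1\dots A_k}$ as the first message of $\Pi^k$, and upon receiving challenges $c_1,\dots,c_k$, replies with $r_i^{c_i}$. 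Since the tuples $(a_i,c_i,r_i^{c_i})$ are identically distributed in $\nizkot[\Pi]$ (where the honest verifier measures $X_i$ in basis $c_i$ and only uses positions where $\theta_i=c_i$) and in $\Pi^k$ against $\rdv$, the acceptance probabilities coincide exactly. There is no OT error term anywhere; soundness of $\nizkot[\Pi]$ reduces cleanly to soundness of $\Pi^k$.

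The gap in your proposal is the appeal to Corollary~\ref{thm:bqs-ot} (OT security against a $q$-bounded receiver). That corollary is a statement about what a \emph{malicious, bounded} OT receiver cannot learn — it is used in the paper only for witness indistinguishability, where the \emph{verifier} is adversarial. For soundness, the verifier is honest and the prover is malicious, so OT security is simply not the right tool: the verifier is not trying to learn the unselected response, and the prover already knows both responses it sent. What you actually need — and what you more or less re-derive by hand under the guise of an ``independence'' argument — is just OT \emph{correctness} (the honest receiver recovers $r_i^{c_i}$ for its chosen $c_i$), plus the structural fact that the prover's message is fixed before the challenges are sampled. Invoking the OT security bound introduces an error term $k\cdot 2^{-n/4+\ell+q}$ that should not be there and also conflates the analysis-only random variable $C^i$ (which depends on the receiver's strategy) with the honest verifier's actual coin $c_i$. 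Finally, your reduction targets a \emph{single} execution of $\Pi$ via a random coordinate $i^*$, which at best yields a per-coordinate bound and then requires a separate (and, as you acknowledge, delicate) independence-across-coordinates argument to reach $2^{-k}$; the paper sidesteps this entirely by reducing to $\Pi^k$ and appealing to parallel-repetition soundness of the underlying public-coin proof system.
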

\begin{proof}
  Let $\adv$ be a malicious prover against $\nizkot[\Pi]$. We construct a
  reduction $\rdv$ against (the $k$-wise parallel repetition of) $\Pi$ that
  has the same success probability as $\adv$. The reduction simulates the OT
  instances while being able to recover both messages sent by $\adv$ by having a
  sufficiently large quantum memory. Thus we reduce to the soundness of $\Pi$
  against quantum adversaries.

  When $\adv$ sends the quantum state $\rho_{A_1X_1...A_kX_k}$ in the first step, where $A_i$ is the
  register that is supposed to have the state  $\ket{\phi_i}$ and $X_i$ is the
  register supposed to have $\ket{x_i}_{\theta_i}$, \rdv\ stores the
  qubits.  When $\adv$ sends the classical message
  $(\theta_i,h_i^0,h_i^1,m_i^{0},m_i^{1},a_i)_{i\in [k]}$, $\rdv$ measures
  the register $X_i$ in basis $\theta_i$ and compute 
the responses to each possible challenge $r^{0}_i$ and $r^{1}_i$.

  Now $\rdv$ acts as the sender in protocol $\Pi^k$. It sends  the state
  $\rho_{A_1...A_k}$ as the first message of the
  $\Pi$ protocol. Upon reception of the
  challenges $c_1,\dots,c_k\in\bool$ from the verifier, $\rdv$ replies with
  $r_1^{c_1},\dots r_k^{c_k}$. 

  It remains to argue that the verifier accepts in
  $\Pi^k$ against $\rdv$ with the same probability that the verifier accepts
  in $\nizkot[\Pi]$ against $\adv$. This follows from the observation that
  $a_i$, $c_i$ and $r^{c_i}_i$ are identically distributed in both cases.
  Therefore, the success probability of $\rdv$ against $\Pi^k$ is exactly
  that of $\adv$ against $\nizkot[\Pi]$. \qed
\end{proof}

\begin{remark}
  \label{remark:oblivious-soundness} 
  We notice that the soundness of $\nizkot[\Pi]$ actually follows
  from a weaker notion of soundness that we call {\em oblivious soundness},
  which intuitively says that the Prover cannot simultaneously answer the
  two challenges. More
  concretely, $\nizkot[\Pi]$ is sound if $\Pi$ has the following property: for
  any no instance $x\notin L$ and first message $\rho$, no prover can create, at the
  same time, a valid answer for $c = 0$ {\em and} a valid answer for $c = 1$.
  More concretely,  for all possible values
  $(r_0,r_1)$
  \begin{equation}
    \label{eq:oblivious-soundness}
    \sup_{M}  \sum_{b\in\bool} \sum_{r_0,r_1}
    \trace{(M_{r_0,r_1}\otimes V(x,b,r_b)) \rho_{AB}}
    \leq 1+ \negl.
  \end{equation}
  where $M_{r_0,r_1}$ consists of a measurement made by the prover to answer
  $r_0$ to the first challenge and $r_1$ to the second challenge.
  While this property is implied by standard soundness, we will
  see a protocol later in this section that only
  satisfy oblivious soundness.
\end{remark}

\subsection{Security Against Malicious Verifier}
\label{sec:hvzk-wh-wi}

We now turn to the security against malicious verifiers of $\nizkot[\Pi]$. A
verifier with an arbitrarily large quantum memory may postpone its measurement
and learn both transcripts of $\Pi$. If for example $\Pi$ is
special-sound, it would allow them to recover an \npol{} witness. Thus, we focus the
security against bounded quantum storage verifiers. The question remains as to exactly
what properties can be proven in this setting. 

In this section, we give evidence that proving zero-knowledge for
$\nizkot[\Pi]$ (or variations of it) might be out of
reach for non-interactive proofs in the \bqsm.  However, we show that this
protocol preserves some properties of the $\Pi$ such as Witness
Indistinguishability and Witness Hiding properties.

\subsubsection{Impossibility of ``Black-Box'' Non-Interactive Zero-Knowledge in
  the BQSM.}
\label{sec:impo-bbzk}

In order to achieve zero-knowledge, one would need to construct a simulator that can
produce an output that is indistinguishable from the output in the real protocol.
For that, the simulator should have minimal access to the verifier's state and
be able to run its program. We show here that only looking at the state of the
verifier after the memory bound is not sufficient to prove zero-knowledge. To
overcome such an impossibility, we would need a ``white-box'' simulator that
take advantage from the {\em code} of the verifier.

We notice that we will show the impossibility result for $\Sigma$ protocols
(i.e. the first message is classical), and that the verifier does not have
access to quantum auxiliary input. These two cases usually makes proving quantum
zero-knowledge much simpler, making our no-go result stronger.

We define an adversarial verifier strategy as a pair of unitaries $V=(V_1,V_2)$
where $V_1$ maps $\ket x_\theta$ and an auxiliary register initialized in state
$\ket 0$ (and potentially an auxiliary quantum input) to registers $E$ and $Z$
where $\dim E\leq 2^q$ and the register $Z$ is measured in the computational
basis when the memory bound applies. The unitary $V_2$ acts on registers $EZ$
and a register $T$ containing the prover's classical transmission, and produces
the verifier's output.

We define a special type of black-box simulator for the $\nizkot$ scheme, which
we call ``BQS-BB'', as a \qpt\ algorithm $\simulator$ that has black-box access
to the unitaries $V_1,V_1^*,V_2$ and $V_2^*$. In particular, the simulator is
allowed to look at the state of the verifier after the memory bound, and can
even purify the verifier's action (i.e.\ without the measurement on $Z$). We
show that this simulation technique cannot be used to prove zero-knowledge.
Intuitively, the reason why simulation is impossible in this setting is that the
simulator cannot (efficiently) retrieve which challenge the verifier could have
information about. This prevents, for instance, 
the simulator to apply the rewinding technique.

This impossibility holds regardless of whether or not the verifier receives an
auxiliary input. 

\begin{lemma}\label{lem:no-zk}
  Let $\Pi$ be an arbitrary 1--bit challenge special-sound $\Sigma$--protocol
  for a hard language $L$ and let $V=(V_1,V_2)$ be an adversarial verifier
  strategy. The non-interactive proof $\nizkot[\Pi]$ is not zero-knowledge
  with BQS-BB simulation.
\end{lemma}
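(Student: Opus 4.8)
The plan is to exhibit a concrete adversarial verifier $V = (V_1, V_2)$ together with a distinguisher such that no BQS-BB simulator can reproduce the verifier's output, and the only way to do so would contradict the hardness of $L$. The core obstruction I want to formalize is the one hinted at in the text: the verifier, before the memory bound, measures each OT quantum state $\ket{x_i}_{\theta_i}$ in a basis of its own choosing, but then the only thing it retains in register $E$ is (say) the decoded response $r_i^{c_i}$ for an adaptively/randomly chosen challenge $c_i$ that it does \emph{not} write anywhere accessible. A simulator with black-box access to $V_1, V_1^*, V_2, V_2^*$ can look at the post-memory-bound state, but it cannot tell which of the two responses $r_i^0, r_i^1$ the verifier managed to recover, because that information was consumed inside $V_1$. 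So it cannot set up the state from which a real prover's message would produce an accepting (and hence, by special-soundness-style extraction, witness-revealing) view.

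\medskip
\noindent\textbf{The verifier strategy.} Concretely, I would have $V_1$ act on each block $i$ independently as follows: sample a uniformly random $c_i \in \bool$, measure $\ket{x_i}_{\theta_i}$ in basis $[+,\times]_{c_i}$ to get $x_i'$, and store \emph{only} the string $x_i'$ (restricted appropriately) in $E$; everything else — including $c_i$ itself — is discarded, i.e.\ $Z$ is set to $\ket 0$. Note $\dim E$ is about $n/2$ qubits per block, well within the memory bound $q$ if we set parameters so that $q \geq kn/2$ (the BQS-OT security of Theorem~\ref{thm:bqs-ot} still needs $n/4 - \ell - q$-type slack \emph{for the honest receiver's privacy guarantee}, but here $V$ is malicious so there is no tension — the point is just that this $V$ is a legal BQS verifier). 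Then $V_2$, given the classical transmission $(\theta_i, h_i^0, h_i^1, m_i^0, m_i^1, a_i)_i$, uses $x_i'$ and $\theta_i$ to recompute $r_i^{c_i} = m_i^{c_i} \oplus h_i^{c_i}(x_i^{c_i})$ for whichever $c_i$ is consistent with what it measured, checks $\verifier(a_i, c_i, r_i^{c_i}) = 1$, and outputs the tuple $(a_i, c_i, r_i^{c_i})_i$ together with an accept bit.

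\medskip
\noindent\textbf{Why a BQS-BB simulator fails.} The real-protocol output is (with overwhelming probability when $x \in L$) a full accepting transcript $(a, c, r^c)$ of $\Pi^k$ for a \emph{random} $c$; and since $\Pi$ is special-sound, an accepting transcript carries nearly a witness — more precisely, from the real interaction one can, by running the honest prover against a \emph{second}, independently re-randomized verifier and rewinding, obtain two accepting transcripts $(a, 0, r^0)$ and $(a, 1, r^1)$ on the same first message $a$, hence extract a witness for $x$. A BQS-BB simulator $\simulator$, by contrast, only gets black-box calls to $V_1, V_1^*, V_2, V_2^*$ and sees the post-bound state of $E$ (possibly purified). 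I would argue that from this interface $\simulator$ can, for each block, recover \emph{at most one} of $\{r_i^0, r_i^1\}$: to produce an accepting transcript it must feed $V_2$ a classical message $(\theta_i, \dots)$ consistent with the $x_i'$ already sitting in $E$, and $x_i'$ only determines $x_i^{c_i}$ for the particular $c_i$ that $V_1$ sampled — a value $\simulator$ doesn't know — so it has a constant probability of guessing $c_i$ wrong and failing in block $i$, and it has no oracle move that reveals both $r_i^0$ and $r_i^1$ simultaneously (that would require measuring $\ket{x_i}_{\theta_i}$ in both bases, which $V_1$ did not do and $\simulator$ cannot redo since the state is gone after the bound). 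Therefore, if $\simulator$'s output were indistinguishable from the real one, one could run $\simulator$ to get one accepting transcript and, because $\simulator$ is \qpt\ and stateless-rewindable (it is an ordinary quantum algorithm), run it again from scratch with the verifier's internal randomness re-sampled to get a second accepting transcript on a re-randomized instance — but the cleaner route is: indistinguishability from the real channel means $\simulator$ essentially produces accepting transcripts of $\Pi^k$, and composing $\simulator$ with the standard special-soundness extractor (using $\simulator$'s own rewindability on its \emph{internal} coins rather than on $V$) yields a \qpt\ witness-finder for $L$, contradicting hardness. I would phrase the final contradiction as: $\simulator$ would constitute a \qpt\ algorithm that, on input $x$, outputs an accepting $\Pi^k$-transcript for a random challenge; two independent runs give challenges that differ in some block with probability $1 - 2^{-k}$; by special soundness this extracts a witness; hence $L$ is not hard.

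\medskip
\noindent\textbf{Main obstacle.} The delicate point is making precise the claim ``a BQS-BB simulator sees nothing that lets it win both challenges.'' The simulator is allowed to purify $V_1$ and to query $V_1^*$, so one must rule out it rewinding \emph{the verifier} to reconstruct $\ket{x_i}_{\theta_i}$ and then re-measuring in the other basis. The key is that the memory bound is applied between $V_1$ and $V_2$: the simulator's ``black-box access after the memory bound'' means it holds only the $q$-qubit register $E$ (plus classical side info), not the pre-bound state, so a rewinding that would recover the discarded BB84 state is precisely what the model forbids — this is where the hypothesis that simulation ``only looks at the state after the memory bound'' does the work, and I would make sure the formal definition of BQS-BB in the lemma's setup encodes exactly that the simulator's handle on $V_1$ is post-bound. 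A secondary technical nuisance is quantifying the ``at most one response per block'' statement against a simulator that may behave coherently across blocks and across the two messages; I expect a hybrid/averaging argument over the random $c_i$'s, together with the fact that $h_i^0, h_i^1$ are fresh universal hashes so that knowing $r_i^{c_i}$ gives no handle on $r_i^{1-c_i}$, suffices, but that is the part that needs the most care.
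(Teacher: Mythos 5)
Your core intuition --- that the malicious verifier should retain only one response $r_i^{c_i}$ without leaving the challenge $c_i$ visible to the simulator, forcing the simulator to either guess $c_i$ or produce both responses (contradicting hardness via special soundness) --- matches the high-level goal of the paper's proof. However, the specific verifier you construct does not actually achieve the hiding you claim, and this is where your argument breaks.

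You have $V_1$ ``sample a random $c_i$, measure, store $x_i'$ in $E$, discard $c_i$ (set $Z = \ket 0$).'' But in the BQS-BB model, $V_1$ is a \emph{unitary} (or isometry), and the simulator has black-box access to $V_1$ and $V_1^*$, including the purified version with no measurement on $Z$. A unitary cannot discard information: the randomness $c_i$, once used as a measurement basis, must appear either in the classical register $Z$ (which the simulator reads after the memory bound) or in the quantum register $E$ (also visible to the simulator), or else be carried by the purification (which the simulator can probe via $V_1^*$). In no case is $c_i$ actually inaccessible. Consequently the simulator can read off $c_i$, run the HVZK simulator for $\Pi$ on challenge $c_i$ to get an accepting transcript $(a_i, c_i, r_i^{c_i})$, set $m_i^{c_i}$ consistently and $m_i^{1-c_i}$ random, and feed this to $V_2$ --- reproducing the real output distribution. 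Your obstruction therefore evaporates: your $V$ \emph{is} BQS-BB simulatable, and the lemma would be vacuously unproved by it.

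The missing idea, and what the paper does instead, is to make $c_i$ \emph{computationally} inaccessible to the simulator while keeping it recoverable by a colluding distinguisher. The paper's verifier $V_1^k$ coherently measures in a superposed basis and then \emph{encrypts} its entire classical memory in place under a key $k$; $V_2^k$ simply outputs this encrypted register plus the prover's message. The distinguisher $\dist_k$ knows $k$, decrypts, and checks whether the transcript corresponding to the decrypted challenge accepts. Semantic security of the encryption scheme then formalizes exactly the intuition you wanted: any BQS-BB simulator behaves (up to $\negl$) as if it ignored the encrypted $Z$, hence cannot condition its output on $c$. The paper then gets the contradiction directly on the distinguisher's acceptance probability --- $1$ in the real execution vs.\ $\le \frac 12 + \negl$ in the simulated one (unless the simulator outputs two accepting transcripts, which by special soundness and hardness of $L$ is negligible) --- rather than via the rewinding-of-the-simulator argument you sketch, which would require more care to make rigorous. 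If you add the in-place encryption step to your $V_1$ and pair it with a keyed distinguisher, your argument lands back on the paper's proof.
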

\begin{proof}
  We assume that we are running $\nizkot[\Pi]$ on a single instance of $\Pi$,
  i.e.\ with $k=1$. Let $(\enc,\dec)$ be a symmetric encryption scheme with
  semantic security against quantum
  adversaries~\cite{alagicComputationalSecurityQuantum2016}. We consider a
  family of malicious verifiers that collude with the distinguisher in order to
  thwart any simulation attempt. Let $\{(\dist^k,V^k = (V_1^k, V_2^k))\}_{k\in\bool^\lambda}$
   be
  described as follows.
 
  The isometry $V_1^k$ does the following in a purified manner. 
  \begin{enumerate}
  \item Initialize register $Z=(\Theta,X,P)$ in state $\ket 0_Z$.
  \item Upon reception of an $n$-qubit state $\ket \Psi$, move it to register
    $P$.
  \item Apply $H$ to register $C$ to obtain a uniform
    superposition over $\bool$.
  \item Measure each qubit of register $P$ coherently in basis $C$
    to get $x$, i.e. applied the controlled (by $C$) unitary $\ket
    \psi\mapsto \sum_x\proj x (H^C)^{\otimes n}\ket\psi\ket x$.
  \item Encrypt all registers \emph{in place} using the unitary $\ket
    {m}\mapsto\ket {\enc_k(m)}$ (which is possible since $\enc_k$ is a
    permutation).
\item The state of register $Z$ is now
    \begin{equation}
      \frac 1{\sqrt 2}\sum_{c,x} \bra{x} (H^c)^{\otimes n} \ket \psi \cdot \ket{\enc_k(\theta,x,x)}_Z
    \end{equation}
  \end{enumerate}
  We set the isometry $V_2^k$ as the identity, i.e.\ it just outputs everything
  it receives: the classical memory register $Z$ from $V_1^k$ and the classical
  message $M$ from $\prover$.

  The distinguisher $\dist_k$ receives register $Z$ containing the encryption
  (under key $k$) of the verifier's classical memory and a register $M$
  containing the prover message in protocol $\nizkot[\Pi]$. It decrypts $z$
  to get the verifier's measurement basis $c$ and outcome $x$, uses it to
  recovers one of the two transcripts contained in $M$ and outputs $1$ if the
  transcript obtained is accepting and $0$ otherwise. 
  
  We now argue that it is impossible to simulate such a verifier efficiently.
  First, we notice that in a real execution (where $\verifier$ interacts with $\prover$),
  $\dist_k$ always outputs $1$ assuming perfect correctness of $\Pi$. In a
  simulated execution, we can use the fact that the language is hard (so that
  \simulator\ cannot produce two accepting transcripts) and that the verifier's
  memory is encrypted (so that \simulator\ cannot guess the verifier's
  challenge) to show that the distinguisher outputs $0$ with probability close
  to $\frac 12$.

  By the semantic security of
  $\enc$~\cite{alagicComputationalSecurityQuantum2016}, for any $\qpt$
  $\simulator$ there exists a \qpt\ simulator $\simulator'$ that, whenever
  $\simulator$ calls $V_1^k$ and receives register $Z$
which contains the encryption of the memory of an honest verifier,
   $\simulator'$ ignores register $Z$, but is still able to
  produce an output indistinguishable from $\simulator$.

  \begin{align*}
    &\left| \Pr[\dist_k(\langle \prover(x,w) \rightarrow \verifier(x) \rangle)=1]
      - \Pr[\dist_k(\simulator(x))=1] \right| \\
    &= 1-\Pr[\dist_k(\simulator(x))=1]\\
    &\leq 1- \Pr[\dist_k(\simulator'(x))=1] + \| \simulator'(x)-\simulator(x)\|\\
    &\leq 1- \frac 12 + \Pr[(x,w')\in R_L\mid w'\leftarrow \simulator'(x)] + \negl
  \end{align*}
  In the last inequality, we used the special soundness of $\Pi$ which says
  that producing two accepting transcript for the same commitment $a$ is as hard
  as producing a witness for $x\in L$. By the hardness of $L$, the probability
  of this happening is negligible, and if the output of $\simulator'$ contains
  only one accepting transcripts, then $\dist_k$ outputs $0$ with probability
  $\frac 12$.
  \qed
\end{proof}

Lemma~\ref{lem:no-zk} indicates that techniques restricted to evaluating $V$ and
$V^*$ will not suffice for proving zero-knowledge of $\nizkot[\Pi]$. White-box
techniques exist for ``looking inside'' the verifier to infer the index $\bar c$
of the OT message on which it has uncertainty. See Section~\ref{sec:related} for
examples. These techniques rely on computing the exact probability distributions
induced by the adversary's actions and inferring the random variable $\bar C$
whose existence is established by the min-entropy splitting lemma
(Lemma~\ref{lem:splitting}). Extraction is therefore inefficient, which makes
these results inapplicable in the context of zero-knowledge.

Nevertheless, it would be surprising if the verifier could learn anything from
the non-interactive proof that it could not learn in the $\Sigma$--protocol. The
security of BQS-OT ensures that the response to one of the two possible
challenges is hidden \emph{information theoretically}. The impossibility of
zero-knowledge appears to be more due to a lack of ways in which the simulator
can ``cheat'' than to an actual leakage of information. We can therefore show
that other security properties against malicious verifiers -- e.g.\ witness
hiding and witness indistinguishability -- are preserved by our transformation.

\subsubsection{Honest Verifier Zero-Knowledge.}
It is trivial to show that a simulator able to read the honest verifier's memory
after the memory bound is able to produce a valid proof. The simulator acts as
both the prover and the honest verifier: for each $i\in[k]$, it prepares the
states $\ket {x_i}_{\theta_i}$, picks a bit $c_i\in\bool$ at random and
measures the state in basis $c_i$. After the measurements with outcomes
$x'_1,\dots,x'_k$, the simulator uses the HVZK simulator for $\Sigma$ on input
$c_i$ to produce a valid transcript $(a_i,c_i,r^{c_i}_i)$. For the classical
prover message, the simulator chooses $h^0_i,h^1_i$ at random and sets
$m_i^{c_i}=r^{c_i}_i\oplus h^{c_i}_i(x^{c_i}_i)$ and $m_i^{1-c_i}$ uniformly
random. The simulator runs $V$ on the message
$(\theta_i,m_i^0,m_i^1,a_i,h_i^0,h_i^1)_{i\in [k]}$ outputs whatever $V$
outputs.

\subsubsection{Witness Indistinguishability.}

Witness indistinguishability was introduced
in~\cite{feige_witness_1990} as a relaxation of zero-knowledge. We adapt the
definition to quantum proof systems.

\begin{definition}[Witness Indistinguishability] Let $R$ be an \npol{} relation and
  let $\Pi$ be a quantum proof system for $R$. We say that $\Pi$ is
  computationally (resp.\ statistically) \emph{witness indistinguishable}
  (BQS-WI) if for any $V'$, for any instance $x$ and witnesses $w_1,w_2$, and
  any auxiliary input $y$, the quantum states
  \begin{equation*}
    \langle P(x,w_1), V'(x,y) \rangle \text{ and } \langle P(x,w_2), V'(x,y) \rangle
  \end{equation*}
  are computationally (resp.\ statistically) trace-indistinguishable. We say
  $\Pi$ is WI in the BQSM (BQS-WI) if indistinguishability holds for any
  $q$--bounded $V'$.
\end{definition}

\begin{theorem}
  If $\Pi$ is a (computational/statistical) witness indistinguishable proof
  system, then $\nizkot[\Pi]$ is (computational/statistical) witness
  indistinguishable in the BQSM.
\end{theorem}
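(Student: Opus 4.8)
The plan is to reduce the witness‑indistinguishability of $\nizkot[\Pi]$ to that of $\Pi$, using the security of BQS‑OT (Corollary~\ref{thm:bqs-ot}) to argue that a $q$--bounded malicious verifier $V'$ effectively receives, for each $i\in[k]$, only one of the two responses $r_i^0, r_i^1$. Fix an instance $x$, two witnesses $w_1, w_2$, and an auxiliary input $y$; let $V'$ be an arbitrary $q$--bounded verifier. The goal is to show that the output state of $V'$ when interacting with $P(x,w_1)$ is (statistically or computationally) indistinguishable from its output when interacting with $P(x,w_2)$.

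**The first step** is to replace the prover's classical message by an idealized one. After the memory bound, the state of $V'$ lives in registers $\vec M_0 \vec M_1 \vec H_0 \vec H_1 E$ together with whatever classical side information it kept, and $\dim E \le 2^q$. By Corollary~\ref{thm:bqs-ot}, there are random variables $\vec C = C^1,\dots,C^k$ such that $\rho_{\vec M_{\neg\vec C}\vec M_{\vec C}\vec C\vec H_0\vec H_1 E}$ is within $k\cdot 2^{-n/4+\ell+q}$ in trace distance of the state where each register $M^i_{1-C^i}$ has been replaced by the completely mixed state (and with the parameters chosen so this is negligible). Crucially, the distribution of $\vec C$ depends only on the prover's choice of $\theta_i$'s and the purified measurement outcomes, \emph{not} on the witness. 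So in this idealized hybrid, the only witness‑dependent data the verifier can see is $\{(a_i, r_i^{C^i})\}_{i\in[k]}$ together with $m_i^{C^i} = r_i^{C^i}\oplus h_i^{C^i}(x^{C^i}_i)$; the other message $m_i^{1-C^i}$ is uniformly random and independent of everything, hence carries no information about $w$.

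**The second step** is to observe that $\{(a_i, C^i, r_i^{C^i})\}_{i\in[k]}$ is exactly a transcript of the $k$--fold parallel repetition $\Pi^k$ with challenge vector $\vec C$, produced honestly from the witness $w_b$. Thus the idealized hybrid view of $V'$ is a (efficiently computable, given the transcript and the basis/outcome data) post‑processing of a $\Pi^k$ transcript. Since $\Pi$ is witness indistinguishable, so is $\Pi^k$ (WI composes under parallel repetition against arbitrary verifiers — this is where the "any $V'$" in the WI definition is used, choosing the reduction's internal verifier to be the one that plays the honest challenges $\vec C$ on all copies), so the $\Pi^k$ transcript for $w_1$ is indistinguishable from that for $w_2$; applying the same post‑processing preserves indistinguishability. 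Chaining: $\langle P(x,w_1), V'(x,y)\rangle \approx \text{(hybrid with }w_1) \approx \text{(hybrid with }w_2) \approx \langle P(x,w_2), V'(x,y)\rangle$, where the outer $\approx$'s cost $k\cdot 2^{-n/4+\ell+q} = \negl$ each by BQS‑OT security and the middle $\approx$ is the WI of $\Pi^k$. In the statistical case $V'$ and the distinguisher are unbounded and the middle step uses statistical WI; in the computational case everything is $\qpt$ and we invoke computational WI.

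**The main obstacle** is handling the fact that the random variable $\vec C$ is only guaranteed to \emph{exist} (via the min‑entropy splitting lemma inside Corollary~\ref{thm:bqs-ot}) and need not be efficiently computable: this is exactly the issue flagged in Section~\ref{sec:impo-bbzk} that kills zero‑knowledge. The point is that for WI it does not matter — the reduction never needs to know $\vec C$; it only needs $\vec C$ to exist in order to define the two hybrids and to note that each hybrid is some fixed channel applied to a $\Pi^k$ transcript, and then invoke WI of $\Pi^k$ as a black box. A secondary technical point to be careful about is that the purified‑EPR rephrasing of the OT quantum step must commute with the prover's (honest) choice of $\ket\phi_i \leftarrow \prover_1$ and responses $r_i^c \leftarrow \prover_2$, so that switching to the purified variant changes nothing in the verifier's view; this is routine since those computations act on registers disjoint from the BB84 qubits. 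One should also verify that parallel repetition of $\Pi$ genuinely preserves WI (rather than merely soundness), which follows by a standard hybrid over the $k$ copies, swapping the witness one copy at a time.
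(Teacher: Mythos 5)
Your proof is correct and takes essentially the same approach as the paper: apply BQS-OT security to hybridize to a state where each $M^i_{1-C^i}$ is replaced by uniform (for a witness-independent $\vec C$), then reduce to the WI of $\Pi$ via convexity over $c$. Your explicit handling of the subtleties — that $\vec C$ is determined by $P_{X^i\mid\Theta^i}$ alone and hence independent of the witness (a point the paper establishes inside the proof of the parallel-repetition corollary but leaves implicit in the WI proof itself), that $\vec C$ need not be efficiently computable for a WI argument, and that WI must compose under parallel repetition — is somewhat more careful than the paper's terse version, but the decomposition and the key lemma invoked are the same.
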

\begin{proof}
  Let $w,w'$ be two witnesses for $x\in L$. Let $\rho_{TZE}$ be the state of the
  $q$--bounded verifier after interacting with $\prover(x,w)$ where $E$ is the
  $q$-qubit quantum memory of $\verifier$, $Z$ is its classical partial
  measurement outcome and
  $T=(\Theta^{(i)},H_0^{(i)},H_1^{(i)},M_0^{(i)},M_1^{(i)},A^{(i)})$ is the
  classical register sent by the prover. Let $\sigma_{TZE}$ be the state where
  the prover uses the witness $w'$ instead.

  By the security of BQS-OT (Theorem~\ref{thm:bqs-ot}), for each $i$ there
  exists a random variable $C_i$ such that $M^{(i)}_{1-C_i}$ is statistically
  close to independently and uniformly random. Let $c=c_1\dots c_k$ and let
  $\rho^c$ denote the state where $M^{(i)}_{1-c_i}$ is replaced
  with the completely mixed state for each $i$:
  \begin{equation}
    \rho^c= \frac 1{2^\ell}
    \id_{M_{\bar c}}\otimes \trace[M_{\bar c}]{\rho}
    \enspace.
  \end{equation}
  We define $\sigma^c$ in the same way. By Theorem~\ref{thm:bqs-ot}, $\rho
  \approx^{\epsilon} \sum_{c\in\bool^k}p_c \rho^c$ where $p_c=\Pr[C=c]$. By the
  witness indistinguishability of $\Pi$, we have that
  \begin{align*}
    \|\dist(\rho) - \dist(\sigma)\|
    &\leq \|\dist(\sum_cp_c \rho^c)  - \dist(\sum_c p_c \sigma^c)\| +2\epsilon \\
    &\leq \sum_c p_c\| \dist(\rho^c) - \dist(\sigma^c) \| +2\epsilon \\
    &\leq \nu + 2\epsilon
  \end{align*}
  since the probability to distinguish between $\rho^c$ and $\sigma^c$ is at
  most the distinguishing probability between a transcript for $\Pi$ with
  challenge $c$ and witness $w$ and one with witness $w'$.
  \qed
\end{proof}

\subsection{Non-interactive statistical WI proofs for NP }
\label{sec:bqsm-NIZK}

We describe now the application of our protocol for (statistical) WI
non-interactive proofs for NP. Before discussing such a protocol, we first
describe a new non-interactive weak bit-commitment, which may have
independent interest.

\subsubsection{A new non-interactive weak BC.}
\label{sec:new-bqs-bc}

The previous protocols for bit commitment in the BQSM had the weird property
that the sender commits by measuring a quantum state created by the
receiver.
For example, in \cite{doi:10.1137/060651343}, in order to commit to a message $m\in\bool$, the sender
would get a message $\ket x_\theta$, measure it in basis $m$ and take note of
the outcome $x'$. To open its commitment, it would send $m$ and $x'$ to the
receiver who could check that $x'_i=x_i$ whenever $\theta_i=m$. This quirk of
\dfss{} is actually the reason why our round compression transform can go down to
two messages as we will see in \Cref{sec:multi-round}. But in the context of our non-interactive proof using \dfss{}
applied to commit-and-open protocols, we cannot replace the classical
commitments with \dfss{} since it would introduce communication form the verifier
to the sender.

Intuitively, it does not matter who prepares the state and who measures it since
by a purification argument, the state preparation of \dfss{} can be seen as
measuring halves of EPR pairs. Formally proving that this is still secure is
more difficult, and the tools to do so were only discovered a couple of years
later in~\cite{dupuis_adaptive_2016}, which can show that it is still
sum-binding. For our purpose, we actually need a weaker security notion than
sum-binding. We first present the ``reversed'' protocol and then describe and
prove the security notion it needs to satisfy.

\begin{quote}
  \rule{\linewidth}{1pt}
  \begin{center}
    {\bf Protocol} weak-BC
  \end{center}

  {\bf Commit Phase }
  \begin{itemize}
  \item {\bf Committer$(b)$: } Choose $x\in_R\bool^n$. Send $\ket x_b$.
  \item  {\bf Receiver: } Measure qubits upon reception in a random basis $\theta$,
    gets outcome $x'$.
  \end{itemize}
  
  {\bf Open Phase }
  \begin{itemize}
  \item {\bf Committer$(b)$: } Send $x$ and $b$. Receiver checks that $x_i=x'_i$
    whenever $\theta_i=b$.
  \item {\bf Receiver: }  Check that $x_i=x'_i$ whenever $\theta_i=b$.
  \end{itemize}
  \rule{\linewidth}{1pt}
\end{quote}

The usual sum-binding criteria asks that, for a fixed commitment $\rho_{AB}$, if
the sender succeeds in opening $b$ with probability $p_b$, then $p_0+p_1\leq
1+\negl$. In this context, the malicious sender can measure its part of the
state $A$ adaptively based on the knowledge of the target bit $b$. We consider a
weaker task where the sender must provide \emph{both} openings simultaneously,
and does not know which will be tested. This is strictly weaker than sum-binding
since, as the following theorem shows, this is achieved unconditionally by the
above protocol.

\begin{theorem}\label{thm:weak-bc}
  The above weak-BC protocol is perfectly hiding and is binding according to the
  following. Let $\rho_{AB}$ be an arbitrary density operator describing the
  joint state of the committer and the receiver after the commit phase. Let
  $\{V_{acc}^{x,b},V_{rej}^{x,b}\}$ be the verifier's measurement for opening
  $(x,b)$. Then \begin{equation}
    \label{eq:sum-bind-no-q}
    \sup_{M}  \sum_{b\in\bool} \sum_{x_0,x_1}
    \trace{(M_{x_0,x_1}\otimes V_{acc}^{x_b,b}) \rho_{AB}}
    \leq 1+2^{-\frac n2 + 2h(\delta)n} + 2^{-\delta n+1}
  \end{equation}
  where $h(\cdot)$ is the binary entropy and $\delta>0$ is an arbitrary
  constant.
\end{theorem}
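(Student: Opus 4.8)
The plan is to analyze the quantity on the left-hand side of \eqref{eq:sum-bind-no-q} by purifying the committer's state preparation. Writing $\ket x_b = H^{b\otimes n}\ket x$, the protocol where the committer picks a random $x$ and sends $\ket x_b$ is equivalent (from the receiver's point of view, and indeed for the joint state) to one where the committer prepares $n$ EPR pairs, sends one half of each, and measures its own halves in basis $b$ to obtain $x$. Since the committer is malicious, we should instead think of the most general attack: the committer sends some arbitrary $n$-qubit register $B$ to the receiver and keeps a register $A$, so $\rho_{AB}$ is arbitrary. The receiver measures $B$ in a uniformly random basis $\theta\in\bool^n$, obtaining $x'$. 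A malicious committer who wants to open to both $b=0$ and $b=1$ must apply some measurement $\{M_{x_0,x_1}\}$ to $A$ (possibly with auxiliary workspace) producing a pair of candidate openings $(x_0,x_1)$, one for each bit, \emph{before} learning $\theta$; then on a challenge $b$ the receiver checks $(x_b)_i = x'_i$ whenever $\theta_i = b$.

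The key observation is that for a fixed $\theta$, the opening to bit $b$ is only checked on the coordinates $i$ with $\theta_i = b$; call this set $S_b(\theta)$, and note $S_0(\theta)$ and $S_1(\theta)$ partition $[n]$ and each has expected size $n/2$. First I would reduce to the noiseless/exact-match case and handle the slack: by a Chernoff-type bound, except with probability $\le 2^{-\delta n + 1}$ over $\theta$, both $|S_0(\theta)|$ and $|S_1(\theta)|$ lie within $(1\pm\text{const})\cdot n/2$ — this is where the $2^{-\delta n+1}$ term in \eqref{eq:sum-bind-no-q} comes from. Conditioned on a good $\theta$, I would bound the probability that \emph{both} the $b=0$ check and the $b=1$ check would pass. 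The point is that passing the $b$-check pins down $x'$ on $S_b(\theta)$, and the receiver's measurement outcome $x'$ restricted to the two complementary halves $S_0(\theta), S_1(\theta)$ in the two conjugate bases cannot be simultaneously predicted well by the committer: this is an entropic uncertainty relation / the same min-entropy argument underlying Theorems~\ref{thm:binding-dfss} and~\ref{thm:bqs-ot}. Quantitatively, for the committer's declared pair $(x_0,x_1)$ to match $x'$ on $S_0(\theta)$ in the $+$ basis and on $S_1(\theta)$ in the $\times$ basis, the relevant probability is bounded by $2^{-|S_0(\theta)|} \cdot (\text{overlap terms})$ summed over $x_0,x_1$, and after using $|\{x': \Delta(x,x')<\delta\}|\le 2^{H(\delta)n}$ to absorb the possibility that the committer hedges by allowing a few mismatched coordinates (hence the $2h(\delta)n$ in the exponent and the reason a nonzero $\delta$ helps), one gets a bound of the form $1 + 2^{-n/2 + 2h(\delta)n}$ for the good-$\theta$ contribution.

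Concretely, the main steps in order: (1) rewrite the LHS of \eqref{eq:sum-bind-no-q} by expanding over $\theta$ with weight $2^{-n}$ each, and split the sum over $\theta$ into ``balanced'' and ``unbalanced'' $\theta$; (2) bound the unbalanced contribution crudely by $\le 2\cdot 2^{-\delta n}$ using a tail bound on $|S_0(\theta)|$; (3) for balanced $\theta$, express $V^{x_b,b}_{acc}$ as (a sum over the unchecked coordinates of) a rank-one projector onto $\ket{(x_b)_{S_b(\theta)}}$ in basis $b$ on the subsystem $B_{S_b(\theta)}$, tensored with identity elsewhere; (4) note that since $S_0(\theta)\sqcup S_1(\theta) = [n]$, the product $V^{x_0,0}_{acc}\otimes V^{x_1,1}_{acc}$ (the event both would accept) is, on register $B$, a rank-one projector onto a product state $\ket{(x_0)_{S_0}}_+\otimes\ket{(x_1)_{S_1}}_\times$ — but these live on \emph{complementary} qubits, so it is genuinely rank one, with the crucial feature that summing over the receiver's marginal gives a factor $2^{-n}$; (5) bound $\sum_{b}\sum_{x_0,x_1}\trace{(M_{x_0,x_1}\otimes V^{x_b,b}_{acc})\rho_{AB}}$ by first handling each $b$ separately — for a fixed $b$, $\sum_{x_b}V^{x_b,b}_{acc} = \id$ on $B_{S_b}$, so that sum alone is just $1$ — and then observing that the ``both-accept'' cross term, which is what pushes the total above $1$, is governed by a Cauchy–Schwarz / overlap estimate between the two conjugate-basis POVMs that yields the $2^{-n/2+2h(\delta)n}$ bound; (6) combine.

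**The main obstacle** I anticipate is step (5): making precise the sense in which ``the committer cannot beat $1$ by more than an exponentially small amount'' for the \emph{simultaneous} (non-adaptive) opening task. The clean way is to split $\sum_b \trace{(M_{x_0,x_1}\otimes V^{x_b,b}_{acc})\rho} = \trace{(M_{x_0,x_1}\otimes V^{x_0,0}_{acc})\rho} + \trace{(M_{x_0,x_1}\otimes V^{x_1,1}_{acc})\rho}$ and use that for each single fixed $b$, $\sum_{x_0,x_1}\trace{(M_{x_0,x_1}\otimes V^{x_b,b}_{acc})\rho} \le \trace{(\sum_{x_0,x_1}M_{x_0,x_1}\otimes \id)\rho} \le 1$ trivially — but this double-counts and gives only $\le 2$, not $\le 1 + \negl$. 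To get the tight $1+\negl$, one must exploit that $M_{x_0,x_1}$ forces a \emph{single} $x_0$ and a \emph{single} $x_1$ to be committed jointly: the honest receiver's $x'$ is, restricted to the two conjugate halves, close to maximally uncertain from the committer's side, so $\trace{(M_{x_0,x_1}\otimes V^{x_0,0}_{acc}\otimes V^{x_1,1}_{acc})\rho}$ — the joint accept event — is tiny, and $\Pr[\text{accept }0] + \Pr[\text{accept }1] = \Pr[\text{accept }0 \vee 1] + \Pr[\text{accept }0 \wedge 1] \le 1 + \Pr[\text{both}]$. Bounding $\Pr[\text{both}]$ is then the heart of the matter and is exactly an uncertainty-relation estimate: conditioned on balanced $\theta$, $x'_{S_0}$ (in basis $+$) and $x'_{S_1}$ (in basis $\times$) are outcomes on complementary qubits, and the committer's fixed guess $(x_0, x_1)$ matches both only with probability $\lesssim 2^{-|S_0(\theta)| } \approx 2^{-n/2}$, with the $2^{H(\delta)n}$ slack absorbing a $\delta$-fraction of allowed errors on each half (factor $2^{H(\delta)n}$ per half, hence $2^{2h(\delta)n}$). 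I would double-check whether the argument needs the stronger machinery of~\cite{dupuis_adaptive_2016} or whether the elementary min-entropy splitting bound (Lemma~\ref{lem:splitting}) together with the volume bound on Hamming balls suffices; given that the target bound in \eqref{eq:sum-bind-no-q} has exactly the shape $1 + 2^{-n/2 + 2h(\delta)n} + 2^{-\delta n+1}$, I expect the elementary route to go through, with the three terms corresponding respectively to the joint-accept uncertainty bound, and (combined) the unbalanced-$\theta$ tail.
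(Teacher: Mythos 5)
Your high-level strategy—purify, reduce to bounding the ``both-accept'' event, and invoke an uncertainty-type argument—is valid, but it diverges from the paper's proof in a way that matters for the exact bound claimed. The paper does not condition on balanced versus unbalanced $\theta$ at all. Instead it runs three hybrids: (1) defer the receiver's measurement; (2) reinterpret ``measure each qubit in a random basis $\theta$ and check positions $\theta_i=b$'' as ``measure every qubit in basis $b$ and check a uniformly random subset $T\subseteq[n]$''; (3) replace the random-subset check by ``reject iff $\Delta(x',x)>\delta n$''. The $2^{-\delta n+1}$ term is generated in step (3)—it is the probability that $x'$ has more than $\delta n$ mismatches yet a random $T$ happens to miss all of them, counted once per $b\in\bool$—not a Chernoff tail on $|S_0(\theta)|$. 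And the $2^{-n/2+2h(\delta)n}$ term comes from an operator-norm inequality $\|A+B\|_\infty\le 1+\|AB\|_\infty$ for projectors (the quantum surrogate for inclusion--exclusion when $A$ and $B$ do not commute), applied to the two rank-$\approx 2^{h(\delta)n}$ projectors onto Hamming balls in the $+$ and $\times$ bases, whose pairwise basis overlap is $2^{-n/2}$.

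Two concrete gaps in your route. First, the source of $\delta$: in the protocol the check is exact, so there is no room for the committer to ``hedge by allowing a few mismatched coordinates''; the $h(\delta)$ slack only appears because of the Hybrid~2$\to$3 relaxation, which you did not perform. Without it your Hamming-ball volume bound has nothing to absorb. Second, the quantitative shape: a Chernoff tail for $|S_0(\theta)|$ deviating by $\alpha n$ from $n/2$ decays like $2^{-\Theta(\alpha^2 n)}$, so to hit $2^{-\delta n}$ you would need $\alpha\approx\sqrt\delta$, after which the balanced contribution becomes $\approx 2^{-n/2+\sqrt\delta\,n}$ rather than $2^{-n/2+2h(\delta)n}$. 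Your scheme plausibly proves a weaker-looking statement of the same flavor, but it will not reproduce equation~\eqref{eq:sum-bind-no-q} as stated. If you want to land exactly on the theorem's bound, adopt the paper's Hybrid~2$\to$3 reformulation so that both acceptance conditions become projectors on the \emph{full} register in conjugate bases, then invoke $\|A+B\|\le 1+\|AB\|$; that single operator inequality is what replaces your classical inclusion--exclusion and removes the need to condition on $\theta$ at all.
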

\begin{proof}
  Hiding follows from the fact that
  \begin{equation*}
    \sum_{x\in\bool^n} \proj x=
    \sum_{x\in\bool^n} H^{\otimes n} \proj x H^{\otimes} = \frac \id{2^n}
  \end{equation*}

  We will bound the weak binding criteria through a series of hybrids which each
  negligibly change the success probability. Let $p_b=\sup_M \sum_{x_0,x_1}
  \trace{(M_{x_0,x_1}\otimes V_{acc}^{x_b,b}) \rho_{AB}}$ be the probability of
  acceptance when the opening to bit $b$ is checked, where of course $M$ cannot
  depend on $b$. 

  Hybrid 1. The receiver holds on to the qubits in the commit phase and waits
  for the committer to send its opening before measuring in a random basis
  $\Theta$. The trace in~\eqref{eq:sum-bind-no-q} is unchanged by this
  modification.

  Hybrid 2. As Hybrid 1, but instead of choosing $\Theta$ at random and measuring in basis
  $\Theta$, the receiver measures all the qubits in the basis $b$ sent by the
  commiter. Then,  the receiver chooses a subset $T\subseteq [n]$ uniformly at random
  and rejects if for any $i\in
  T$, the result $x'_i$ is different from $x_i$. The probability distributions
  are also unchanged as this is equivalent to the checking procedure with
  $\Theta_i=b$ if $i\in T$ and $\Theta_i=1-b$ if $i\notin T$. The marginal
  distribution of $\Theta$ is still uniform.

  Hybrid 3. As Hybrid 2, but instead of comparing the positions for a random
  subset $T$, the receiver rejects if the measurement outcome $x'$ is
  at Hamming distance greater than $\delta n$ from $x$. The receiver will reject
  more often in this hybrid. The probability that the verifier rejects in Hybrid
  3 and not in Hybrid 2 is the probability that $\Delta(x',x)>\delta n$, yet
  $x'_i=x_i$ for all $i\in T$. Since $T$ is chosen uniformly at random, this
  probability is at most $2^{-\delta n}$. Let $p'_b$ be the probability that the
  receiver accepts an opening to $b$ in Hybrid 3, then $p_b\leq p'_b +
  2^{-\delta n}$.

  We now bound the sum of probabilities for Hybrid 3. Let $\sum_{x'\approx x}
  \proj{x'}_b$ be the projector onto accepting outcomes for the opening of $b\in
  \bool$ in Hybrid 3. We have that
  \begin{align*}
    p'_0 + p'_1 &= \sup_{M}  \sum_{b\in\bool} \sum_{x_0,x_1}
                  \trace{(M_{x_0,x_1}\otimes \sum_{x\approx x_b} \proj{x'}_b) \rho_{AB}}\\
                &\leq \sup_{x_0,x_1}\trace{\sum_{x\approx x_0} \proj{x}_0 \cdot \rho} + \trace{\sum_{y\approx x_1} \proj{y}_1\cdot \rho}\\
                &\leq \left\| \sum_{x\approx x_0} \proj{x}_0 + \sum_{y\approx x_1} \proj{y}_1 \right\|_\infty\\
                &\leq 1 +\left\| \sum_{x\approx x_0} \proj{x}_0 \cdot \sum_{y\approx x_1} \proj{y}_1 \right\|_\infty\\
                &\leq 1+ 2^{2h(\delta)n - n/2}
  \end{align*}
  where we use the inequality $\|A+B\|\leq 1+\|A\cdot B\|$ for projectors $A$
  and $B$ (a fact whose proof can be found in~\cite{bouman_all-but-one_2013}),
  the fact that there are at most $2^{h(\delta)n}$ strings at distance $\delta
  n$ from $x_b$ and that $\bra {x}_0\ket {y}_1=2^{-\frac n2}$ for any $x,y$.

  Compiling the error introduced with Hybrid 3, we have that
  \begin{equation*}
    \eqref{eq:sum-bind-no-q}=p_0+p_1\leq 1+  2^{2h(\delta)n - n/2}+ 2\cdot 2^{-\delta n}
  \end{equation*}
\qed
\end{proof}

\subsubsection{A non-interactive statistical WI proof for \npol{} in the BQSM.}
\label{sec:ni-np}

We now consider the following $\Xi$ protocol for the NP-complete $L_{ham}$
corresponding to graphs that have a Hamiltonian cycle. It consists of the
original $\Sigma$ protocol for this problem, but using weak BC as the
commitment.

\begin{quote}
  \rule{\linewidth}{1pt}
  \begin{center}
    {\bf Protocol} $\Xi$ protocol $\Pi_{ham}$ for Hamiltonian cycle
  \end{center}

  \begin{enumerate}
  \item {\bf Prover: } Using weak BC, commits to the adjacency matrix of a
    permutation $\sigma$ of the graph $G$
  \item  {\bf Verifier: } Send a random bit $b$
  \item {\bf Prover: } If $b = 0$ open the whole adjacency matrix and provide
    the permutation $\sigma$. If $b = 1$, open the edges corresponding to the
      Hamiltonian cycle.
  \item  {\bf Verifier: } Check the consistency of the Prover's opening.
  \end{enumerate}
  \rule{\linewidth}{1pt}
\end{quote}

The completeness of the protocol follows directly from the completeness of the
original protocol, and zero-knowledge follows Watrous rewinding technique.
However, since we use weak BC, this protocol does not satisfy the standard
soundness definition (in particular, the Prover can answer the two challenges by
keeping the purification of the commitment and measuring it accordingly).

However, we prove now that it satisfies the {\em oblivious soundness} property that
we mentioned in \Cref{remark:oblivious-soundness}.

\begin{lemma}\label{lem:pi-ham-soundness}
  $\Pi_{ham}$ satisfies oblivious soundness.
\end{lemma}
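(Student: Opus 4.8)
The plan is to reduce oblivious soundness of $\Pi_{ham}$ to the combinatorics of Blum's Hamiltonicity protocol together with a single--instance binding property of weak-BC. Fix a non-Hamiltonian graph $G$ on $N$ vertices; each entry of the adjacency matrix is committed with its own copy of weak-BC, and since the honest verifier is prepare-and-measure it measures every commitment once, at commit time, in a uniformly random basis, so after the first message we may take the verifier's register $B$ to be classical: for each entry $j$ it holds a basis $\theta_j\in\bool^n$ and an outcome $x'_j\in\bool^n$. A malicious prover's attack is a measurement $M=\{M_{r_0,r_1}\}$ on its register $A$ producing a response $r_0$ for challenge $0$ (a permutation $\sigma$ together with openings of every entry, to some matrix $A'$) and a response $r_1$ for challenge $1$ (a set $S$ of $N$ entries, each opened to the bit $1$). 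Writing $p_b$ for the probability that the honest verifier accepts $r_b$ on challenge $b$, the goal is $p_0+p_1\le 1+\negl$.

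Since everything is now classical, inclusion--exclusion gives $p_0+p_1\le 1+\Pr[\text{both }r_0\text{ and }r_1\text{ accepted}]$, so it suffices to bound the ``both accepted'' event. If both are accepted then in particular $A'=\sigma(G)$ and $S$ forms a Hamiltonian cycle; as $\sigma(G)\cong G$ is not Hamiltonian, $S$ cannot be contained in the edge set encoded by $A'$, so some entry $j^\star\in S$ has $A'_{j^\star}=0$. That entry is opened in $r_0$ to the bit $0$ and in $r_1$ to the bit $1$, and both openings pass verification; hence $\{\text{both accepted}\}\subseteq\bigcup_j E_j$, where $E_j$ is the event that the prover's two openings of entry $j$ both pass verification, one as an opening to $0$ and one as an opening to $1$. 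A union bound over the at most $N^2$ entries reduces the problem to bounding $\Pr[E_j]$.

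The heart of the argument---and the step I expect to be the main obstacle---is this single--instance bound: it does not follow verbatim from \Cref{thm:weak-bc}, which only bounds $p_0+p_1$ for a single commitment, but it can be proved in the same spirit and is in fact simpler. Unwinding the weak-BC verification, $E_j$ says that the prover's two strings $x^0$ (its claimed opening to $0$) and $x^1$ (its claimed opening to $1$) satisfy: $x^0$ agrees with $x'_j$ on the coordinates $i$ with $(\theta_j)_i=0$, and $x^1$ agrees with $x'_j$ on those with $(\theta_j)_i=1$---equivalently, the prover correctly predicts the entire outcome of measuring entry $j$'s $n$-qubit committed state in basis $\theta_j$. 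Crucially the prover fixes $(x^0,x^1)$ before $\theta_j$ is sampled, so $\theta_j$ is independent of $(x^0,x^1)$; conditioning on the prover's measurement outcome, writing $\sigma_{B_j}$ for the conditional state of entry $j$'s committed qubits, and averaging over $\theta_j$ first turns the acceptance projector into a tensor product, giving
\[
  \Pr[E_j\mid x^0,x^1] \;=\; \trace{\Big(\bigotimes_{i=1}^{n}\frac12\big(\proj{x^0_i}+\proj{x^1_i}_{\times}\big)\Big)\sigma_{B_j}}
  \;\le\; \left\| \frac12\proj{0}+\frac12\proj{+} \right\|_\infty^{\,n} \;=\; \left(\frac12+\frac1{2\sqrt2}\right)^{\!n},
\]
since the operator norm is multiplicative over the tensor product and each single-qubit factor has top eigenvalue $\frac12+\frac1{2\sqrt2}$ because a computational-basis state and a Hadamard-basis state overlap with modulus $1/\sqrt2$. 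Taking the expectation over $(x^0,x^1)$ preserves the bound, so $\Pr[E_j]\le(\frac12+\frac1{2\sqrt2})^n$, which is negligible once $n=\omega(\log\lambda)$. Combining the three steps yields $p_0+p_1\le 1+N^2\cdot(\frac12+\frac1{2\sqrt2})^n\le 1+\negl$, which is precisely \eqref{eq:oblivious-soundness}.
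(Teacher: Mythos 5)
Your proof is correct, and it takes a genuinely different route from the paper's. The paper also identifies a conflicting entry $j^\star$ in the adjacency matrix, but then reduces directly to \Cref{thm:weak-bc} in one sentence, implicitly relying on the fact that the hybrid argument in the proof of that theorem actually establishes the per-opening \emph{operator} inequality $V^{acc}_{x_0,0}+V^{acc}_{x_1,1}\le(1+\negl)\id_{B_{j^\star}}$ for every fixed pair $(x_0,x_1)$; plugging this into $\sum_b V(x,b,r_b)$ for each $(r_0,r_1)$ bounds $p_0+p_1$ directly, with \emph{no} union bound over the entries. You instead take the probability-space route: inclusion--exclusion to reduce $p_0+p_1\le 1+\Pr[\text{both accept}]$, then a union bound over the $N^2$ entries, and finally a fresh, self-contained bound on the single-instance ``simultaneous opening'' event. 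You correctly flag that this last event is \emph{not} a consequence of the stated sum-binding bound (indeed $\Pr[A_0\cap A_1]\ge p_0+p_1-1$, the wrong direction), and your direct computation -- averaging over $\theta$ to get the POVM $\bigotimes_i\frac12(\proj{x^0_i}_++\proj{x^1_i}_\times)$, whose operator norm is $(\tfrac12+\tfrac1{2\sqrt2})^n$ by multiplicativity -- is clean, correct, and avoids the three-hybrid argument entirely. The trade-off is the extra $N^2$ union-bound factor, which is harmless; what you gain is a proof that is rigorous as written, whereas the paper's appeal to \Cref{thm:weak-bc} requires the reader to unpack the hybrids and recognize that the bound there holds outcome-by-outcome as an operator inequality, not merely after the $\sup_M$.

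One small presentational inconsistency: you first say the verifier's register $B$ ``may be taken classical'' after the commit phase, but later you compute $\trace{(\bigotimes_i\cdots)\,\sigma_{B_j}}$ for a genuine quantum conditional state $\sigma_{B_j}$. Both views are fine -- the prover's and verifier's measurements act on disjoint registers and hence commute, so you can defer the verifier's measurement -- but it is worth being explicit that the inclusion--exclusion step only needs the two acceptance predicates to be co-diagonal (which they are, since they are classical predicates on the same $(\theta,x')$), while the $\|\cdot\|_\infty$ step is best read with the verifier's measurement deferred.
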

\begin{proof}
  Let $G\notin L_{ham}$.
  Let $G'$ be the graph corresponding to the answer $r_0$. If $G'$ has a
  Hamiltonian cycle, it cannot be a permutation of $G$, therefore the first
  check will fail with probability $1$. Moreover, if $r_1$ does not open to a
  Hamiltonian cycle, the second check will fail with probability $1$. In this
  case, for the two checks to pass, there is at least one entry
  $i,j$ of the adjacency matrix whose opening $o_{i,j}$ is $b$ in $r_0$ and whose
  opening $o'_{i,j}$ is $\neg b$ in $r_1$. 

  Therefore, in order to make the verifier accept, the prover has to provide values $(r_0,r_1)$ such that
  \Cref{eq:oblivious-soundness} holds, which is upper-bounded by the probability
  that the prover can provide simultaneously two different openings to the
  commitment, which is impossible by \Cref{thm:weak-bc}. \qed
\end{proof}

By observing that $\Pi_{ham}$ is perfectly witness indistinguishable because
weak-BC is perfectly hiding, and combining Lemma~\ref{lem:pi-ham-soundness} and
Remark~\ref{remark:oblivious-soundness}, we obtain the following result. 

\begin{corollary}
  There is a non-interactive quantum proof system for $L_{ham}$ which is
  unconditionally sound and witness indistinguishable against BQS verifiers.
\end{corollary}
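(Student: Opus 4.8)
The plan is to assemble the corollary from three facts already established in the excerpt: (i) $\Pi_{ham}$ is a $\Xi$--protocol with $1$-bit challenges whose honest-verifier transcripts can be simulated (HVZK via Watrous rewinding, so accepting transcripts are simulatable), (ii) $\Pi_{ham}$ satisfies oblivious soundness by Lemma~\ref{lem:pi-ham-soundness}, and (iii) applying the $\nizkot$ compiler to a $\Xi$--protocol with a receive-and-measure verifier yields a non-interactive proof. First I would invoke Remark~\ref{remark:oblivious-soundness}: the soundness of $\nizkot[\Pi]$ only requires oblivious soundness of the underlying protocol, not full soundness. Since $G \notin L_{ham}$ forces the two-challenge-simultaneous-answer probability in Equation~\eqref{eq:oblivious-soundness} to be bounded by the weak binding quantity of Theorem~\ref{thm:weak-bc}, which is negligible for any constant $\delta > 0$, the composed protocol $\nizkot[\Pi_{ham}]$ is unconditionally sound (with negligible soundness error, amplifiable by the parameter $k$). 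This settles the first half of the corollary.

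For witness indistinguishability, I would note that $\Pi_{ham}$ is \emph{perfectly} witness indistinguishable: because weak-BC is perfectly hiding (the hiding part of Theorem~\ref{thm:weak-bc}, which follows from the basis-averaging identity $\sum_x \proj x = \sum_x H^{\otimes n}\proj x H^{\otimes n} = \id/2^n$ being maximally mixed), the prover's first message is independent of the witness, and the responses for both challenges $b\in\bool$ — either the permutation $\sigma$ together with the full opening, or the Hamiltonian-cycle edges — are themselves distributed identically for any two witnesses $w_1, w_2$ (modulo the choice of the random permutation, which can be coupled). Hence $\Pi_{ham}$ is statistically (indeed perfectly) witness indistinguishable. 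Then I would apply the witness-indistinguishability preservation theorem for the $\nizkot$ compiler (the ``If $\Pi$ is a statistical WI proof system, then $\nizkot[\Pi]$ is statistical WI in the BQSM'' theorem), which transfers WI against BQS verifiers to $\nizkot[\Pi_{ham}]$, up to the $2\epsilon$ error term from Theorem~\ref{thm:bqs-ot} that is negligible under the memory bound $n/4 - \ell - q \in \Omega(n)$.

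The main subtlety — and the step I would be most careful about — is checking that the WI-preservation theorem, as stated, genuinely applies here even though $\Pi_{ham}$ only satisfies oblivious soundness rather than standard soundness. Re-reading that theorem's proof, it never uses soundness: it only uses the security of BQS-OT to replace the hidden OT message by the maximally mixed state, then invokes the WI of $\Pi$ on the challenge-$c$ branch. So the argument goes through verbatim with $\Pi = \Pi_{ham}$. The one thing to verify explicitly is that $\Pi_{ham}$ meets the structural hypotheses of the compiler: it is a $3$-message, $1$-bit public-coin protocol, the first prover message (the weak-BC commitments, i.e.\ BB84 states $\ket{x}_b$) is of the receive-and-measure form the verifier can handle, and the third message is classical. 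All of these hold by construction of $\Pi_{ham}$. Assembling (i)--(iii) then yields the stated corollary: a non-interactive quantum proof system for $L_{ham}$ that is unconditionally sound and (statistically) witness indistinguishable against BQS verifiers.

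Because $L_{ham}$ is $\npol$--complete and WI is preserved under the standard Karp-reduction composition, this also recovers Result~\ref{re:niwi} for all of $\npol$, though I would state that as a separate consequence rather than folding it into this corollary's proof.
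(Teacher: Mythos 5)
Your proposal is correct and follows essentially the same route the paper takes: $\Pi_{ham}$ is perfectly WI because weak-BC is perfectly hiding, soundness of $\nizkot[\Pi_{ham}]$ follows from oblivious soundness (Lemma~\ref{lem:pi-ham-soundness} combined with Remark~\ref{remark:oblivious-soundness}), and the WI-preservation theorem transfers WI to the compiled protocol. Your additional observation — that the WI-preservation theorem nowhere relies on soundness of $\Pi$, so it applies even though $\Pi_{ham}$ only has oblivious soundness — is a useful sanity check the paper leaves implicit, but the underlying decomposition is identical.
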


\section{A General Round-Compression Transform in the \bqsm}
\label{sec:multi-round}
In this section, we present and prove the soundness of the general transform
mapping $k$--round interactive proofs for $k=\poly[\lambda]$ to
$2$--message quantum proofs.

We assume for simplicity that all the prover messages are of length
$\ell=\ell(\lambda)$ and all the verifier challenges are of length $m=m(\lambda)$ for
some polynomials $n,m:\naturals\rightarrow\naturals$ , and that the prover sends
the first and last messages. We let $a_1,\dots, a_{k+1}$ denote the $k+1$ prover
messages and $c_1,\dots,c_k$ the $k$ verifier challenges, where $a_{i+1}$
responds to challenge $c_i$. Let $\prover^{\Pi}_i$ denote the next-message
function of the prover in protocol $\Pi$ that takes as input the partial
transcript so far and outputs $a_i$. The $\pfs$ transform is presented below.
\begin{quote}
  \rule{\linewidth}{1pt}
  \begin{center}
    {\bf Protocol} $\pfs[\Pi]$
  \end{center}

  {\bf Parameter: } A $k$--round interactive proof system
  $\Pi=(\prover^\Pi,\verifier^\Pi)$ for a language $L$.

  \vspace{1em}

  {\bf Verifier message: } 
  \begin{enumerate}
  \item 
    For $i\in [k]$:
    \begin{enumerate}
    \item \verifier{} runs the commit phase of the \dfss{} string commitment to
      get a quantum register $P_i$.
\item \verifier{} picks $c_i\in_R \bool^m$ to initialize a 
      register $C_i$ in state $\ket{c_i}$
    \end{enumerate}
  \item \verifier{} sends the registers $P_1C_1\dots P_kC_k$ to \prover.
  \end{enumerate}

  \begin{center}
    \emph{--- A memory bound applies after transmission of each $P_i$ ---}
  \end{center}

  {\bf Prover message: }

  \begin{enumerate}
  \setcounter{enumi}{2}
  
\item On input $x\in L$, \prover{} first computes $a_1=\prover^{\Pi}_1(x)$.
\item  For $i\in [k]$,
  \begin{enumerate}
  \item On reception of register $P_i$, \prover{} commits to $a_i$ as in the
    commit phase of \dfss{}. 
\item \prover{} measures register $C_i$ in the computational basis to obtain
    $c_i$. \prover{} computes $a_{i+1} =
    \prover^{\Pi}_{i+1}(a_1,\dots,a_{i},c_1,\dots,c_{i},x)$.
  \end{enumerate}
\item \prover{} runs the reveal phase of \dfss{}, sending every $a_i$ and opening string to
  \verifier{}.
  \end{enumerate}

  \vspace{1em}

  {\bf Verification: } 
  \begin{enumerate}
  \setcounter{enumi}{5}
\item \verifier\ performs the verification for every instance of \dfss{}. It
  accepts if every opening is valid and if $a_1,\dots,a_{k+1},c_1,\dots,c_k$ is
  an accepting transcript for $\Pi$ on input $x$. Otherwise, it rejects.
\end{enumerate}

  \rule{\linewidth}{1pt}
\end{quote}

\begin{theorem}\label{thm:soundness-main}
Let \dfss{} be the $\delta$--binding BQS-BC from Section~\ref{sec:bc-bqsm}. If
  $\Pi$ is a $k$--round interactive proof with soundness error
  $\epsilon$ against unbounded (resp. \qpt{}) provers, then $\pfs[\Pi]$ is a
  1--round quantum interactive proof (resp. argument) with soundness error
  \begin{equation}
    \label{eq:soundness-error}
    \epsilon+ k^2\cdot \delta \end{equation}
  against $q$--bounded adversaries where $\delta$ is negligible if
  $n/4-q\in\Omega(\lambda)$ where $n=n(\lambda)$ is the number of qubits sent in
  \dfss{} and $q=q(\lambda)$ is the quantum memory bound on the prover.
\end{theorem}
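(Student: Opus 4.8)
The plan is to reduce the soundness of $\pfs[\Pi]$ to the soundness of $\Pi$ by a hybrid argument over the $k$ rounds, replacing one commitment at a time by an ideal one that is genuinely independent of the challenge it is supposed to be correlated with. Concretely, fix a no-instance $x\notin L$ and a $q$-bounded malicious prover $\tilde\prover$ against $\pfs[\Pi]$. I would first observe that, by the binding property of Definition~\ref{def:binding-DFRSS}, for each instance $i$ of \dfss{} the state $\rho$ produced by $\tilde\prover$ after committing to (its claimed) $a_i$ determines a random variable $B'_i$ — the value the prover is effectively bound to — that is a (randomized) function only of the prover's classical register and the receiver's state \emph{at the moment the memory bound for $P_i$ is applied}. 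The crucial point is the ordering built into the protocol: the memory bound on $P_i$ is applied before the prover measures $C_i$, so $B'_i$ is fixed before $c_i$ is read. Hence, up to the binding error $\delta$ per instance, we may pretend the prover committed to a fixed $a_i$, measured $c_i$ afterward, and then can only open to that same $a_i$ (opening to anything else is caught with probability $\ge 1-\delta$).

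The main step is then to define the reduction $\rdv$ (a malicious prover against $\Pi$, computationally unbounded in the proof case, \qpt{} in the argument case): $\rdv$ internally simulates the verifier's message of $\pfs[\Pi]$ — i.e.\ runs the commit phase of \dfss{} $k$ times and picks $c_1,\dots,c_k$ — except that instead of generating the $c_i$'s itself, it obtains $c_i$ from the external $\Pi$-verifier at the right moment. It feeds $\tilde\prover$ the registers $P_1C_1\dots P_kC_k$ exactly as in $\pfs[\Pi]$ (with $C_i$ loaded with the externally supplied $c_i$), runs $\tilde\prover$ to the end, reads off the opened messages $a_1,\dots,a_{k+1}$, and forwards $a_{i+1}$ as its $i$-th response to the external verifier. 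The point of the hybrid argument is to show that the transcript $(a_1,\dots,a_{k+1},c_1,\dots,c_k)$ seen by the external verifier is, up to total error $k^2\delta$, distributed exactly as the accepting transcript that $\pfs[\Pi]$'s verifier checks: starting from the real execution, in hybrid $j$ we replace $a_j$ by the bound value $B'_j$ (incurring $\le \delta$ for the opening-consistency, and this must be argued to propagate through the at most $k$ subsequent commitments/openings — hence the $k\cdot\delta$ per step and $k^2\delta$ total). In the final hybrid, every $a_i$ is fixed before $c_i$ is revealed, so the induced transcript is exactly a transcript of an interaction with $\verifier^\Pi$; soundness of $\Pi$ bounds its acceptance by $\epsilon$, giving the claimed bound $\epsilon + k^2\delta$. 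The quantitative claim that $\delta$ is negligible when $n/4-q\in\Omega(\lambda)$ is exactly Theorem~\ref{thm:binding-dfss}.

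The hard part, I expect, is the bookkeeping in the hybrid argument: Definition~\ref{def:binding-DFRSS} only guarantees the prover cannot open instance $i$ to a value $\ne B'_i$ when the opening is executed on the \emph{post-commit} state $\rho^{b'}_{EWV}$, but in $\pfs[\Pi]$ the prover continues to interact — committing to $a_{i+1},\dots$ and measuring $C_{i+1},\dots$ — before the reveal phase of instance $i$ happens. One has to argue that this subsequent activity does not help the prover equivocate on instance $i$; this should follow because the later commitments act on disjoint registers $P_{i+1},\dots,P_k$ and the extra classical information the prover gathers can be absorbed into the side-information $W$ of Definition~\ref{def:binding-DFRSS} (the definition already allows unbounded classical side-information), but making the conditioning precise — in particular that $B'_i$ remains well-defined and the opening bound survives after conditioning on the transcript of the other instances — is the delicate point. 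A clean way to handle it is to apply the binding guarantee to instances in reverse order $i=k,k-1,\dots,1$, so that when we fix instance $i$ all later instances have already been replaced by their ideal (fixed-before-challenge) versions and can be treated as part of the environment; each such replacement costs at most $k\delta$ by a triangle inequality over the $k$ openings that could be affected, yielding $k^2\delta$ overall. A secondary, more routine point is checking that $\rdv$ is efficient in the argument setting — it only runs the \dfss{} commit phase (prepare-and-measure, no quantum memory needed) and $\tilde\prover$ once, so it is \qpt{} whenever $\tilde\prover$ is.
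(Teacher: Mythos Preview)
Your proposal is correct and rests on the same core idea as the paper: a hybrid argument over the $k$ rounds, invoking binding (Definition~\ref{def:binding-DFRSS}) to argue that each committed value is fixed as the random variable $B'_i$ before $c_i$ is read, accumulating at most $k^2\delta$ total loss, with Theorem~\ref{thm:binding-dfss} supplying the negligibility of $\delta$.

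The paper structures the hybrids differently from what you sketch. Rather than a single reduction $\rdv$ against $\Pi$ together with distributional hybrids that replace $a_j$ by $B'_j$ in reverse order, it defines intermediate \emph{protocols} $\hyb\ i$ in which the first $i$ rounds are compressed (as in $\pfs[\Pi]$) and the remaining $k-i$ rounds are left fully interactive (as in $\Pi$), so that $\hyb\ 0=\Pi$ and $\hyb\ k=\pfs[\Pi]$. From any $q$--bounded adversary $\adv^i$ against $\hyb\ i$ it builds an adversary $\adv^{i-1}$ against $\hyb\ i-1$ that internally simulates the $i$th commitment by playing the receiver's role; the loss in going from $\hyb\ i$ to $\hyb\ i-1$ is $i\cdot\delta$ by a union bound over the $i$ commitments on which $\adv^i$ might successfully equivocate, and summing gives $\sum_{i\le k} i\delta\le k^2\delta$. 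This protocol-level decomposition has the advantage that it directly localizes the ``subsequent activity'' problem you flag as the hard part: in $\hyb\ i-1$ everything from round $i$ onward is already interactive, so $\adv^{i-1}$ can simply continue running $\adv^i$'s later-round responses once the real $c_i$ arrives, and there is no need to argue that binding for instance $i$ survives conditioning on all later instances. Your reverse-order plan, absorbing the later commitments and challenges into the classical side-information $W$ of Definition~\ref{def:binding-DFRSS}, should also go through; the two routes reach the same bound for essentially the same reason.
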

\begin{proof}

  We use a hybrid argument to prove the soundness of $\pfs[\Pi]$. Consider
  the following hybrid protocols where in $\hyb\ i$ the round-compression
  transform is applied up to the $i$th prover message, and the rest of protocol
  is interactive.
  \begin{itemize}
  \item $\hyb\ 0$: same as protocol $\Pi$
  \item $\hyb\ i$: apply transformation $\pfs$ to the messages of $\Pi$ up to
    round $i$.
    \begin{enumerate}
    \item\label{st:hyb-after-msg} \verifier\ prepares $i$ registers $P_1\dots P_i$
      and $i$ random values $c_1,\dots,c_i$ in registers $C_1\dots C_i$ and
      sends $\bigotimes_{j=1}^i P_jC_j$.
    \item On reception of a message $(a_1\dots a_{i+1},z_1\dots z_i)$ from the
      prover, \verifier\ checks that $(a_j,z_j)$ is valid opening for $j\in [i]$
      and rejects if any are invalid. 
    \item \verifier\ and \prover\ continue as in protocol $\Pi$: \verifier\
      sends $c_{j}$ and \prover\ responds with $a_{j+1}$ for $j=i+1,\dots,k$.
      \verifier\ checks that $(a_1\dots a_{k+1},c_1\dots c_k)$ is an accepting
      transcript for $\Pi$.
    \end{enumerate}
  \item $\hyb\ k$: same as in $\pfs[\Pi]$
  \end{itemize}

  The difference between two hybrids $i-1$ and $i$ is that in hybrid
  $i-1$, $A_1\dots A_i$ are sent to $\verifier$ before it sends $C_i$ whereas in hybrid $i$,
  the adversary receives $C_i$ before opening its commitments to $A_1\dots A_i$.
We will show that this only confers a negligible advantage to an adversary.

  Consider a $q$--bounded adversary $\adv^i$ against $\hyb\ i$. By the
  definition of binding for BQS-BC (Definition~\ref{def:binding-DFRSS}), after
  the commit phase of the $j$th commitment (i.e.\ after the transmission of
  register $P_j$ for $j\leq i$), there is a random variable $A'_j$ such that
  conditioned on $A'_j=a'_j$, $\adv^i$ has negligible probability of opening
  the $j$th commitment to $a_j\neq a'_j$. This random variable is defined by the
  partial measurement $\adv^i$ is forced to make on register $P_j$ before
  \verifier{} begins transmission of register $C_j$, so it is independent of
  $C_j$.

  This independence means that learning $C_i$ before sending $A_1\dots A_i$ does
  not give a noticeable advantage to the adversary. We make this formal by
  constructing, from the adversary $\adv^i$ that has success probability
  $\epsilon_i$ against $\hyb\ i$, an adversary $\adv^{i-1}$ against hybrid $i-1$
  that has success probability at least $\epsilon_i-\negl[\lambda]$.
  $\adv^{i-1}$ performs the same strategy as $\adv^i$ on reception of the
  registers $P_1C_1\dots P_{i-1}C_{i-1}$. For producing the next value $a_i$,
  $\adv^{i-1}$ simulates the verifier in the $i$th commitment, i.e.\ creates the
  register $P_i$ just as \verifier{} would in hybrid $i$, again applying
  $\adv^i$'s strategy, and checking that the resulting opening is
  valid. 

\begin{quote}
  \rule{\linewidth}{1pt}
  \begin{center}
    {\bf Adversary}  $\adv^{i-1}$
  \end{center}

  \begin{enumerate}
  \item While receiving registers $\bigotimes_{j=1}^{i-1}P_jC_j$ from the
    verifier, forward them to $\adv^{i}$. For the last registers $P_{i}C_{i}$
    that $\adv^{i}$ expects, $\adv^{i-1}$ simulates the verifier, i.e.
    constructs register $P_i$ from the commit phase and sends $P_i$
followed by a random challenge $c$ to
    $\adv^{i}$.
  \item $\adv^{i-1}$ now receives a message $(a_1\dots a_{i+1}, z_1\dots z_{i})$
    from $\adv^{i}$. It checks $(a_{i},z_{i})$ is a valid opening of the
    commitment and aborts if the check fails. It discards $a_{i+1}$ and sends
    $(a_1\dots a_{i},z_1\dots z_{i-1})$ to the verifier.
  \item After receiving the challenge $c_{j}$ for $i\leq j\leq k$ from the
    verifier, it computes and sends $a_{j+1}$ using the same strategy as
    $\adv^i$.
  \end{enumerate}
  
  \rule{\linewidth}{1pt}
\end{quote}
Observe the following facts about $\adv^{i-1}$:
\begin{itemize}
\item The quantum memory required to perform attack $\adv^{i-1}$ against $\hyb\ i-1$
  is the same as attack $\adv^{i}$ against $\hyb\ i$.
\item  $\adv^{i}$ cannot distinguish whether it is interacting
  with $\verifier$ in $\hyb\ i$ or with $\adv^{i-1}$ in $\hyb\ i-1$.\item The random variables $A_1',\dots,A_i'$ have the same distribution in both
  experiments ($\adv^i$ against $\hyb\ i$ and $\adv^{i-1}$ against $\hyb\ i-1$).
\end{itemize}

Let us fix somesome  arbitrary values $a_1'\dots a_i'$ for $A_1'\dots A_i'$. Assume for
now that $a_1\dots a_i=a'_1\dots a'_i$. Since these values are independent of
$C_i$, they would remain unchanged for any value $c_i$ that $\adv^{i-1}$ had
given to $\adv^i$. 
And since $\adv^{i-1}$ answers the
rest of the challenges exactly as $\adv^i$ would, the whole transcript is
identically distributed in both experiments, thus the probability of the
verifier accepting is the same.

Now for the other case (there is some $j\leq i$ such that $a_j\neq a'_j$), then
the verifier will reject the opening to the $i$th commitment with overwhelming
probability. By the definition of $\delta$--binding for BQS-BC schemes
(Definition~\ref{def:binding-DFRSS}), the probability that $\adv^{i+1}$ can
announce a basis $A_j\neq A'_j$ is upper-bounded by
$\delta$. By a union bound, the probability that there is a $1\leq j\leq i$ such that
$A_j\neq A'_j$ is at most $i\cdot
\delta$. Therefore if we let $\epsilon_j$ denote the soundness error of $\hyb\ j$ for
$j=0\dots k$, then
\begin{equation}
  \epsilon_{i} \leq \epsilon_{i-1}+ i\cdot\delta\enspace. \end{equation}
Since by assumption, $\Pi$ is $\epsilon$--sound, then
$\pfs[\Pi]$ is $\epsilon'$--sound for
\begin{equation}
  \epsilon'\leq \epsilon+ k^2\cdot \delta \end{equation}
where the $k^2$ comes from the fact that when going from hybrid $i$ to hybrid
$i+1$, we introduce the negligible term $i\leq k$ times, and there are $k$
hybrids. If $\delta$ is negligible, then the above is arbitrarily close to
$\epsilon$. By Theorem~\ref{thm:binding-dfss}, this happens if the memory bound
on the prover satisfies $n/4-q\in\Omega(\lambda)$

\qed

\end{proof}

\subsection{Application: Two-Message Zero-Knowledge in the BQSM}
\label{sec:zk-bqsm}

The goal of this section is to construct a two-message zero-knowledge proof for
any $\npol$ language in the BQSM.
We begin by proving that our transform produces a zero-knowledge $2$--message
quantum proof when applied to proof systems that satisfy the
following notion of honest-verifier zero-knowledge, which is a generalization of
special HVZK to multi-round protocols.

\begin{definition}
  We say that a $\Pi$ protocol is special qHVCZK (special HVSZK) if for any given challenge
  $(c_1,...,c_k)$, there is an efficient simulator $\mathcal{S}(c_1,...,c_k)$
  such that for every \qpt{} (unbounded) distinguisher $\mathcal{D}$, 
  \begin{align*}
    |\Pr[\mathcal{D}^{\mathcal{S}(x,\cdot)}(1^\lambda) = 1] - 
    \Pr[\mathcal{D}^{\prover \leftrightharpoons \verifier(x,\cdot)}(1^\lambda
    ) = 1]| \leq
    \negl[\lambda],
  \end{align*}
  where $\mathcal{D}$ can query its oracle with (classical) values
  $c_1,...,c_k$. In the first term, it receives $\mathcal{S}(x,c_1,...,c_k)$ and
  in the second term, it receives the transcript $\prover \leftrightharpoons
  \verifier(x,c_1,\dots,c_k)$ that come from the real protocol when the
  challenges are fixed.
\end{definition}

We now show that if $\pfs$ is applied to a special-HVZK $k$--round protocol for
$k=\poly[\lambda]$, then the resulting scheme is zero-knowledge against quantum
verifiers. To prove zero-knowledge, instead of producing a simulator for the
malicious verifier, we show that there exists a simulator which does not
interact with the prover and that can simulate the actions of $\prover$. One can
easily see that this implies (auxiliary-input) zero-knowledge by running any
malicious verifier $\tilde \verifier$ with this simulated prover.

At first glance, the existence of this simulator appears to be at odds with the
soundness of our transform. For example, if the prover relies on the knowledge
of a witness $w$ that $x\in L$ for $L\in\npol$, then the simulator can convince
the verifier that $x\in L$ without access to $w$. This matter is resolved by
observing that the quantum memory of the simulator is not bounded, unlike the
(malicious) prover. This fact is crucial, as we show in
Section~\ref{sec:impo-aux} that there are no $2$--message quantum proof systems
for hard languages that are both sound and zero-knowledge when the quantum
memory of the prover is not bounded. Furthermore, the existence of a fully
quantum simulator for a BQS adversary appears vacuous, but the party we are
simulating -- the verifier -- is not quantum memory bounded. Thus zero-knowledge
holds against fully quantum verifiers.

\begin{theorem}\label{thm:zk}
  If $\Pi=(\prover^\Pi,\verifier^\Pi)$ is a special qHVZK $\Sigma$--protocol
  for a language $L$, then
  $\pfs[\Pi]$ is qZK. The type of zero-knowledge (computational or
  statistical) is preserved by $\pfs$.
\end{theorem}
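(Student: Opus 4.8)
The plan is to construct a fully quantum (memory-unbounded) simulator $\simulator$ that plays the role of the prover in $\pfs[\Pi]$ without access to a witness, and argue that its output channel is indistinguishable from the honest prover's. The starting observation is that the verifier's single message consists of the $k$ commitment registers $P_1,\dots,P_k$ together with the challenge registers $C_1,\dots,C_k$ holding $c_1,\dots,c_k$; since $\simulator$ need not be memory-bounded, it can simply \emph{store all of $P_1,\dots,P_k$ and measure $C_1,\dots,C_k$} in the computational basis up front to learn the full challenge string $(c_1,\dots,c_k)$. Having the entire challenge in hand, $\simulator$ invokes the special-HVZK simulator $\mathcal{S}(x,c_1,\dots,c_k)$ of $\Pi$ to obtain an accepting transcript $(a_1,\dots,a_{k+1},c_1,\dots,c_k)$, and then honestly runs the \dfss{} reveal phase on each stored register $P_i$ to commit-and-open to the simulated $a_i$. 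It outputs whatever the (malicious) verifier outputs after receiving these openings. Because the honest prover also learns $c_i$ by measuring $C_i$ and commits to $a_i$ by measuring $P_i$, the only difference between real and simulated interaction is the joint distribution of the $a_i$'s (witness-derived vs.\ $\mathcal S$-derived); everything downstream — the \dfss{} openings and the verifier's processing — is applied identically.

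The key steps, in order, are: (1) describe $\simulator$ precisely as above; (2) define a hybrid where the honest prover is run but with the $a_i$'s replaced by $\mathcal S(x,c_1,\dots,c_k)$'s output, noting that this hybrid is literally the output of $\simulator$; (3) reduce distinguishing the real prover-channel from this hybrid to distinguishing the oracle $\prover \leftrightharpoons \verifier(x,\cdot)$ from $\mathcal S(x,\cdot)$ in the special-qHVZK definition. For step (3) I would build, from any distinguisher $\mathcal D$ against $\pfs[\Pi]$ composed with a malicious $\tilde\verifier$, a distinguisher $\mathcal D'$ against the $\Pi$-simulator: $\mathcal D'$ simulates the whole $\pfs$ execution internally — it generates $\tilde\verifier$'s message, reads off $(c_1,\dots,c_k)$, queries its oracle once on $(c_1,\dots,c_k)$ to receive a transcript, performs the \dfss{} commit/reveal steps on that transcript's $a_i$'s against the stored $P_i$'s, feeds the openings to $\tilde\verifier$, and runs $\mathcal D$ on the result. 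When $\mathcal D'$'s oracle is the real one, it perfectly reproduces the real $\pfs[\Pi]$ interaction; when it is $\mathcal S$, it reproduces $\simulator$'s output. Hence the distinguishing advantage is bounded by $\negl[\lambda]$, and it is statistical exactly when $\Pi$'s special HVZK is statistical, which gives the "type is preserved" claim. Finally, note $\mathcal D'$ makes a single oracle query, so even a single-query flavour of the special-qHVZK definition suffices.

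One subtlety to handle carefully is that $\tilde\verifier$'s challenge registers $C_i$ need not be in computational-basis states — a malicious verifier may send superpositions or entangle $C_i$ with $P_i$ and with its own memory. I would address this by having $\simulator$ (and $\mathcal D'$) measure each $C_i$ in the computational basis, which is exactly what the honest prover does in step 4(b) of $\pfs[\Pi]$; this measurement is part of the real protocol's prover behaviour, so it introduces no discrepancy, and it collapses $C_i$ to a definite classical $c_i$ that is then passed to $\mathcal S$. The induced post-measurement residual state on $\tilde\verifier$'s side (and its correlations with the $P_i$'s) is identical in the real and simulated worlds because in both worlds the same measurement is applied before any $a_i$-dependent action. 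A second point worth a sentence: the \dfss{} reveal messages the simulator sends are of exactly the form the honest prover sends (bit $a_i$ plus the measurement outcome from $P_i$), so the verifier's view of the reveal phase is syntactically identical in the two worlds.

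The main obstacle I anticipate is not the construction itself but arguing cleanly that the reduction $\mathcal D'$ is \emph{efficient} as required by the qZK definition, given that $\simulator$ stores all of $P_1,\dots,P_k$ (a polynomial but memory-unbounded amount of quantum storage) — this is fine because the qZK/simulator model here explicitly permits the simulator unbounded quantum memory, as emphasised in the paragraph preceding the theorem, but I would state it explicitly to forestall the apparent tension with soundness. Relatedly, I would remark that this is precisely why the result does not contradict Theorem~\ref{thm:soundness-main}: soundness is proved against \emph{$q$-bounded} provers who are forced to partially measure $P_i$ before seeing $c_i$, whereas the simulator evades this because it is unbounded, and whereas a genuinely memory-bounded cheating prover cannot emulate $\simulator$. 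Everything else (the hybrid, the one-query reduction, the statistical-vs-computational bookkeeping) is routine once the measurement subtlety above is dispatched.
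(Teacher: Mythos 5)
Your proof is correct and follows essentially the same route as the paper's: construct a memory-unbounded prover-side simulator that measures all $C_i$ up front, invoke the special-qHVZK simulator of $\Pi$ on $(c_1,\dots,c_k)$, commit to the simulated $a_i$'s on the stored $P_i$'s, and reduce any distinguisher against $\pfs[\Pi]$ to a one-query distinguisher against $\Pi$'s special-qHVZK oracle. The only cosmetic difference is that the paper makes the measurement-reordering step explicit by introducing a ``semi-honest'' prover $\prover^*$ (which stores each $P_i$ and delays its commitment until after all $c_i$ are measured, and is perfectly indistinguishable from the honest memoryless prover because the $P_i$- and $C_j$-measurements act on disjoint registers and hence commute), whereas you fold this commuting-measurements observation into your ``subtlety'' paragraph; both treatments are sound.
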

\begin{proof}
  We construct a simulator for the prover instead of the verifier; i.e.\ this
  simulator mimics the actions of the prover from the verifier's point of view
  and does not have access to the real prover. Turning this simulator into one
  for the verifier is then just a question of making the verifier interact with
  this simulated prover.

  First observe that from the verifier's point of view, the action of the
  quantum memory-less honest prover $\prover$ is perfectly indistinguishable
  from the action of a ``semi-honest'' prover $\prover^*$ that \emph{does} have
  a quantum memory and that delays its commitment to $a_i$ using $P_i$ until
  after every challenge $c_i$ was measured.

  Now since the prover messages $a_i$ are committed to after every challenge is
  learned, we can employ the simulator $\simulator_\Pi$ for the $\Sigma$-protocol
  to obtain a simulated transcript $(a_1,\dots,a_{k+1})$ indistinguishable from
  a real transcript. In more details, we construct the simulator $\simulator$
  for $\pfs[\Pi]$ as follows:
  \begin{enumerate}
  \item Receive the registers $P_1,C_1,...,P_k,C_k$ from $\tilde{\verifier}$, delaying
    any measurement
    \item Measure registers $C_1,...,C_k$ in the computational basis and get
      outcomes $c_1,...,c_k$
    \item Compute $\simulator_\Pi(c_1,...,c_k) = (a_1,...,a_{k+1})$, where
      $\simulator_\Pi$ is the special qHVZK simulator
    \item Perform the commitment phase of BQS-BC on register $P_i$ by committing
      to $a_i$ and get the opening string $z_i$
    \item Return $(a_1,...,a_{k+1},z_1,...,z_k)$ to $\tilde{\verifier}$
  \end{enumerate}

  We now show that this simulator indistinguishable from $\prover$. For that,
  let us assume towards a contradiction that there exists a distinguisher
  $\mathcal{D}$ and a state $\rho_{QE}$, where $Q=P_1C_1\dots C_kP_{k}$ is sent to the
  prover/simulator and $E$ is kept by the distinguisher, such that
  \begin{equation}
    \|\mathcal{D}(\prover\otimes \id_E(\rho)) - \mathcal{D}(\simulator\otimes \id_E(\rho))\| \geq \lambda^{-d}
  \end{equation}
  for $d\in O(1)$. Then, we can construct a distinguisher
  $\mathcal{D}^\mathcal{C}_\Pi$ that can break the special qHVZK property of
  $\Pi$ with probability at least $ \lambda^{-d}$, where $\mathcal{C}$ is an
  oracle for either $\prover_{\Pi} \leftrightharpoons \verifier_{\Pi}(x,\cdot)$ or
  $\simulator_\Pi(x,\cdot)$. It works as follows:
  \begin{enumerate}
  \item Compute the state $\rho_{QE}$ which allows to distinguish $\simulator$
    and $\prover$
  \item Measure registers $C_1,\dots,C_k$ of $\rho_{Q}$ and get outcome
    $(c_1,...,c_k)$
  \item Query $\mathcal{C}(c_1,...,c_k)$ and get the output
    $(a_1,...,a_{k+1})$
  \item Commit to $a_i$ using register $P_i$ and get opening string $z_i$
  \item Output $\mathcal{D}(a_1,...,a_{k+1},z_1,...,z_k)$
  \end{enumerate}

  Notice that when $\mathcal{C} = \simulator_\Pi(x,\cdot)$, then the output
  of $\mathcal{D}^\mathcal{C}_\Pi$ is $\mathcal{D}(\simulator\otimes
  \id_E(\rho))$. Moreover, when $\mathcal{C} = \prover_\Pi \leftrightharpoons
  \verifier_\Pi(x,\cdot)$, we have that $\mathcal{D}^\mathcal{C}_\Pi$ is
  $\mathcal{D}(\prover^* \otimes \id_E(\rho))$ where $\prover^*$ is the
  semi-honest prover introduced earlier. In this case, we have that
\begin{equation*}
  \|\mathcal{D}^{\prover_\Pi \leftrightharpoons
    \verifier_\Pi(x,\cdot)}_\Pi(1^\lambda) - \mathcal{D}^{\simulator_\Pi(x,\cdot)}_\Pi(1^\lambda) \| =
  \|\mathcal{D}(\prover^*\otimes \id_E(\rho)) - \mathcal{D}(\simulator\otimes
  \id_E(\rho))\| \geq \lambda^{-d},
\end{equation*}
which contradicts the qHVZK of $\Pi$ by recalling that the actions of
$\prover^*$ and $\prover$ are perfectly indistinguishable. Therefore we conclude
that the CPTP maps $\prover$ and $\simulator$ are (computationally or
statistically) indistinguishable if $\Pi$ is (computationally or
statistically) qHVZK. \qed
\end{proof}

\subsubsection{Quantum statistical zero-knowledge proofs.}
\label{sec:qszk-proofs}

In this section, we show that using the statistically binding and hiding BQS-BC
scheme of Section~\ref{sec:bc-bqsm}, we can achieve 2--message quantum
statistical zero-knowledge proofs in the \bqsm{}.

In the previous subsection, we showed that special qHVZK $\Sigma$ protocols can
be converted into $2$-messages QZK protocols in the \bqsm{}. However, (honest
verifier) ZK proofs for \npol{}-complete languages rely on computational
assumptions, usually to implement commitment schemes. Since we are in \bqsm{},
we can instead use quantum commitment schemes with perfect hiding and
statistical binding and achieve statistical ZK proofs in the \bqsm{}. 

For simplicity, we will prove the result for a single-shot run of $3$-coloring,
but the result follows analogously with the parallel repetition of the protocol.

\begin{quote}
  \rule{\linewidth}{1pt}
  \begin{center}
$2$--message perfect zero-knowledge proof
  \end{center}

  {\bf Input: } Graph $G=(V,E)$ with $|V| = n$.

  {\bf Verifier message: } 
  \begin{enumerate}
  \item For $i = 1,...,n$, \verifier{} runs the commit phase of the \dfss{}
    string commitment to get a quantum register $P_i$.
  \item \verifier{} picks $c\in_R E$ to initialize a register $C$ in state
    $\ket{c}$
  \item \verifier{} sends the registers $P_1\dots P_n C$ to \prover.
  \end{enumerate}

  \begin{center}
    \emph{--- A memory bound applies after transmission of each $P_i$ ---}
  \end{center}

  {\bf Prover message: }

  \begin{enumerate}
  \setcounter{enumi}{2}
  
\item \prover{} first computes a random $3$-coloring of the graph $G$. Let
  $w_1$,...,$w_n$ be the color of each vertex of the graph.  \prover{} commits
  to each of the colors independently: for $i\in[n]$,
\item On reception of register $P_i$, \prover{} commits to $w_i$ as in the
  commit phase of \dfss{}.
\item \prover{} measures register $C$ in the computational basis to obtain
  $\{i,j\} \in E$.
\item \prover{} runs the reveal phase of $\dfss$ for $w_i$ and $w_j$.
  \end{enumerate}

  \vspace{1em}

  {\bf Verification: } 
  \begin{enumerate}
  \setcounter{enumi}{5}
\item \verifier{} runs the verification of \dfss{} for $w_j$ and $w_i$ and
  checks that $w_j \ne w_i$.
\item If verification or the check failed, it aborts. Otherwise, it accepts.
\end{enumerate}

  \rule{\linewidth}{1pt}
\end{quote}

\begin{theorem}
The protocol described above is a two-message perfect zero-knowledge proof for
  $3$-coloring.
\end{theorem}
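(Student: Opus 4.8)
The plan is to verify the three defining properties in turn. It is cleanest to read the protocol as the $\pfs$ transform of Section~\ref{sec:multi-round} applied to the $3$--colouring $\Sigma$--protocol in which the prover's commitments are realised by the perfectly hiding, BQS--binding scheme \dfss{} of Section~\ref{sec:bc-bqsm}. Perfect completeness is immediate: an honest prover holding a proper $3$--colouring $w$ commits to each $w_\ell$ by measuring $P_\ell$ exactly as in \dfss{}, so by perfect correctness of \dfss{} every opening it later produces for $w_i$ and $w_j$ is accepted, and $w_i\neq w_j$ because $w$ is proper and $\{i,j\}\in E$; hence \verifier{} accepts with probability $1$.

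For soundness I would argue directly from the binding property of \dfss{} rather than citing Theorem~\ref{thm:soundness-main} as a black box, since the intermediate $\Sigma$--protocol is not sound against memory-unbounded provers. Fix a graph $G$ that is not $3$--colourable and a $q$--bounded malicious prover. Because a memory bound is applied after each $P_\ell$ is transmitted and before $C$ is sent, Definition~\ref{def:binding-DFRSS} (instantiated via Theorem~\ref{thm:binding-dfss} and the \dfss{} string commitment) yields, for every $\ell$, a classical random variable $W'_\ell$, determined by the partial measurement the prover is forced to make on $P_\ell$ and therefore independent of the challenge edge, such that the prover can open $P_\ell$ to a value $\neq W'_\ell$ with probability at most $\delta=\negl[\lambda]$. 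As $G$ is not $3$--colourable, the colouring $W'=(W'_1,\dots,W'_n)$ has a monochromatic edge $e^\ast$, and since $C$ is a uniformly random edge independent of $W'$ we have $C=e^\ast$ with probability at least $1/|E|$; conditioned on that event the two endpoints of $C$ have equal $W'$--values, so to pass the colour check the prover must open at least one of the two corresponding commitments to a value differing from its $W'$--value, which by a union bound happens with probability at most $2\delta$. Hence the verifier accepts with probability at most $1-\frac1{|E|}+2\delta$, and negligible soundness error follows by $|E|\cdot\omega(\log\lambda)$ parallel repetitions, exactly as announced for the single-shot analysis.

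The main work is the \emph{perfect} zero-knowledge proof, which follows the template of Theorem~\ref{thm:zk} but where perfect hiding of \dfss{} and the exactness of all distributions promote the conclusion from statistical to perfect. I would exhibit a quantum (memory-unbounded) \emph{prover}--simulator $\simulator$: receive $P_1\dots P_nC$ and delay all measurements; measure $C$ to get $c$; if $c\notin E$, abort as the honest prover would; otherwise pick a uniformly random ordered pair of distinct colours $(w_i,w_j)$ for the endpoints of $c=\{i,j\}$, run the \dfss{} commit phase committing to $w_i$ on $P_i$, to $w_j$ on $P_j$, and to an arbitrary fixed colour on every other $P_\ell$, then run the \dfss{} reveal phase for $w_i,w_j$. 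The claim is that $\simulator$ is \emph{perfectly} indistinguishable from the honest prover, so running any malicious $\tilde\verifier$ against $\simulator$ reproduces $\tilde\verifier$'s real output exactly. I would prove this by interpolating through the ``semi-honest'' prover $\prover^\ast$ that measures $C$ first and only afterwards commits to a genuine uniformly random proper $3$--colouring of $G$: one has $\prover\equiv\prover^\ast$ because measurements on the disjoint registers $C$ and $P_\ell$ commute and the honest colouring is sampled independently of $C$; and $\prover^\ast\equiv\simulator$ because (i) in the \dfss{} commit phase the committer \emph{transmits nothing}, so from $\tilde\verifier$'s side ``measure $P_\ell$ and keep the outcome private'' is the identity channel, whence the never-opened commitments ($\ell\neq i,j$) leak nothing whatever their content, and (ii) the edge-marginal of a uniformly random proper $3$--colouring of $G$ (equivalently, of a fixed proper colouring composed with a random colour permutation) is exactly the uniform distribution over ordered distinct pairs used by $\simulator$ --- permuting the three colours acts regularly on ordered distinct pairs --- while conditioned on $(w_i,w_j)$ the revealed measurement outcomes on $P_i,P_j$ and $\tilde\verifier$'s residual state are a fixed function of $\tilde\verifier$'s own incoming state. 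I expect the delicate point to be exactly this passage to $\prover^\ast$ together with the ``silent commit phase'' observation in (i): this is what prevents a malicious $\tilde\verifier$ from exploiting entanglement between $C$ and the $P_\ell$'s and what makes the un-opened commitments costless; the distribution matching in (ii) is then routine.
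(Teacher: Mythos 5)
Your proposal is correct and follows the same route as the paper: completeness from correctness of \dfss{}, soundness directly from the $\epsilon$--binding random variables of Definition~\ref{def:binding-DFRSS} plus the ``monochromatic edge has probability $1/|E|$'' argument, and perfect zero-knowledge by the delayed-measurement/special-HVZK simulator of Theorem~\ref{thm:zk} combined with perfect hiding of \dfss{}. The paper states the zero-knowledge part in a single sentence by deferring to Theorem~\ref{thm:zk}; you have essentially unfolded that reference, correctly isolating the two facts that make the argument \emph{perfect} rather than statistical --- the silent commit phase makes unopened commitments an identity channel on the verifier's side, and a random colour permutation acts regularly on ordered distinct pairs, so the revealed edge-marginal is exactly uniform.
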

\begin{proof}
  Completeness follows straightforwardly if the $\prover$ follows the honest
  strategy.

  To prove soundness, we use the $\epsilon$-binding property of the commitment
  scheme. For that, let $w'_1,...,w_n'$ be values of the the random variables $b'_1,...,b_n'$ that come from
  \Cref{def:binding-DFRSS} corresponding to the commitment of the color of each
  node. We notice that since the graph is not $3$-colorable, there exists at
  least one edge $\{i,j\} \in E$ such that $w'_i = w'_j$. We also have that the
  $\verifier$'s challenge is $\{i,j\}$ with probability $\frac{1}{m}$, and let
  us consider this case.

  If $\prover$ opens the commitments to the values $w'_i$ and $w'_j$,
  $\verifier$ rejects with probability $1$. If $\prover$ opens the commitments
  to values $\tilde{w}_i \ne w'_i$ or $\tilde{w}_j \ne w'_j$, $\verifier$
  rejects except with probability $\epsilon$. 

  In this case, if the graph is not $3$-colorable, $\verifier$ rejects with
  probability at least $\frac{1-\epsilon}{m}$.

  The simulator and the zero-knowledge proofs follow closely the proof of
  \Cref{thm:zk}. The fact that the flavour of zero-knowledge is perfect comes
  from the fact that the commitment scheme has perfect hiding since no
  information of non-open values is sent to $\verifier$.
  \qed
\end{proof}

\subsection{Applications: Two-Message Interactive Proof for \pspace{}}
\label{sec:applications}
In this section, we describe applications of our round compression transform
$\pfs$ presented in~\ref{sec:multi-round} when applied to a specific interactive
proof system.

\subsubsection{Sum-Check Protocol.}
\label{sec:sum-check-protocol}
The sum-check protocol is the key ingredient of several fundamental results in
complexity theory and cryptography. In this protocol, the prover aims to prove
that 
\[ \sum_{x_1,\dots,x_n \in \{0,1\}} f(x_1,\dots,x_n) = B,\]
for some given value $B$ and function $f$ an $n$-variate polynomial of
degree at most $d$.
The idea of the sum-check protocol is to consider a field 
$\mathbb{H}$, where $\mathbb{F}_2 \subseteq \mathbb{H}$ and $|\mathbb{H}| \gg d$,

\begin{quote}
  \rule{\linewidth}{1pt}
  \begin{center}
    {\bf Sum-check Protocol}
  \end{center}

  {\bf Prover $1$st message: } $\prover$ computes $g_1(x_1) = \sum_{x_2,\dots,x_n \in
  \{0,1\}} f(x_1,\dots,x_n)$ and
  sends $g_1$ to $\verifier$, who checks that $g_1$ is an univariate polynomial of
  degree at most $d$ and that $g_1(0) + g_1(1) = B$. If any of the checks failed, reject.

  \vspace{1em}

  {\bf Verifier $1$st message: } $\verifier$ sends a uniformly random $r_1 \in
  \mathbb{H}$ to $\prover$.

  \vspace{1em}

  {\bf Prover $i$th  message: } $\prover$ computes
  \begin{equation*}
    g_i(x_i) = \sum_{x_{i+1},\dots,x_n \in
      \{0,1\}} f(r_1,\dots,r_{i-1},x_i,x_{i+1},\dots,x_n)
  \end{equation*}
and
  sends $g_i$ to $\verifier$, 
  who checks that $g_i$ is an univariate polynomial of
  degree at most $d$ and that $g_i(0) + g_i(1) = g_{i-1}(r_{i-1})$. If any of the checks failed, reject.

  \vspace{1em}

  {\bf Verifier $i$th message: } $\verifier$ sends a uniformly random $r_i \in \mathbb{H}$ to $\prover$.

  \vspace{1em}

  {\bf Prover last message: } $\prover$ computes $g_n(x_n) =
  f(r_1,\dots,r_{n-1},x_n)$ and
  sends $g_n$ to $\verifier$,
  who checks that $g_n$ is an univariate polynomial of
  degree at most $d$ and that $g_n(0) + g_n(1) = g_{n-1}(r_{n-1})$.  Moreover, $\verifier$ also checks that $g_n(r_1,\dots,r_n) = f(r_1,\dots,r_n)$, for a random
  $r_n \in \mathbb{H}$. If either of these tests do no pass,
  reject.

  If all tests passed, $\verifier$ accepts.

  \rule{\linewidth}{1pt}
\end{quote}

The main result regarding the sum-check protocol is the following~\cite{LundFKN92,Shamir}.
\begin{theorem}
  The sum-check protocol presented above has the following properties:
  \begin{itemize}
    \item {\bf Completeness:}
If $\sum_{x_1,\dots,x_n \in \{0,1\}} f(x_1,\dots,x_n) = B$, there is a strategy for
$\prover$ such that 
      $\verifier$ accepts with probability  $1$.
    \item {\bf Soundness:} If $\sum_{x_1,\dots,x_n \in \{0,1\}} f(x_1,\dots,x_n) \ne B$,
      for any strategy for $\prover$, $\verifier$ accepts with probability at most $\frac{nd}{|\mathbb{H}|}$.
    \item {\bf Complexity:} The honest prover runs in time $\poly[|\mathbb{H}|^n]$, the
      verifier runs in time $\poly[|\mathbb{H}|,n]$ and space
      $O(n\log{|\mathbb{H}|})$. The communication complexity is
      $\poly[|\mathbb{F}|,n]$ and the number of bits sent by the verifier is
      $O(m \log{|\mathbb{H}|})$. Moreover, the protocol is public-coin.
  \end{itemize}
\end{theorem}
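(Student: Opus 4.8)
The plan is to establish completeness, soundness, and the complexity bounds separately; soundness is the only part requiring more than a direct inspection. Throughout, I would write $\hat g_i$ for the \emph{honest} partial-sum polynomial
\[
\hat g_i(x_i)=\sum_{x_{i+1},\dots,x_n\in\{0,1\}} f(r_1,\dots,r_{i-1},x_i,x_{i+1},\dots,x_n),
\]
where $r_1,\dots,r_{i-1}$ are the verifier challenges received so far (for $i=1$ there are no $r_j$'s), and record three facts used repeatedly: $\hat g_i$ is univariate of degree at most $d$ since $f$ has degree at most $d$ in $x_i$; one has $\hat g_i(0)+\hat g_i(1)=\hat g_{i-1}(r_{i-1})$ for $i\ge 2$ and $\hat g_1(0)+\hat g_1(1)=\sum_{x_1,\dots,x_n}f(x_1,\dots,x_n)$; and $\hat g_n(x_n)=f(r_1,\dots,r_{n-1},x_n)$.

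\textbf{Completeness.} I would take the honest prover to send $g_i=\hat g_i$ in each round. Then every degree test passes, every consistency check $g_i(0)+g_i(1)=g_{i-1}(r_{i-1})$ passes by the identity above, the initial check $g_1(0)+g_1(1)=B$ passes because $\sum f=B$ by hypothesis, and the final check $g_n(r_n)=f(r_1,\dots,r_n)$ passes because $\hat g_n(x_n)=f(r_1,\dots,r_{n-1},x_n)$. So the verifier accepts with probability $1$.

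\textbf{Soundness.} Suppose $\sum_{x_1,\dots,x_n}f(x_1,\dots,x_n)=B^{*}\ne B$ and fix an arbitrary (computationally unbounded) prover producing messages $g_1,\dots,g_n$. The argument is the standard round-by-round ``propagation of a lie'', which I would run as follows. The prover cannot send the true $g_1$: if $g_1=\hat g_1$ then $g_1(0)+g_1(1)=B^{*}\ne B$ and the verifier rejects at once; so, to survive, it sends $g_1\ne\hat g_1$. Inductively, conditioned on a prefix $r_1,\dots,r_{i-1}$ under which the verifier has not rejected but $g_{i-1}(r_{i-1})\ne\hat g_{i-1}(r_{i-1})$, the round-$i$ consistency check forces $g_i(0)+g_i(1)=g_{i-1}(r_{i-1})\ne\hat g_{i-1}(r_{i-1})=\hat g_i(0)+\hat g_i(1)$, hence $g_i\ne\hat g_i$; and at the last round, $g_n\ne\hat g_n$ together with the final check $g_n(r_n)=f(r_1,\dots,r_n)=\hat g_n(r_n)$ forces $r_n$ onto one of the at most $d$ agreement points. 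Thus, if the verifier accepts, there is some round $i\in[n]$ with $g_i\ne\hat g_i$ yet $g_i(r_i)=\hat g_i(r_i)$; since two distinct degree-$\le d$ univariate polynomials agree on at most $d$ points, each such event has probability at most $d/|\mathbb{H}|$, and a union bound over the $n$ rounds yields acceptance probability at most $nd/|\mathbb{H}|$. The one step needing care, and the main thing I would be careful to phrase precisely, is the conditioning: at round $i$ the polynomial $g_i$ is a function of the transcript through $r_1,\dots,r_{i-1}$ only, so the fresh uniform $r_i$ is independent of $g_i$ and the degree bound genuinely applies; I would make this rigorous by bounding $\Pr_{r_i}[g_i(r_i)=\hat g_i(r_i)\mid\text{transcript through }g_i]\le d/|\mathbb{H}|$ for every fixed such transcript with $g_i\ne\hat g_i$, and then summing.

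\textbf{Complexity.} These I would read off the protocol directly. Each $g_i$ is a degree-$\le d$ univariate polynomial, hence $d+1$ elements of $\mathbb{H}$, and each challenge is one element of $\mathbb{H}$, so the communication is $\poly[|\mathbb{H}|,n]$, the verifier sends $O(n\log|\mathbb{H}|)$ bits, and these messages are uniform and independent, making the protocol public-coin. The honest prover computes each $g_i$ by evaluating it at $d+1$ points and interpolating, each such evaluation being a sum of $f$ over the Boolean cube $\{0,1\}^{n-i}$; summing over $i$ this takes $\poly[|\mathbb{H}|^n]$ time, dominated by the roughly $2^n$ evaluations of $f$ in the early rounds. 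The verifier stores only the challenges $r_1,\dots,r_i$ seen so far, i.e.\ $O(n\log|\mathbb{H}|)$ space, and at each round performs a degree test plus a constant number of evaluations of $g_{i-1}$ and $g_i$, together with a single evaluation of $f$ at $(r_1,\dots,r_n)$ at the end, all in $\poly[|\mathbb{H}|,n]$ time, assuming (as in the intended applications) that $f$ is efficiently evaluable.
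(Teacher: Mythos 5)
The paper itself does not prove this theorem: it states it with a citation to Lund--Fortnow--Karloff--Nisan and Shamir, so there is no in-paper proof to compare against. Your reconstruction is the standard self-contained argument and it is correct. Completeness is immediate by playing the honest partial sums. For soundness, the chain ``the prover must deviate in round $1$; a deviation either is caught by a lucky challenge or forcibly propagates to the next round; the final oracle check catches a surviving deviation'' is exactly right, and you correctly isolate the one point that needs care --- that $g_i$ is determined by the transcript before $r_i$ is drawn, so the Schwartz--Zippel bound $d/|\mathbb{H}|$ applies to each of the $n$ ``lucky agreement'' events, giving $nd/|\mathbb{H}|$ by union bound. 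The complexity accounting matches the statement (you even implicitly correct the paper's typos: the paper writes $O(m\log|\mathbb{H}|)$ and $\poly[|\mathbb{F}|,n]$ where $m$ should be $n$ and $\mathbb{F}$ should be $\mathbb{H}$; your $O(n\log|\mathbb{H}|)$ and $\poly[|\mathbb{H}|,n]$ are the intended bounds), with the usual caveat, which you flag, that the verifier's running time assumes $f$ can be evaluated efficiently at a single point. In short: this is the canonical proof, and it is sound; the paper simply outsources it to the classical references.
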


We notice that the sum-check protocol is a multi-round interactive proof where
$\verifier$ only sends random coins (interpreted as field elements) as messages.
In this case, we can apply the \pfs{} transformation to it to achieve a one-round quantum
protocol with similar guarantees of the classical sum-check protocol in the
quantum bounded storage model. 

\begin{corollary}[2--Message Quantum Sum-Check Protocol]
  \label{thm:quantum-sum-check}
  There is a $2$--message quantum proof for the sumcheck problem in the bounded
  storage model with negligible soundness error against $q$--bounded provers for
  an appropriate bound $q$.
\end{corollary}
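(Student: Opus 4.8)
The plan is to obtain the corollary by a direct application of the round-compression transform $\pfs$ (Theorem~\ref{thm:soundness-main}) to the sum-check protocol, with a suitable choice of the field $\mathbb{H}$ and of the parameters of the underlying \dfss{} commitment. To avoid a notation clash between the number of variables of the sum-check instance and the number $n=n(\lambda)$ of qubits transmitted in one run of \dfss{}, write $\nu$ for the former. The sum-check protocol is a \emph{public-coin} interactive proof with $k=O(\nu)$ rounds, perfect completeness, and soundness error $\nu d/|\mathbb{H}|$ against arbitrary (even unbounded) provers, where $d$ is the individual-degree bound. It also has the syntactic shape required by $\pfs$: the prover sends the first message $g_1$ and the last message $g_\nu$, every prover message answers the previous challenge, and the verifier's final random evaluation point $r_\nu$ can be folded into its last challenge (with a trivial empty final prover message after padding). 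Public-coinness is exactly what makes $\pfs$ applicable here, since in $\pfs$ all the challenges are chosen uniformly and sent up front.

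First I would instantiate the field: since for every fixed instance $\nu$ and $d$ are polynomial in the input length, choosing $\mathbb{H}=\mathbb{F}_{2^t}$ with $t=t(\lambda)=\Theta(\lambda)$ (so that $\mathbb{F}_2\subseteq\mathbb{H}$ and $|\mathbb{H}|=2^{\Theta(\lambda)}$) drives the sum-check soundness error $\nu d/|\mathbb{H}|$ down to $2^{-\Omega(\lambda)}$, while the honest prover still runs in time $\poly[|\mathbb{H}|^{\nu}]$ and the verifier in time $\poly[|\mathbb{H}|,\nu]$ and space $O(\nu\log|\mathbb{H}|)$. Next I would apply $\pfs$ to this instantiation: by Theorem~\ref{thm:soundness-main}, the resulting protocol is a $1$--round (two-message) quantum interactive proof for the same decision problem with soundness error at most $\nu d/|\mathbb{H}|+k^{2}\cdot\delta$ against $q$--bounded provers, where $\delta$ is the binding error of the \dfss{} string commitment. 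Completeness is preserved because the honest prover of $\pfs$ is prepare-and-measure and inherits the perfect completeness of sum-check together with the correctness of \dfss{}. Finally, by Theorem~\ref{thm:binding-dfss} one has $\delta=2^{-\Omega(\lambda)}$ whenever $n/4-q\in\Omega(\lambda)$, so for a target memory bound $q=q(\lambda)$ it suffices to run \dfss{} with $n=n(\lambda)\ge 4q(\lambda)+c\lambda$ for a suitable constant $c$; since $k=O(\nu)=\poly[\lambda]$, the accumulated term $k^{2}\delta$ remains $2^{-\Omega(\lambda)}$, and the total soundness error is negligible. This is the meaning of ``an appropriate bound $q$'': any polynomial $q$ for which a still-polynomial commitment size $n$ satisfying $n/4-q\in\Omega(\lambda)$ can be chosen.

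The step needing the most care is the parameter bookkeeping rather than any new idea: one must check that the two independent error contributions — the sum-check soundness error $\nu d/|\mathbb{H}|$ and the accumulated binding error $k^{2}\delta$ — can be forced simultaneously below any inverse polynomial, which pins down $|\mathbb{H}|$ to be superpolynomial (indeed $2^{\Theta(\lambda)}$) and $n$ to exceed $4q$ by a linear-in-$\lambda$ margin, all consistent with the efficiency guarantees of both sum-check and \dfss{}. One should also verify that after the (routine) padding to fixed message lengths and to the ``prover sends first and last message'' convention, the hypotheses of Theorem~\ref{thm:soundness-main} are met verbatim.
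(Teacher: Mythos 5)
Your proposal is correct and follows the same route as the paper: the paper simply observes that sum-check is a public-coin multi-round interactive proof and invokes the $\pfs$ transform of Theorem~\ref{thm:soundness-main}, leaving the parameter choices implicit, whereas you spell out the bookkeeping (taking $|\mathbb{H}|=2^{\Theta(\lambda)}$ to make $\nu d/|\mathbb{H}|$ negligible, and $n/4-q\in\Omega(\lambda)$ so $k^2\delta$ is negligible by Theorem~\ref{thm:binding-dfss}) and the minor syntactic padding. One small simplification: the final evaluation point $r_\nu$ need not be folded into the challenges at all, since the prover never uses it — it can remain a local coin of the verifier used only in its final check.
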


The sum-check protocol is a crucial tool in results in complexity theory and
cryptography, especially regarding delegation of (classical) computation. 
We can easily replace the classical sum-check protocol by its quantum version to
to achieve round-efficient protocols, that we describe below.

\begin{corollary}
  Every language in \pspace{} has a $2$--message quantum protocol in the bounded storage model.
\end{corollary}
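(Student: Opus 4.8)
The plan is to combine Shamir's theorem $\ip=\pspace$~\cite{Shamir} with the round-compression transform $\pfs$ of \Cref{thm:soundness-main}. Every $L\in\pspace$ reduces to the $\pspace$-complete language $\mathsf{TQBF}$, and the standard interactive proof for $\mathsf{TQBF}$ — obtained by arithmetizing the quantified Boolean formula and running a sequence of sum-check and product-check phases glued together by the linearization operators — is a \emph{public-coin} interactive proof with $\poly[\lambda]$ rounds, a polynomial-time verifier, and a (computationally unbounded) honest prover. First I would amplify its soundness error to $\negl[\lambda]$: either by taking $|\mathbb{H}|\ge 2^{\lambda}$, which keeps the prover's running time and the communication polynomial in the input length plus $\lambda$, or by $\omega(\log\lambda)$-fold parallel repetition. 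Call the resulting public-coin, $\poly[\lambda]$-round, $\negl[\lambda]$-sound proof system $\Pi_L$.

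Next I would apply $\pfs$ to $\Pi_L$. Since $\Pi_L$ is public-coin, its verifier can sample all of its challenges $c_1,\dots,c_k$ up front and send them inside the registers $C_1,\dots,C_k$ together with the \dfss{} commitment registers $P_1,\dots,P_k$ in a single message, which is exactly the interface $\pfs$ requires. Choosing the \dfss{} parameters so that $n/4-q\in\Omega(\lambda)$, \Cref{thm:soundness-main} (the ``unbounded prover'' case) yields a $1$-round, i.e.\ $2$-message, quantum interactive proof $\pfs[\Pi_L]$ whose soundness error against every $q$-bounded quantum prover is at most $\negl[\lambda]+k^2\delta=\negl[\lambda]$, where $\delta$ is the negligible binding error of \dfss{} from \Cref{thm:binding-dfss}. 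Completeness is immediate and preserved: the honest prover is prepare-and-measure — it commits to each message $a_i$ by measuring $P_i$, measures $C_i$ to read $c_i$, computes $a_{i+1}$ with the next-message function of $\Pi_L$, and opens all commitments at the end — so it needs no quantum memory and succeeds whenever the honest prover of $\Pi_L$ does. (Equivalently, one may phrase this as substituting the $2$-message quantum sum-check protocol of \Cref{thm:quantum-sum-check} for each sum-check phase of Shamir's protocol.) This gives $\pspace=\qiptwo^{\bqsm}$.

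The only point that needs real care — and the main obstacle — is the public-coin requirement: the compressed verifier must commit to all its challenges before seeing any prover message, so the base protocol's challenges must be independent of the prover's messages. This is handled because the interactive proof for $\mathsf{TQBF}$ is already public-coin; alternatively one can invoke the equivalence between private- and public-coin interactive proofs. Everything else is a direct instantiation of \Cref{thm:soundness-main}: the honest prover needing no quantum memory is consistent with the prepare-and-measure paradigm, and the fact that soundness holds only against memory-bounded provers is to be expected in light of the impossibility result of \Cref{sec:impo-aux}, which rules out $2$-message zero-knowledge quantum proofs for hard languages against unbounded provers.
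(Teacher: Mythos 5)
Your proof is correct and takes essentially the same route as the paper: invoke $\ip=\pspace$ to get a public-coin polynomial-round interactive proof for $\mathsf{TQBF}$, then apply the $\pfs$ round-compression transform of Theorem~\ref{thm:soundness-main} (the paper phrases this as replacing the sum-check sub-protocol with Corollary~\ref{thm:quantum-sum-check}, but the intended instantiation is the same: compress the whole public-coin protocol in one shot). You also correctly flag the one implicit hypothesis of $\pfs$ — that the base protocol must be public-coin so the verifier can sample and transmit all challenges $c_1,\dots,c_k$ up front — which the paper leaves unstated in the theorem but is clearly required by the construction.
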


Notice that if we do not consider provers with bounded memory, we have that $\pspace{} = \qip(3)$, and if we define $\qiptwo^{\mathsf{BQSM}}$ as the class of problems with two--message quantum interactive proof systems where the prover has bounded quantum memory (but unbounded computational power), we have that   
$\pspace{} = \qiptwo^{\mathsf{BQSM}}$.

More recently, the sum-check protocol has been also used to achieve protocols
for doubly-efficient delegation of computation. In this setting, the goal is to
achieve a protocol where $\verifier$ interacts with a $\prover$ in order to
delegate the computation of an arithmetic circuit with the following properties:
\begin{itemize}
  \item The honest prover's computation should not be much more costly than
    running the original circuit.
  \item The running time of the verifier should be linear in the input size of
    the circuit.
\end{itemize}

Such a protocol was originally proposed by Goldwasser, Kalai and
Rothblum~\cite{GKR15} and later improved by Reingold, Rothblum and
Rothblum~\cite{RRR21}

\begin{lemma}[Corollary 1.4 of \cite{GKR15}]
Let $L$ be a language in \pol{}, that is, one that can be computed by a
deterministic Turing machine in time $poly(n)$. There is an interactive proof
for $L$ where:
\begin{itemize}
\item the honest prover runs in time $\poly$ and the verifier in time
  $\poly$ and space $O(log(n))$;
\item the protocol has perfect completeness and soundness $1/2$; and
\item the protocol is public coin, with communication complexity $\poly$.
\end{itemize}
\end{lemma}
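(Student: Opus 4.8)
The plan is to invoke the Goldwasser--Kalai--Rothblum protocol~\cite{GKR15} and verify that it meets each of the stated resource bounds. First I would recall that a language decidable by a deterministic Turing machine in time $T(n)=\poly$ is decidable by a \emph{logspace-uniform} layered arithmetic circuit $C$ over a field $\mathbb{H}$ (with $\mathbb{F}_2\subseteq\mathbb{H}$ and $|\mathbb{H}|=\poly$) whose size, width and depth are all $\poly$; the uniformity is the crucial feature and follows from the standard, highly regular translation of a Turing machine into a circuit. For each layer $i\in\{0,\dots,D\}$ I would encode the vector of gate values as a function $V_i\colon\{0,1\}^{s}\to\mathbb{H}$ with $s=O(\log n)$, and work with its multilinear extension $\widetilde V_i$ together with the low-degree extensions $\widetilde{\mathrm{add}}_i,\widetilde{\mathrm{mult}}_i$ of that layer's wiring predicates.

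The heart of the argument is a layer-by-layer reduction of claims. Given the verifier's current claim ``$\widetilde V_i(z_i)=v_i$'' for a random $z_i$, one writes $\widetilde V_i(z_i)$ as a sum over $\{0,1\}^{2s}$ of a low-degree expression in $\widetilde{\mathrm{add}}_i,\widetilde{\mathrm{mult}}_i$ and in $\widetilde V_{i+1}$ evaluated at the labels of the two child gates, and then runs the sum-check protocol (the one analyzed earlier in this paper) on that expression. This reduces the claim to evaluating $\widetilde V_{i+1}$ at two points $p,q$; I would collapse these to a single point by having the prover send the restriction of $\widetilde V_{i+1}$ to the line through $p$ and $q$ and having the verifier pick a uniformly random point on that line, producing a fresh claim ``$\widetilde V_{i+1}(z_{i+1})=v_{i+1}$''. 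Iterating through all $D$ layers, the final claim concerns $\widetilde V_D$, the low-degree extension of the fixed input, which the verifier evaluates on its own at $z_D$. Completeness follows from correctness of sum-check and of the line step; soundness follows by a union bound over the $D$ sum-check instances and the $D$ line steps, each erring with probability $\poly/|\mathbb{H}|$, so choosing $|\mathbb{H}|$ a large enough polynomial drives the total error below $1/2$, and the protocol is public-coin since the verifier only ever transmits uniformly random field elements.

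For the resource bounds, the honest prover computes in each round of each sum-check a partial sum, over a Boolean subcube, of an explicit low-degree polynomial in already-known layer values; exploiting the structure of $C$ this costs $\poly$ in total, which gives the first bullet. The verifier per round only evaluates a constant-degree univariate polynomial and compares two elements of $\mathbb{H}$, and at each layer boundary it must evaluate $\widetilde{\mathrm{add}}_i$ and $\widetilde{\mathrm{mult}}_i$ at a point of its own choosing; the logspace uniformity of $C$ is exactly what lets these evaluations run in space $O(\log n)$. The communication is $\poly$, since each round transmits one low-degree univariate polynomial over $\mathbb{H}$.

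The step I expect to be the real obstacle -- the reason this is a ``corollary'' of~\cite{GKR15} rather than a one-line citation -- is the $O(\log n)$ space bound on the verifier. This needs (i) that the whole interaction be carried out while storing only $O(1)$ field elements plus a logarithmic-size counter, never a full layer vector or the transcript, so that any needed quantity is recomputed on the fly; and (ii) an explicit logspace algorithm computing $\widetilde{\mathrm{add}}_i,\widetilde{\mathrm{mult}}_i$ from the logspace description of $C$ that the Turing-machine-to-circuit construction produces. A self-contained proof would have to exhibit both; they are established in~\cite{GKR15}, which is why we cite rather than reprove the statement.
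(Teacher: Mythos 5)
The paper does not reprove this lemma; it simply cites Corollary 1.4 of \cite{GKR15} as a black box, exactly as your final paragraph concludes one should. Your reconstruction of the GKR protocol (logspace-uniform layered arithmetic circuit for a poly-time Turing machine, layer-by-layer sum-check reductions with the line trick to collapse two evaluation points, and on-the-fly evaluation of the wiring predicates $\widetilde{\mathrm{add}}_i,\widetilde{\mathrm{mult}}_i$ to obtain the $O(\log n)$ verifier-space bound) accurately captures the content behind the citation, so the two treatments agree.
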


Again here, the interaction between the verifier and the prover consists of
multiple instances of the classical sum-check protocol. Therefore, using
\Cref{thm:quantum-sum-check}, we achieve the following.

\begin{corollary}
Let $L$ be a language in \pol{}. $L$ has a quantum interactive proof in the bounded
  storage model where:
  \begin{itemize}
 \item the honest prover runs in time $\poly$ and the verifier in time
    $\poly$ and space $O(log(n))$;
  \item the protocol has perfect completeness and soundness $1/2$; and
  \item there is one round of communication.
  \end{itemize}
\end{corollary}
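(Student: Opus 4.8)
The plan is to instantiate the round-compression transform $\pfs$ of \Cref{sec:multi-round} with the doubly-efficient interactive proof of~\cite{GKR15} (the Lemma stated just above) as the underlying protocol $\Pi$, and then check that each of the three advertised properties survives the transform. Equivalently, since the GKR verifier--prover interaction is a composition of $\poly[n]$ instances of the sum-check protocol, one can invoke \Cref{thm:quantum-sum-check} for each sub-protocol; we phrase it as a single application of $\pfs$ to the whole $\poly[n]$-round GKR proof, which is what \Cref{thm:soundness-main} is stated for.

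\emph{Completeness and soundness.} The GKR protocol is a $k$--round interactive proof with $k=\poly[n]$, perfect completeness, and soundness $1/2$, sound against unbounded provers. By \Cref{thm:soundness-main}, instantiated with the $\delta$--binding \dfss{} commitment on sufficiently many qubits (so that $n/4-q\in\Omega(\lambda)$ for the chosen memory bound $q$ and a security parameter $\lambda=\poly[n]$), $\pfs[\Pi]$ is a $1$--round quantum interactive proof with soundness error $1/2+k^2\delta = 1/2+\negl[\lambda]$. The extra negligible term is absorbed by first driving GKR's soundness slightly below $1/2$ with a constant number of parallel repetitions, which keeps the round count at $\poly[n]$. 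Perfect completeness is preserved: in $\pfs[\Pi]$ the honest prover recovers each challenge $c_i$ by measuring the register $C_i=\ket{c_i}$ in the computational basis (a deterministic outcome), the \dfss{} commitment has perfect correctness so every honest opening is accepted, and the reconstructed transcript $(a_1,\dots,a_{k+1},c_1,\dots,c_k)$ is exactly an honest GKR transcript.

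\emph{Prover and verifier efficiency.} The honest $\pfs[\Pi]$ prover only measures each received register $P_i$ in the basis dictated by $a_i$ (the \dfss{} commit step), measures each $C_i$, evaluates the GKR next-message functions $\prover^{\Pi}_{i+1}$ on the partial transcript, and outputs the \dfss{} opening strings; all of this is $\poly[n]$ time since the GKR prover is. For the verifier, step~(iii) of \Cref{sec:multi-round}'s verification is just a run of the GKR verifier on the revealed transcript, which by the cited lemma is $\poly[n]$ time and $O(\log n)$ space --- the only caveat being that the GKR verifier expects its messages in a fixed streaming order, which is available since the transcript is laid out in that order, so it never holds more than one polynomial $g_i$ at a time. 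Generating and sending the commitment states $\ket{x_i}_{\theta_i}$ and the challenges $c_i$ uses the verifier's (two-way read-only) random tape, which does not count against work space; checking the $k$ \dfss{} openings $z_i=(a_i,x'_i)$ is a bitwise comparison ``$x'_{i,j}=x_{i,j}$ whenever $(\theta_i)_j=(a_i)_j$'' that can be streamed over the random tape one bit at a time in $O(\log n)$ work space. Hence the verifier runs in $\poly[n]$ time and $O(\log n)$ space, and the protocol has a single round by construction of $\pfs$.

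\emph{Main obstacle.} The delicate point is precisely this space-preservation argument: one must ensure that collapsing the sequential GKR interaction into the one-shot $\pfs$ exchange does not force the verifier to buffer the full transcript or all of its commitment randomness in work memory, and that the \dfss{} opening checks --- which are absent from plain GKR --- fold into a logarithmic-space streaming computation over the random tape. Everything else (perfect completeness, soundness via \Cref{thm:soundness-main}, and the prover's running time) is bookkeeping on top of already-established properties of $\pfs$ and of the GKR protocol.
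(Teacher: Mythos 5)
The proposal is correct and follows the paper's route: instantiate the round-compression transform $\pfs$ (Theorem~\ref{thm:soundness-main}, or equivalently the quantum sum-check compression of Corollary~\ref{thm:quantum-sum-check}) with the doubly-efficient GKR protocol. The paper's own justification is a single sentence; your write-up is considerably more careful and fills in two details the paper leaves implicit. First, the soundness-parameter bookkeeping: $\pfs$ degrades soundness from $1/2$ to $1/2+k^2\delta$, and your fix (a constant number of parallel repetitions of GKR before applying $\pfs$) is the right way to land back at $1/2$; the paper never mentions this. Second, and more substantively, you correctly identify the verifier-space question as the delicate point: $\pfs[\Pi]$'s verifier must generate $\poly[n]$ bits of BB84 commitment randomness up front and then re-check all the \dfss{} openings at the end, so preserving $O(\log n)$ work space requires treating the verifier's randomness as a re-readable tape that does not count against work space, with the opening checks streamed bit-by-bit against it. The paper silently elides this, and it is precisely the kind of model convention a careful statement of the corollary should pin down. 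One small phrasing note: collapsing the GKR interaction by compressing each constituent sum-check separately (as the paper's prose suggests) would still leave a multi-round protocol; your formulation of applying $\pfs$ once to the entire $\poly[n]$-round GKR interaction is the correct reading of what Theorem~\ref{thm:soundness-main} delivers.
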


\section*{Acknowledgements}
\label{sec:acks}

We are grateful to Serge Fehr for enlightening discussions on the all-but-one
uncertainty relation (Theorem~\ref{thm:all-but-one}) and to Louis Salvail for
discussions on simulating bounded storage adversaries.
ABG is supported by ANR JCJC TCS-NISQ ANR-22-CE47-0004 and by the PEPR integrated
project EPiQ ANR-22-PETQ-0007 part of Plan France 2030.
This work was done in part while ABG was visiting the Simons Institute for the Theory of Computing.
\printbibliography

\appendix

\section{A New String Commitment Scheme in the \bqsm{}}
\label{sec:new-bqsm-bc}

The starting point of our new protocol is a more powerful uncertainty relation
found in~\cite{bouman_all-but-one_2013} and described below. We present our
protocol, which we call $\ourbc$ for the all-but-one uncertainty relation it
crucially relies on, and prove its security.

\subsection{All-but-one Uncertainty Relation}
\label{sec:all-but-one}

We use an uncertainty relation from~\cite{bouman_all-but-one_2013}. It
states that for a given quantum state $\rho$ and a family of bases
$\mathcal{B}_1,\dots,\mathcal{B}_n$ that have a small overlap, there exists a
basis $J'$ (defined as a random variable whose distribution depends on $\rho$)
such that for any other basis $J\neq J'$, the uncertainty of the measurement
outcome in basis $J$ is high.

Formally, let $\mathcal{B}_i:= \{\ket x_i \mid x\in\bool^N\}$ and define the
maximal overlap of bases $\mathcal{B}_1,\dots,\mathcal{B}_n$ as $c:= \max\{ \bra
x_i\ket y_j\mid x,y\in\bool^N, i\neq j \}$. Let $\delta:= -\frac 1n \log c^2$.
The uncertainty relation is as follows.

\begin{theorem}[Theorem 9 of~\cite{bouman_all-but-one_2013}]\label{thm:all-but-one}
  Let $\rho$ be an arbitrary $N$--qubit state, let $J$ be a random variable over
  $[n]$ with distribution $P_J$, and let $X$ be the outcome of measuring $\rho$
  in basis $\mathcal{B}_J$. Then for any $0<\epsilon<\delta/4$, there exists a
  random variable $J'$ with joint distribution $P_{JJ'X}$ such that
  \begin{itemize}
  \item $J$ and $J'$ are independent and
  \item there exists an event $\Psi$ with $\Pr[\Psi]\geq 1-2\cdot 2^{-\epsilon
      n}$ such that 
  \end{itemize}
  \begin{equation}
    \label{eq:all-but-one}
    H_\infty(X| J=j,J'=j',\Psi) \geq \left( \frac \delta2 -2\epsilon \right)N-1
  \end{equation}
  for all $j,j'\in[n]$ with $j\neq j$ and $P_{JJ'|\Psi}(j,j')>0$.
\end{theorem}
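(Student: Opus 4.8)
This is Theorem~9 of~\cite{bouman_all-but-one_2013}, and the plan is to reproduce its proof. I would work with a purification $\ket\Phi_{AB}$ of $\rho$, let $X$ be the result of measuring $A$ in basis $\mathcal B_J$, and construct the exceptional index $J'$ as the outcome of a measurement on the purifying register $B$. Defining $J'$ this way has two structural payoffs: it commutes with the external choice of $J$, so $J$ and $J'$ are automatically independent; and a projective $B$-measurement does not disturb $\rho_A$, so the $(J,X)$-marginal of the extended distribution $P_{JJ'X}$ is exactly the real one, as the theorem demands. The $B$-measurement is to be designed so that, conditioned on $J'=j'$ (with $j'\neq 0$), the residual $A$-state $\rho^{j'}$ is essentially supported on vectors of the single basis $\mathcal B_{j'}$; the guiding fact is that pairwise near-unbiasedness (overlap at most $c=2^{-\delta n/2}$) makes it impossible for $\rho$ to be simultaneously near-resonant with two distinct bases, so one flag $J'$ is enough.

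Concretely, fix a threshold $t=2^{-\epsilon N}$ and, for each $j$, let $\Lambda_j$ be the projector onto the span of the ``heavy'' outcomes of basis $\mathcal B_j$, i.e.\ those $\ket{x}_j$ with $\bra{x}_j\rho\ket{x}_j>t$, so $\mathrm{rank}\,\Lambda_j\le 1/t$. Near-unbiasedness yields $\|\Lambda_j\Lambda_{j'}\|\le c/t$ for $j\neq j'$ and hence $\|\sum_j\Lambda_j\|\le 1+\negl$ (this is where the hypothesis $\epsilon<\delta/4$ is used), so after a negligible perturbation $\{\Lambda_0:=\id-\sum_j\Lambda_j,\Lambda_1,\dots,\Lambda_n\}$ is a genuine POVM on $A$; pulling it back through $\ket\Phi_{AB}$ gives a projective $B$-measurement with the same outcome law and the same conditional states $\rho^{j'}$, satisfying $\sum_{j'}\Pr[J'{=}j']\,\rho^{j'}=\rho$. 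Now for $j\neq j'$ with $j'\neq 0$, any outcome $y$ of the basis-$\mathcal B_j$ measurement of $\rho^{j'}$ occurs with probability at most $\|\Lambda_{j'}\ket{y}_j\|^2/\Pr[J'{=}j']\le (c^2/t)/\Pr[J'{=}j']$, because $\Lambda_{j'}\ket{y}_j$ is a combination of at most $1/t$ orthonormal vectors of $\mathcal B_{j'}$ with coefficients of modulus at most $c$. After bookkeeping this is at most $2^{-(\delta/2-2\epsilon)N}$, giving $H_\infty(X\mid J{=}j,J'{=}j',\Psi)\ge(\delta/2-2\epsilon)N-1$, where the good event $\Psi$ discards the perturbation error, the branch where $\Pr[J'{=}j']$ is itself exponentially small, and the branch $J'=0$ (there $\rho$ is far from \emph{every} basis, so the bound already holds for all $j$); a union bound gives $\Pr[\neg\Psi]\le 2\cdot 2^{-\epsilon n}$. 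The collision-entropy uncertainty relation for a uniformly random basis can be invoked here to tighten constants but is not strictly needed.

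The main obstacle is the ``all-but-one'' improvement itself. A naive $n$-fold application of the min-entropy splitting lemma (Lemma~\ref{lem:splitting}) would only deliver entropy $\Omega(\alpha/n)$, and defining $J'$ as a deterministic $\arg\max$ rules out only the single basis of largest maximal probability while leaving the second largest essentially unconstrained --- both are too weak, and the latter would also correlate $J'$ with the measurement data, breaking independence. The purification-plus-heavy-subspace construction fixes all of this at once, but it forces one to (i) turn the merely nearly-orthogonal heavy projectors into an honest POVM at negligible cost, and (ii) control $\rho^{j'}$ tightly enough that its measurement in \emph{every} other basis is within $2^{-\Omega(\delta N)}$ of uniform; making $c=2^{-\delta n/2}$, the threshold $t=2^{-\epsilon N}$ and the hypothesis $\epsilon<\delta/4$ cooperate so that (i), (ii) and the exponentially small $\Pr[\neg\Psi]$ all hold simultaneously is the delicate part of the argument.
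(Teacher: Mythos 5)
This theorem is imported verbatim from \textcite{bouman_all-but-one_2013}; the present paper does not prove it, so there is no in-paper argument to compare against. Your reconstruction captures the right high-level ingredients (purify $\rho$, build ``heavy'' subspace projectors $\Lambda_{j'}$ from the threshold $t=2^{-\epsilon N}$, control the overlap of the $\Lambda_{j'}$'s, and define $J'$ by a measurement on the purifying register $B$ so that independence from $J$ and preservation of the $(J,X)$-marginal come for free). The heavy-set machinery is genuinely the engine of Bouman et al.'s Theorem~9, built on top of their Theorem~7.

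However, the decisive step of your argument is wrong as written. You claim that pulling the $A$-POVM $\{\Lambda_{j'}\}$ back through $\ket\Phi_{AB}$ yields a $B$-measurement with ``the same conditional states,'' and then bound $\Pr[X=y\mid J=j,J'=j']=\bra{y}_j\rho^{j'}\ket{y}_j\le\|\Lambda_{j'}\ket{y}_j\|^2/\Pr[J'=j']$. That inequality would be valid for the L\"uders post-measurement state $\Lambda_{j'}\rho\Lambda_{j'}/p_{j'}$, but what the purification/steering correspondence actually delivers (for a $B$-measurement that realizes the outcome law $p_{j'}=\trace{\Lambda_{j'}\rho}$ without disturbing the $A$-marginal) is the state $\rho^{j'}=\sqrt{\rho}\,\Lambda_{j'}\sqrt{\rho}/p_{j'}$. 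For that state the operator inequality $\sqrt{\rho}\,\Lambda_{j'}\sqrt{\rho}\le\Lambda_{j'}$ is false in general: take $\rho=\proj{0}$, $\Lambda_{j'}=\proj{+}$, $\ket{y}_j=\ket{-}$; then $\bra{-}\sqrt{\rho}\Lambda_{j'}\sqrt{\rho}\ket{-}=\tfrac14$ while $\bra{-}\Lambda_{j'}\ket{-}=0$, so your bound has a zero right-hand side and a positive left-hand side. The two options you silently conflate are mutually exclusive: measuring $A$ directly (which would give L\"uders conditional states and justify your bound) perturbs $\rho$ and therefore changes the $(J,X)$-marginal, which the theorem forbids; measuring $B$ preserves the marginal but gives the $\sqrt{\rho}$-sandwiched conditional states, for which your overlap bound does not follow. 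Repairing this -- either by proving a suitable bound on $\bra{y}_j\sqrt{\rho}\Lambda_{j'}\sqrt{\rho}\ket{y}_j$ directly, or by reorganizing the argument so that the heavy-set inequality of Theorem~7 is applied to $Q^j(\mathcal{L}^j)$ in the classical probability space of $(J,J',X)$ rather than via a conditional density operator -- is exactly the nontrivial content of Bouman et al.'s proof that the proposal skips.
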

As emphasized in~\cite{bouman_all-but-one_2013}, the distribution of $J$ does
not need to be set for $J'$ to be well defined. In particular, the distribution
of $J'$ is fully determined by $\rho$. 

We now present how to efficiently construct a family of bases with large overlap
$\delta$. Let $G$ be the generator matrix of a linear $[N,n,d]$--error
correcting code. Then for the family of bases defined by
\begin{equation}
  \label{eq:defBj}
  \mathcal{B}_j:=\{(H^{c_1}\otimes\dots\otimes H^{c_N}) \ket x\mid x\in\bool^N,
  c=G\cdot j\}
\end{equation}
for $j\in\bool^n$ satisfies $\delta=\frac dN$.

\subsection{The Commitment Scheme}
\label{sec:abo-bc}

Our new bit commitment scheme is presented below. The intuition behind the
scheme is that the basis used by the committer to commit to a string $a$ should
be far from the basis of $a'\neq a$. Therefore, we can use code words of an
error correcting code as the bases to ensure this distance holds. The original
\dfss{} scheme (presented in Section~\ref{sec:bc-bqsm}) can be seen as employing
the repetition code (where one commits to a bit $b$ by measuring in basis
$bb\dots b$).

\begin{quote}
  \rule{\linewidth}{1pt}
  \begin{center}
    {\bf Protocol} \ourbc{}
  \end{center}
  {\bf Setup: }  The generator matrix $G$ of a
  $[N,n,d]$ linear error correcting code.

  {\bf Commit phase:}
  \begin{enumerate}
  \item \verifier{} sends $\ket{x}_\theta$ for $x\in\bool^N$ and
    $\theta\in\{+,\times\}^N$ to the committer.
  \item \committer{} commits to a string $a\in\bool^n$ by measuring each qubit
    $i$ in basis $(G\cdot a)_i$, obtaining a measurement outcome $z\in\bool^N$. 
  \end{enumerate}
  \begin{center}
    \emph{--- Memory bound applies ---}
  \end{center}

  {\bf Reveal phase:} 
  \begin{enumerate}
    \setcounter{enumi}{2}
  \item To open the commitment, \committer{} sends $a$ and $z$ to \verifier{} who
    checks that $z_i=x_i$ whenever $\theta_i=(G\cdot a)_i$.
  \end{enumerate}

  \rule{\linewidth}{1pt}
\end{quote}

Intuitively, we would like the basis $J'$ from Theorem~\ref{thm:all-but-one} to
define the value to which the sender is committed in the sense of
Definition~\ref{def:binding-DFRSS}. The proof would have the verifier purify its
actions and perform the measurement in basis $a$ when the sender opens the
commitment. Theorem~\ref{thm:all-but-one} would ensure the existence of an $a'$
such that the sender is committed to $a'$. There is a subtle issue that prevents
us from applying this argument: the random variable $J'$ whose existence is
stated by Theorem~\ref{thm:all-but-one} exists in the probability space of $X$,
the measurement outcome of the receiver in the opening phase. Therefore, we
cannot assert that $J'$ exists and that the sender is committed to it in the
sense of Definition~\ref{def:binding-DFRSS}. Nevertheless, the techniques
from~\cite{bouman_all-but-one_2013} allows us to prove a weaker statement,
namely that the commitment scheme is sum-binding. 

\begin{theorem}\label{thm:binding}
 The string commitment protocol
  \ourbc{} is sum--biding:
  \begin{equation}
    \sum_a p_a\leq 1+\negl
  \end{equation}
\end{theorem}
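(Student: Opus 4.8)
The plan is to reduce the sum-binding of \ourbc{} to the all-but-one uncertainty relation (Theorem~\ref{thm:all-but-one}), following the template established for \dfss{} in~\cite{damgard_tight_2007} but using the code-word bases $\mathcal{B}_j$ of~\eqref{eq:defBj} in place of the repetition code. First I would purify the verifier's action: instead of sending $\ket x_\theta$ for uniformly random $x,\theta$, the verifier prepares $N$ EPR pairs, sends one half of each to the committer, and delays measuring its halves until the opening phase. When the committer later announces $(a,z)$, the verifier measures its registers in the basis $\mathcal{B}_a$ (i.e.\ qubit $i$ in basis $(G\cdot a)_i$) to obtain an outcome $x$, and accepts iff $z_i = x_i$ on the coordinates where $\theta_i = (G\cdot a)_i$ — but in the purified picture the ``$\theta$'' check becomes: the verifier picks a random subset and compares, which (as in the proof of Theorem~\ref{thm:weak-bc}) I would relax to rejecting whenever $\Delta(z,x) > \gamma N$ for a suitable constant $\gamma$, at the cost of an additive $2^{-\gamma N}$ per opening.

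Next I would apply Theorem~\ref{thm:all-but-one} to the committer's post-memory-bound state $\rho_E$ (of dimension $2^q$), with $J$ the random variable describing which basis $a\in\bool^n$ the opening selects: it yields a random variable $J'$, independent of $J$, and an event $\Psi$ of probability $\geq 1-2\cdot 2^{-\epsilon N}$ on which $H_\infty(X \mid J=a, J'=a', \Psi) \geq (\delta/2 - 2\epsilon)N - 1$ for all $a\neq a'$. Here $X$ is the receiver's opening-phase measurement outcome on its EPR halves; crucially, because $J$ and $J'$ are independent, $J'$ does not depend on which opening the committer eventually attempts. Conditioning on $J' = a'$: for any $a \neq a'$, the committer must produce a $z$ matching $X$ on most coordinates, but $X$ has min-entropy $\geq (\delta/2 - 2\epsilon)N - 1$ even given all the committer's $q$ qubits of side information, so the probability it succeeds is at most $2^{-(\delta/2-2\epsilon)N + 1} \cdot 2^{h(\gamma)N} + 2^{-\gamma N}$ (the $2^{h(\gamma)N}$ factor counting the $z$'s within relative distance $\gamma$ of $X$). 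Hence $p_a \le \Pr[J' = a] + \negl$ for each $a$, and summing, $\sum_a p_a \le \sum_a \Pr[J'=a] + 2^n\cdot\negl \le 1 + \negl$, where one must check the bound $2^n \cdot \negl$ is still negligible — this is where the parameters $n = \Theta(\lambda)$, $N = \Theta(\lambda)$, $d = \Theta(N)$ (so $\delta$ a constant), and $q$ with $(\delta/2)N - q \in \Omega(\lambda)$ are chosen, exactly so that $(\delta/2 - 2\epsilon)N - h(\gamma)N - q$ dominates $n$.

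The main obstacle, already flagged in the paragraph preceding Theorem~\ref{thm:binding}, is precisely that $J'$ lives in the probability space of the \emph{opening-phase} measurement outcome $X$, not in the space of the commit-phase state. This is why we can only get sum-binding and not the stronger Definition~\ref{def:binding-DFRSS}: we cannot name a single random variable $B'$ defined right after the commit phase to which the committer is bound. The sum-binding statement sidesteps this because there the committer commits to a single fixed $\rho_{AB}$ and we only need to bound $\sum_a p_a$ where each $p_a$ is computed against the same state; the independence of $J$ and $J'$ lets us swap in the worst-case $a$ after $J'$ is already fixed. A secondary technical point is correctly tracking the union of the uncertainty-relation failure probability $2\cdot 2^{-\epsilon N}$, the distance-relaxation error $2^{-\gamma N}$ (incurred twice, once per opening), and the counting factor $2^{h(\gamma)N}$ — but these are routine once the parameters above are pinned down, so I would state them as a lemma and defer the arithmetic.
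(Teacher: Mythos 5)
Your proposal takes a genuinely different route from the paper's proof, and that route has gaps that the paper deliberately avoids.

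The paper does \emph{not} prove sum-binding by invoking Theorem~\ref{thm:all-but-one} (Theorem~9 of~\cite{bouman_all-but-one_2013}) and the random variable $J'$. That is precisely the approach the preceding paragraph flags as problematic: $J'$ lives in the probability space of the opening-phase outcome $X$, so it is not a well-defined quantity before the verifier measures. Instead, the paper's proof purifies the verifier, fixes $W=w$, and for each $a$ partitions the outcome space into a large-probability set $\mathcal{L}^a=\{x : Q^a(x)>2^{-\epsilon N}\}$ and its complement $\mathcal{S}^a$, with corresponding projectors $\mathbb{L}_a,\mathbb{S}_a$. It then uses \emph{Theorem~7} of~\cite{bouman_all-but-one_2013} (a bound on $\sum_a Q^a(\mathcal{L}^a)$, a statement about sums of measures, not about $J'$) to get $\sum_a Q^a(\mathcal{L}^a)\le 1+2^{n-d/2+(1-\epsilon)N}$, and handles the $\mathbb{S}_a$-part separately by a direct dimension-counting argument on the $q$-qubit register $E$. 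No $J'$ ever appears. So the ``techniques from~\cite{bouman_all-but-one_2013}'' the paper alludes to are the Theorem-7 sum bound, not the entropic version you are using.

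Within your own argument there are concrete gaps. First, you propose applying Theorem~\ref{thm:all-but-one} ``to the committer's post-memory-bound state $\rho_E$,'' but that theorem must be applied to the state the verifier will measure, i.e.\ the receiver's EPR halves $\rho_V^w$; $\rho_E$ is not the object being measured in a code-word basis. Second, Theorem~\ref{thm:all-but-one} gives $H_\infty(X\mid J{=}j, J'{=}j',\Psi)$ for a \emph{classical} conditioning; it says nothing about the committer's $q$ qubits of side information, so your claim that ``$X$ has min-entropy $\ge(\delta/2-2\epsilon)N-1$ even given all the committer's $q$ qubits'' is not what the theorem states — you would need a chain-rule loss of $q$ (as in the min-entropy-splitting argument of Corollary~\ref{thm:bqs-ot}), which never appears in your bound. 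Third, and most seriously, Theorem~\ref{thm:all-but-one} applies when $J$ is drawn from some distribution $P_J$ and then $\rho$ is measured in basis $\mathcal{B}_J$; in the protocol the ``basis'' $A$ is the outcome of the committer's POVM on $E$, which is entangled with $V$, so $J=A$ is not a freely chosen variable and the theorem's hypotheses are not met. Your handwave that ``the independence of $J$ and $J'$ lets us swap in the worst-case $a$ after $J'$ is already fixed'' is exactly the step that would need to be made rigorous, and the paper's own commentary indicates the authors could not make it go through — which is why they switch to the Theorem-7 route for sum-binding. To repair your approach you would essentially have to reprove a version of Theorem~7 from Theorem~9, at which point you are better off using the paper's argument directly.
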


\begin{proof}
  We consider an equivalent protocol (from the committer's point of view) where
  the verifier purifies its actions:
  \begin{enumerate}
  \item {\bf Commit phase:} \verifier{} prepares $N$ EPR pairs
    $\bigotimes_{i=1}^N \frac 1{\sqrt 2}(\ket{00}_{P_iV_i}+ \ket{11}_{P_iV_i})$
    and sends registers $P_1\dots P_N$ to \committer{}.
  \item {\bf Reveal phase:} After receiving $(a,z)\in\bool^{2N}$ from
    \committer{}, \verifier{} measures its register $V$ in basis $a$ and checks
    that the result $x$ matches $z$ for each position $i$ in a random sample
    $I\subseteq [N]$.
  \end{enumerate}

  Let $\mathcal{E}_{P\rightarrow EW}$ be the CPTP map describing the partial
  measurement of $\tilde\committer$ after the commit phase, where $\dim E\leq
  2^q$. The joint state of $\verifier$ and $\tilde\committer$ is the density
  operator
  \begin{equation}
    \label{eq:rho-commit}
    \rho_{EWV} := \sum_{w} P_{W}(w) \proj{w} \otimes\rho_{EV}^{w}
    = (\mathcal{E}_P\otimes \id_V)(\ket{EPR}^{\otimes N}_{PV})\enspace.
  \end{equation}
  In general, $\tilde\committer$ may perform a measurement on its quantum
  register $E$ to decide which string $a$ to announce in the reveal phase. The
  most general strategy for $\tilde \committer$ is a POVM
  $\mathcal{M}=\{M^{a,z}_{EW}\}_{(a,z)\in\bool^{2N}}$ where
  $\trace{M^{a,z}\cdot\rho_{EW}}$ gives the probability that $\tilde \committer$
  sends $(a,z)$ in the reveal phase. The probability that $\tilde \committer$
  successfully decommits to $a$ is given by
  \begin{equation}
    \label{eq:pr-accept}
    \Pr[A=a \wedge \verifier \text{ accepts}]= \sum_{z} \trace{M^{a,z}_{EW}\otimes \mathbb{V}_V^{a,z} \rho_{EWV}}
  \end{equation}
where $\mathbb{V}^{a,z}$ is the projective measurement operator corresponding
  to \verifier{}'s check in the reveal phase.

  Consider a fixed $W=w$ and the reduced state $\rho^w_{EV}$. For $a\in\bool^n$,
  let $\mathcal{S}^{a}:= \{x\mid \bra x_a \rho^w_V \ket x_a \leq 2^{-\epsilon
    N}\}$ be the set of outcomes $x$ that have small probability of being
  observed and let $\mathcal{L}^a=\bool^N\setminus \mathcal{S}^a$ its
  complement. Let $Q^a(x)=\bra x_a \rho^w_V \ket x_a$ and
  $Q^a(\mathcal{X})=\sum_{x\in\mathcal{X}}Q^a(x)$ for $\mathcal{X}\subseteq
  \bool^N$. By Theorem 7 of \cite{bouman_all-but-one_2013},
  \begin{equation}
    \label{eq:borne-sum-Q}
    \sum_{a\in\bool^n} Q^a(\mathcal{L}^a) \leq 1+ c\cdot 2^n\cdot \max_{a\neq a'} \sqrt{|\mathcal{L}^a| |\mathcal{L}^{a'}|}
  \end{equation}
  where $c=\max_{a\neq a',x,y}\bra x_a \ket y_{a'}\leq 2^{-\frac d2}$. Since
  $Q^a(x)$ forms a probability distribution over $x$ and $Q^a(x)>2^{-\epsilon
    N}$ for all $x\in\mathcal{L}^a$, we have that $|\mathcal{L}^a|<
  2^{(1-\epsilon)N}$.
  We thus have that \eqref{eq:borne-sum-Q} is bounded above by
  $1+2^{n-d/2+(1-\epsilon)N}$. Let $\eta=2^{n-d/2+(1-\epsilon)N}$.

  Define $\mathbb{L}_a$ and $\mathbb{S}_a$ the projectors onto $\mathcal{L}_a$
  and $\mathcal{S}_a$, respectively. Observe that
  $\mathbb{L}_a+\mathbb{S}_a=\id$. The probability of successful
  opening to \emph{any} $a$ is at most
  \begin{align*}
    &\sum_{z} \trace{M^{a,z}_{E}\otimes \mathbb{V}_V^{a,z} \rho^w_{EV}}\\
    &= \sum_{a,z} \trace{M^{a,z}_{E}\otimes \mathbb{V}_V^{a,z}(\mathbb{L}_a+\mathbb{S}_a) \rho^w_{EV}}\\
    &\leq \sum_{a} \trace{\mathbb{L}_a\cdot \rho^w_{V}}
      + \sum_{a,z} \trace{M^{a,z}_{E}\otimes \mathbb{V}_V^{a,z} \cdot\mathbb{S}_a\cdot \rho^w_{EV}}
  \end{align*}
  The first operand in the sum above corresponds to
  $\sum_aQ^a(\mathcal{L}^a)$ which is bounded above by $1+\eta$. The second
  operand can be upper-bounded by 
  \begin{align*}
    2^q\max_{a,z} \trace{\mathbb{V}_V^{a,z}\cdot\mathbb{S}_a \cdot\rho^w_V}\lesssim 2^{q-\frac{\epsilon}{2} N}
  \end{align*}
  since the trace corresponds to the probability of guessing a random subset of
  a low-probability ($2^{-\epsilon N}$) outcome.

  Assuming $\eta$ can be made $\negl$ with an appropriate choice of parameters,
  if we let $p_a$ denote the probability that $\tilde\committer$ successfully
  opens string $a$, we have that
  \begin{equation}
    \sum_a p_a\leq 1+\negl
  \end{equation}
  \qed
\end{proof}

Observe that to commit to a string of length $n$, protocol \dfss{}
above requires sending $n^2$ qubits from the verifier to the committer.

\section{Witness hiding of $\nizkot[\Pi]$}
\label{sec:witn-hiding}

\begin{definition}[Witness Hiding]\label{def:witn-hiding}Let $R$ be an \npol{} relation, let $G$ be a hard instance generator for $R$
  and let $\Sigma$ be a proof system for $R$. We say that $\Sigma$ is
  \emph{witness hiding} (WH) if there exists a \ppt\ witness extractor $M$ such
  that for any non-uniform \ppt\ $V'$, for any instance $x$,
  \begin{equation*}
    \Pr[(x,w')\in R\mid w'=\langle P(x,w), V'(x) \rangle] \leq \Pr[(x,w')\in R\mid w'=M^{V',G}(x)] +\negl 
  \end{equation*}
  where the probability is (in part) over $x=G(1^n)$.
\end{definition}

We can show that if a $\Sigma$--protocol $\Pi$ is witness hiding, so is
$\nizkot[\Pi]$. We notice that this could also be extended to a
$\Xi$--protocol with an inverse polynomial multiplicative factor on the success
of the extractor $M$.

\begin{theorem}
  If $\Pi$ is a witness hiding $\Sigma$--protocol with $O(\lg \lambda)$--bit
  challenges, then $\nizkot[\Pi]$ is witness hiding\footnote{With a slight
    modification explained in the proof.}.
\end{theorem}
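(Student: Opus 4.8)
The plan is to reduce the witness hiding of $\nizkot[\Pi]$ to the witness hiding of $\Pi$, for the same relation $R$ and hard-instance generator $G$. The ``slight modification'' alluded to in the statement is that we instantiate $\nizkot[\Pi]$ with a single copy of $\Pi$ (no parallel amplification); with $O(\lg\lambda)$--bit challenges this still leaves an inverse-polynomial soundness gap, which is the regime we work in. Given the witness extractor $M$ promised by the witness hiding of $\Pi$, the goal is to produce a witness extractor $M'$ for $\nizkot[\Pi]$.

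First I would describe the reduction turning a $q$--bounded malicious verifier $V'$ against $\nizkot[\Pi]$ into a malicious verifier $\bar V'$ against $\Pi$. On input $x$, $\bar V'$ plays against the honest $\Pi$--prover: it receives the first message $a$, samples a uniformly random challenge $c$ from the (polynomial-size) challenge space, sends $c$, and receives the response $r^c$. It then assembles a \emph{single} $\nizkot$ prover message for $V'$: it uses $a$ as the first message, prepares a fresh BB84 state $\ket x_\theta$ and universal hash functions, sets the OT ciphertext for selection $c$ to the honest value $m^c = r^c\oplus h^c(x^{[c]})$, and fills every remaining OT ciphertext $m^{c'}$, $c'\neq c$, with an independent uniformly random string. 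It then runs $V'$ on this message and outputs whatever $V'$ outputs. The extractor $M'$ would run $M^{\bar V',G}$ a polynomial number of times with fresh randomness and output the first valid witness found.

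The analysis has three steps. (i) Relate $\bar V'$ to $V'$: let $p = \Pr[(x,w')\in R : w' = \langle P_{\nizkot}(x,w), V'(x)\rangle]$. In the execution produced by $\bar V'$, the security of BQS-OT (Theorem~\ref{thm:bqs-ot}) supplies an effective selection random variable $C$ over the challenge space which depends only on the honest sender side of the OT --- hence is distributed exactly as in a real run of $\nizkot[\Pi]$ and is independent of $\bar V'$'s guess $c$ --- and such that, from $V'$'s view, every OT ciphertext other than $m^C$ is $2^{-\Omega(\lambda)}$--close to uniform and independent of the rest. Conditioned on $\{C = c\}$, the message $\bar V'$ produced is thus statistically indistinguishable from a genuine $\nizkot[\Pi]$ prover message (both carry the true $r^c$ on the $c$--side, and the remaining sides are uniform from $V'$'s view in both), so $V'$ still extracts a witness with probability $\geq p - \negl[\lambda]$; since $c$ is uniform over a polynomial-size set independent of $C$, $\Pr[C=c]\geq 1/\poly[\lambda]$, giving
\begin{equation*}
  \Pr\big[(x,w')\in R : w' = \langle P_\Pi(x,w), \bar V'(x)\rangle\big]\ \geq\ \frac{1}{\poly[\lambda]}\big(p - \negl[\lambda]\big).
\end{equation*}
(ii) By the witness hiding of $\Pi$, $M^{\bar V',G}$ extracts a witness with at least this probability, up to a negligible additive term. (iii) Amplify: repeating $M^{\bar V',G}$ independently $\poly[\lambda]$ times boosts a per-run success probability of $\Omega(p/\poly[\lambda])$ (minus a negligible term) to at least $p - \negl[\lambda]$, which is exactly what the witness hiding definition for $\nizkot[\Pi]$ requires of $M'$.

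The hard part will be the conditioning on $\{C = c\}$ in step (i). The variable $C$ is an artifact of the min-entropy-splitting argument inside Theorem~\ref{thm:bqs-ot}: the reduction can neither observe it nor steer it --- the same obstruction that defeats full zero-knowledge (Lemma~\ref{lem:no-zk}). I would need to argue with care that (a) the law of $C$ does not degenerate against $\bar V'$'s guessing strategy --- it does not, because $C$ is a function only of the honest sender's state preparation, which $\bar V'$ performs independently of $c$ --- and (b) conditioning on $\{C = c\}$ does not disturb $V'$'s behaviour by more than a negligible amount --- it does not, because the near-uniformity bound of Theorem~\ref{thm:bqs-ot} holds for the joint state including $C$. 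Unlike zero-knowledge, witness hiding can absorb the resulting $1/\poly[\lambda]$ loss since the extractor may re-run the reduction, which is precisely why the statement is confined to a single repetition and to the inverse-polynomial soundness regime.
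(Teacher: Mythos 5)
Your reduction is sound and ends in the right place, but it diverges from the paper's proof in two respects worth flagging, one of which is a genuine misreading.

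\textbf{The ``slight modification'' is not what you say it is.} The footnote does not refer to running $\nizkot[\Pi]$ on a single copy of $\Pi$. In the paper's reduction, the constructed $\Pi$--verifier (call it $V_\Pi$) queries the $\Pi$--prover \emph{once} to get a single transcript $(a,c,r)$, and then re-runs the $\nizkot$--verifier $V$ $k$ times internally, each time with a fresh BB84 state but always feeding in the same $(a,c,r)$. Since $c$ is fixed across all $k$ internal runs, the scheme breaks whenever the adversary's implicit selection variable $C$ has $\Pr[C=c]=0$ (e.g.\ $V$ deterministically measures in a fixed basis). The paper's ``slight modification'' fixes exactly this: the honest $\nizkot$ prover flips a fair coin to decide whether to feed $(r_0,r_1)$ or $(r_1,r_0)$ to the OT, forcing $\Pr[C=c]=\tfrac12$. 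Your $\bar V'$ avoids the problem a different way --- it samples a fresh $c$ per invocation, and each of your outer repetitions of $M^{\bar V',G}$ re-instantiates $\bar V'$ from scratch --- so you genuinely do not need the modification, but you should not attribute the footnote to parallel amplification.

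\textbf{Where the amplification happens differs, and this is a real structural difference.} The paper boosts \emph{inside} the reduced verifier: $V_\Pi$ already succeeds with probability $\geq p - 2^{-k} - \negl$, so a single invocation of the $\Pi$--extractor $M^{V_\Pi,G}$ suffices. You instead settle for $\bar V'$ succeeding with probability $\geq p/\poly - \negl$ and repeat $M^{\bar V',G}$ at the outer (extractor) level. Both are valid. Your route is arguably cleaner in that it needs no protocol change, but it shifts the burden onto the amplification step: you need a \emph{fixed} polynomial $T$ such that $\mathbb{E}_x\bigl[1-(1-q_x)^T\bigr] \geq \mathbb{E}_x[p_x] - \negl$ where $q_x \geq p_x/N - \negl$ and $p_x$ can be anywhere in $[0,1]$. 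This does go through (e.g.\ $T = 2N\lambda$ works by splitting into the cases $q_xT\geq\lambda$, $1\leq q_xT<\lambda$ and $q_xT<1$), but your write-up treats it as routine when it is the one step that requires a short computation; the paper deliberately sidesteps it by making $V_\Pi$'s success probability close to $p$ before invoking $M$. Finally, your intermediate claim ``conditioned on $\{C=c\}$, $V'$ extracts with probability $\geq p-\negl$'' should read $\geq p'_c - \negl$ where $p'_c = \Pr[V'\text{ extracts}\mid C=c]$; the bound $\Pr[\bar V'\text{ extracts}]\geq p/N - \negl$ then follows from the average $\sum_c \tfrac1N \Pr[C=c]\,p'_c = p/N$, which is what you implicitly compute anyway.

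Overall: correct approach, same reduction $\bar V'$ as the paper's $V_\Pi$ at its core, but different (and slightly more delicate to verify) amplification strategy, and a wrong identification of the ``slight modification.''
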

\begin{proof}
  We want to reduce the witness hiding property of $\nizkot[\Pi]$ to that of
  $\Pi$. That is, given a malicious BQS verifier $V$ against $\nizkot[\Pi]$ that
  produces a witness with some probability, we construct a verifier $V_\Pi$
  against $\Pi$ that produces a witness with essentially the same probability.
  For simplicity, we assume challenges are single bits $c\in\bool$. The proof
  for logarithmic length challenges is almost identical.

  Verifier $V_\Pi$ is constructed as follows: in its interaction with the prover
  $P_\Pi$, it selects its challenge $c$ uniformly at random. After the
  interaction with $P_\Pi$, $V_\Pi$ is left with a transcript $(a,c,r)$. Now to
  produce a witness, $V_\Pi$ acts as the prover in an interaction with $V$. It
  prepares and sends the quantum state for the oblivious transfers as $P$ would.
  For its classical message, $V_\Pi$ uses $a$ and $r_c$ from the transcript
  received from $P$ for and sets $r_{1-c}$ to a uniformly random value. By
  Theorem~\ref{thm:bqs-ot}, there exists a random variable $C$ such that the
  value of $r_{1-C}$ is statistically hidden from $V$. With probability
  $\Pr[C=c]$, the view of $V$ in its interaction with $V_\Pi$ will be
  indistinguishable to its view in an interaction with $P$. If $V$ produces a
  valid witness with some probability $p$, the probability that $V_\Pi$ outputs
  $w$ is at least $\Pr[C=c]\cdot p$.

  At this point, an issue occurs if $C$ never takes value $c$, i.e.\
  $\Pr[C=c]=0$ for the particular choice of $c$ by $V_\Pi$. This can easily be
  fixed by having the prover in protocol $\nizkot[\Pi]$ randomize the transcript
  order. With equal probability, the prover uses either $(r_0,r_1)$ or
  $(r_1,r_0)$ as inputs for the OT. The transcript that $V$ receives is now
  uniformly random, such that $\Pr[C=c]=\frac 12$

  In the context of witness hiding, there is no auxiliary input to the verifier,
  so $V_\Pi$ can run $V$ again with the same transcript multiple times such that
  with overwhelming probability, at least one of the runs will provide $V$ with
  the correct view (i.e.\ it will obtain the transcript $(a,c,r)$ that $V_{\Pi}$
  received from $P_{\Pi}$), in which case it will produce a witness with
  probability $p$. The strategy of $V_\Pi$ is to simulate $V$ $k$ times and if
  any of the simulations $V$ produces a witness $w$, $V_\Pi$ outputs $w$. Using
  this strategy, we have
  \begin{align*}
    &\Pr[(x,w')\in R\mid w'\leftarrow \langle P_\Pi(x,w), V_{\Pi}(x) \rangle]\\
    &=\Pr[(x,w')\in R\mid w'\leftarrow \langle V_\Pi(x), V(x) \rangle]\\
    &\geq \Pr[(x,w')\in R\mid w'\leftarrow \langle V_\Pi(x), V(x) \rangle \mid \exists i: C_i=c]\cdot \Pr[\exists i :C_i=c]\\
    &\geq \Pr[(x,w')\in R\mid w'\leftarrow \langle P(x,w), V(x) \rangle] - 2^{-k} - 2^{-\frac n4 +\ell+q} 
  \end{align*}
  where the last inequality follows from the fact that conditioning on $C_i=c$,
  the view of $V$ in the $i$ simulated execution has trace distance at most
  $2^{-\frac n4 +\ell+q}$ from the view in the real execution by
  Theorem~\ref{thm:bqs-ot}.

 \qed
\end{proof}
 
\section{Triviality of Quantum $2$--Message Zero-Knowledge Proofs}
\label{sec:impo-aux}

In this section, we present a quantum version of the impossibility of
zero-knowledge $2$--message \emph{quantum} proof systems for hard languages.
This generalizes the impossibility of~\cite{goldreich_definitions_1994} to
quantum protocols. 
\begin{theorem}
  Let $\Pi=\langle \prover,\verifier \rangle$ be a $2$--message quantum proof
  system for a language $L$. If $\Pi$ is computationally $\epsilon$--sound for
  $\epsilon<\frac 13$ and computationally zero-knowledge, then $L\in \bqp$.
\end{theorem}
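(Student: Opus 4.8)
The plan is to mimic the classical argument of Goldreich--Oren / Goldreich--Krawczyk, adapted to the fact that messages and simulators are quantum. The key observation is that in a $2$-message protocol, the prover sends only one message (the first), the verifier sends the second message (its decision message, or rather the verifier sends a single message and then accepts/rejects), so the prover's behavior is non-adaptive. Wait — let me be careful about the message ordering: a $2$-message proof system has the verifier speak first (sending a challenge $\ket\psi$) and the prover respond, or the prover speak first and the verifier respond. Since zero-knowledge is only meaningful when the prover is the one being simulated, and soundness is about a cheating prover, the relevant structure is: verifier sends message $1$, prover sends message $2$, verifier decides. The simulator $\simulator_{\verifier^*}$, given $x$, must produce a transcript indistinguishable from the real interaction, \emph{for every} $\verifier^*$ — in particular for the honest $\verifier$.

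\textbf{First I would} instantiate zero-knowledge with the honest verifier $\verifier^* = \verifier$. This yields a \qpt\ simulator $\simulator$ such that $\simulator(x)$ is computationally indistinguishable from $\langle \prover, \verifier\rangle(x)$ for every $x \in L$. The decision procedure of $\verifier$ is itself an efficient measurement, so we can apply it to the simulator's output and obtain a \qpt\ algorithm $D(x)$ that on input $x$ outputs a bit. \textbf{Next I would} argue the two directions. If $x \in L$: by completeness the real interaction is accepted with high probability, so by computational indistinguishability $D(x)$ accepts with probability at least (say) $1 - \negl[|x|] - (\text{completeness error})$, which we can take to be, say, $\geq 2/3$. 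If $x \notin L$: here is where soundness enters. We want to say $D(x)$ accepts with probability at most roughly $\epsilon < 1/3$. The point is that $\simulator$, run on input $x \notin L$ but \emph{without} any interaction with a real prover, together with the honest verifier's acceptance predicate, constitutes a (quantum, efficient) \emph{cheating prover strategy}: given the verifier's challenge, the cheating prover internally runs $\simulator$ to produce the answer message. More precisely, since $\simulator$ produces a full transcript $(\text{msg}_1, \text{msg}_2)$ where $\text{msg}_1$ is the verifier's challenge, and the verifier's challenge is efficiently sampleable and the simulator need not be consistent with a fixed challenge — we need the standard trick: because the protocol is $2$-message and public-coin-ish, we can "plant" the real verifier's challenge into the simulator's transcript. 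Actually the cleaner route, following Goldreich--Oren, is: build a cheating prover $\tilde\prover$ that, upon receiving the verifier's first message $\text{msg}_1$, runs $\simulator(x)$ to get $(\text{msg}_1', \text{msg}_2')$; if $\text{msg}_1' = \text{msg}_1$ it sends $\text{msg}_2'$, else it aborts. Since the honest verifier's message distribution has some min-entropy, one needs either a fixed-first-message structure or a hybrid; in the quantum $2$-message setting the cleanest assumption is that the first message is a fixed state $\ket\psi$ independent of randomness (as in BQSM-style protocols) OR one repeats/conditions.

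\textbf{The hard part will be} handling the quantum first message: unlike the classical case, the verifier's first message is a quantum state, and a cheating prover receiving it cannot "read" which challenge it encodes, and the simulator's output register containing $\text{msg}_1'$ is quantum and cannot be compared to the received $\text{msg}_1$ by a projective equality test without disturbance. I expect the resolution is to observe that since $\epsilon < 1/3$ the gap between completeness ($\geq 2/3$) and soundness ($< 1/3$) is constant, so we can tolerate a constant loss: the cheating prover swaps in the \emph{simulator's own} first message (i.e., ignores the real verifier's challenge and uses $\simulator$'s transcript wholesale) — but this is only a valid attack if the verifier's first message is independent of its private randomness used for the decision, which fails in general. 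The honest route, matching $\epsilon < 1/3$ precisely, is: the simulator indistinguishability gives $D(x) = \Pr[\verifier \text{ accepts simulated transcript}]$; when $x \notin L$, decompose over the verifier's first message: for \emph{every} fixing of the verifier's internal state after sending $\text{msg}_1$, the conditional simulated second message, fed back to the verifier, is an admissible prover response, hence accepted with probability $\leq \epsilon$ by soundness; average to get $\leq \epsilon + \negl$. Combined: $D$ decides $L$ with constant gap ($2/3$ vs.\ $1/3 + \negl$), so $L \in \bqp$ by standard amplification. I would write the soundness-side inequality carefully as the main technical step, being explicit that the simulator's transcript conditioned on the verifier's first message induces a valid (efficient) cheating-prover strategy, which is exactly where $2$-message-ness is used — with three or more messages the simulator could exploit rewinding and this breaks, as expected.
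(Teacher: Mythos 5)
Your proposal correctly locates the central obstacle — the verifier's first message is a quantum register entangled with its private state, so one can neither projectively compare it to the simulator's produced ``first message'' nor classically condition on it — but the resolution you propose does not actually clear that obstacle, and the paper's argument takes a different and cleaner route that you miss.

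The step that fails in your write-up is the final soundness argument: ``when $x\notin L$, decompose over the verifier's first message: for every fixing of the verifier's internal state after sending $\text{msg}_1$, the conditional simulated second message, fed back to the verifier, is an admissible prover response.'' There is no such decomposition in the quantum setting. If you instantiate zero-knowledge with the honest verifier (as you suggest first), the simulator produces its \emph{own} first message register, which bears no relation to the register $P$ that the real verifier actually sent and that is entangled with the verifier's private register $V$. You cannot ``feed back'' the simulated $\text{msg}_2'$ into the real verifier's decision circuit acting on $(P', V)$, because the simulated $\text{msg}_2'$ was never correlated with that $V$. Equivalently, there is no classical random variable over which to average, so the line ``average to get $\leq\epsilon+\negl$'' has no quantum meaning. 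You already noticed this problem two sentences earlier and correctly rejected the ``swap in the simulator's own transcript wholesale'' variant for exactly this reason, but the ``honest route'' you offer next has the same flaw dressed up differently.

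What the paper does instead is fix a single, very specific cheating verifier $V^*$: it takes an auxiliary-input register $E$ of the same dimension as $P$, \emph{sends $E$ verbatim as its first message}, and upon receiving the prover's reply $P'$, outputs $P'$. Auxiliary-input quantum zero-knowledge (at the channel level) then guarantees a simulator $\simulator_{V^*}$ whose action on any input state — in particular on the register $P$ of the \emph{honest} verifier's purified state $\ket\psi_{PV}$, with $V$ held as a reference — is indistinguishable from the real prover's action. The $\bqp$ machine $M_L$ prepares $\ket\psi_{PV}$, runs $\simulator_{V^*}(x, P)$ to obtain $P'$, and applies the honest verifier's measurement on $(P', V)$. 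Completeness then follows because the verifier's measurement is a distinguisher and the channels are close on entangled inputs; soundness follows because ``run $\simulator_{V^*}$ on the received register'' \emph{is literally a cheating prover strategy} against the honest verifier, so for $x\notin L$ the acceptance probability is bounded by $\epsilon$ directly, with no conditioning or averaging needed. The delta between your proposal and the paper is precisely this device: rather than trying to make the simulator for the honest verifier interoperate with the real $P$, you must choose a cheating verifier that forwards its auxiliary input as the first message, so the simulator automatically becomes a function of the real $P$ and hence a valid prover reply.
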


  We assume, without loss of generality, that the the verifier is purified,
  i.e., we assume that the general structure of the two-message protocol is as follows:
\begin{enumerate}
\item $\verifier$ prepares a state $\ket\psi_{PV}$ and sends register $P$ to \prover.
\item $\prover$ applies some transform on register $P$ and returns a register $P'$ to $\verifier$.
\item \verifier\ applies a binary-outcome measurement $\{V^x_0,V^x_1\}$ on
  registers $P'V$ and accepts iff outcome is $0$.
\end{enumerate}

Let us assume that this protocol is auxiliary-input quantum ZK, i.e., there
exits a polynomial time quantum simulator $\simulator$ such that  for any
$\tilde\verifier$ the output of $\tilde\verifier$ on input $x$ and $\rho$ in a real
interaction is indistinguishable from
$\simulator_{\tilde \verifier}(x,\rho)$.

Consider the cheating verifier $V^*$ that 
\begin{enumerate}
\item On common input $x$ and auxiliary input register $E$ (of same dimension as
  $P$), sends register $E$ as the first message.
\item On reception of the prover message in quantum register $P'$, output this register $P'$. 
\end{enumerate}
This verifier runs in polynomial time, and so does its simulator. 

Then consider the following \qpt{} machine $M_L$ for deciding if $x\in L$.
Lemmas~\ref{lem:bqp-completeness} and~\ref{lem:bqp-soundness} below show that
this is indeed a \qpt{} algorithm for deciding $L$ which errs with probability
at most $\frac 13$. 
\begin{quote}
  \rule{\linewidth}{1pt}
  \begin{center}
    \bqp{} algorithm $M_L$
  \end{center}

\begin{enumerate}
\item Run the first message function of the honest verifier $\verifier$ on input $x$ to get a register $P$ and an internal register $V$.
\item Run the simulator for $\verifier^*$ on input $x$ and register $P$. Let $P'$ be the output register.
\item Run the verification circuit of $\verifier$ on registers $P'V$. Output ``yes'' if $\verifier$ accepts and ``no'' otherwise.
\end{enumerate}

  \rule{\linewidth}{1pt}
\end{quote}

\begin{lemma}[\bqp{} Completeness]
  \label{lem:bqp-completeness}
  If $\Pi$ is an $\frac 23$--correct quantum auxiliary-input zero-knowledge
  proof of language membership for $L$, then for all $x\in L$, $M_L$ accepts on
  input $x$ with probability at least $\frac 23$.
\end{lemma}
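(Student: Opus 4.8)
\emph{Proof idea.} The plan is to recognize $M_L$ as nothing but the zero-knowledge simulator for the particular cheating verifier $V^*$ defined above, run on the auxiliary input that an \emph{honest} verifier would have produced. Fix $x\in L$ and let $\ket{\psi}_{PV}$ be the state that the honest verifier $\verifier$ prepares in its first message, with $P$ the register sent to the prover and $V$ the register kept aside. First I would instantiate the auxiliary register $E$ of $V^*$ (which has the same dimension as $P$) with $P$ itself. Since $V^*$ merely forwards its auxiliary register to $\prover$ and then outputs whatever register it gets back, the joint state of $V^*$'s output register $P'$ and the retained register $V$ at the end of the real interaction $\langle \prover(x,w),V^*(x,\cdot)\rangle$ is exactly $(\id_V\otimes\mathcal{P})(\proj{\psi_{PV}})$, where $\mathcal{P}$ is the channel of the honest prover acting on the message register (with the prover's private workspace traced out). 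In particular, applying the honest verifier's binary measurement $\{V_0^x,V_1^x\}$ to $P'V$ in this real interaction is literally an honest execution of $\Pi$ on $x\in L$, so by $\frac 23$--correctness it outputs ``accept'' with probability at least $\frac 23$.

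Next I would invoke quantum zero-knowledge. Because $V^*$ runs in polynomial time, the ZK hypothesis provides a polynomial-time simulator $\simulator_{V^*}$ such that the channel families $\{\langle\prover,V^*(x,\cdot)\rangle\}_{x\in L}$ and $\{\simulator_{V^*}(x,\cdot)\}_{x\in L}$ are computationally indistinguishable in the sense of Definition~\ref{def:CPTP-indist}. The honest verification circuit, applied to the channel's output register $P'$ jointly with the side register $V$, is a $\poly[|x|]$--size measurement circuit, hence an admissible distinguisher $\dist$ with side state $\sigma=\proj{\psi_{PV}}$: register $P$ plays the role of the channel input and $V$ the role of the arbitrary reference register that Definition~\ref{def:CPTP-indist} quantifies over. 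Since $M_L$ on input $x$ computes precisely $\dist$ evaluated on the output of $\simulator_{V^*}$ applied to register $P$, its acceptance probability is within $\negl[|x|]$ of the real acceptance probability bounded below in the previous paragraph, i.e.\ it is at least $\frac 23-\negl[|x|]$; combined with the corresponding soundness bound this places $L$ in \bqp. (If the clean constant $\frac 23$ is wanted, one starts from a proof system whose completeness exceeds $\frac 23$ by a noticeable margin, which is without loss of generality.) Finally I would observe that $M_L$ is \qpt, since $\verifier$'s first-message and verification circuits are efficient, $V^*$ is trivially efficient, and $\simulator_{V^*}$ is efficient by hypothesis.

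The step I expect to need the most care is the translation into Definition~\ref{def:CPTP-indist}: one must verify that the reference register $V$ --- which is entangled with the auxiliary input handed to $V^*$ --- is exactly the sort of arbitrary side information that channel indistinguishability allows, and that the honest verifier's accept/reject measurement is a legitimate choice of distinguisher acting jointly on the channel output and that side register. The rest is essentially unwinding definitions.
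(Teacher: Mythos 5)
Your proposal is correct and follows essentially the same route as the paper's proof: identify the real interaction of $\prover$ with the forwarding cheating verifier $V^*$ (on auxiliary register $P$ entangled with the retained register $V$) as an honest execution of $\Pi$, then invoke channel indistinguishability from Definition~\ref{def:CPTP-indist} with the honest verification measurement $\dist(\rho)=\trace{V_0^x\rho}$ as distinguisher and $\proj{\psi_{PV}}$ as the side state. Your added remark about starting from completeness strictly above $\tfrac23$ to absorb the $\negl$ slack is a small but legitimate sharpening that the paper glosses over.
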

\begin{proof}
  Since $\Pi$ is zero-knowledge, for any cheating verifier $\verifier^*$, there exists a \bqp{} machine $\simulator_{\verifier^*}$ such that the quantum map induced by the interaction of \prover\ and $\verifier^*$ on the auxiliary input of $\verifier^*$ is indistinguishable from the quantum map $\simulator_{\verifier^*}(x,\cdot)$.

  Let $\psi_{PV}=\verifier(x)$ and let $\dist(\rho):=
  \trace{V_0^x \rho}$.
  Let $\Psi_x:=  \prover(x,w)\leftrightharpoons\verifier^*(x,\cdot)$ and $\Phi_x:=
  \simulator_{\verifier^*}(x,\cdot)$ be the real and simulated maps acting on
  the auxiliary information of the verifier. Observe that the quantity $\dist (\Psi_x\otimes
  \id_V(\psi_{PV}))$ corresponds to the probability that the verifier accepts in the real
  protocol and $\dist (\Phi_x \otimes \id_V(\psi_{PV}))$ is the probability
  that $M_L$ accepts on input $x\in L$. By the assumption that the scheme
  is zero-knowledge,
  \begin{equation}
    \|\dist (\Psi^x_P\otimes \id_V(\psi_{PV})) -
    \dist (\Phi^x_P \otimes \id_V(\psi_{PV})) \|
    \leq \negl\enspace .
  \end{equation}
  This means that $M_L$ accepts with essentially the same probability with
  which \verifier{} accepts in the interactive proof, which is at least $\frac 23$.

\qed
\end{proof}

\begin{lemma}[\bqp{} Soundness]
  \label{lem:bqp-soundness}
  If $x\notin L$, then $M_L$ rejects input $x$ with probability $\epsilon>\frac 23$.
\end{lemma}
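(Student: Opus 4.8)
The plan is to reduce soundness of the protocol $\Pi$ to the behaviour of $M_L$ on $x \notin L$, by showing that the simulator $\simulator_{\verifier^*}$ (applied inside $M_L$) can be turned into a cheating prover against $\Pi$. Concretely, I would fix $x \notin L$ and consider the cheating prover $\tilde\prover$ that, on receiving the first message register $P$ from the honest verifier $\verifier$, internally runs $\simulator_{\verifier^*}(x, P)$ and returns the output register $P'$ to $\verifier$. Since $\verifier^*$ simply forwards its auxiliary register as its first message and forwards the prover's reply as its output, the quantum map $\simulator_{\verifier^*}(x, \cdot)$ is (computationally) indistinguishable from the real interaction map $\prover(x,w) \leftrightharpoons \verifier^*(x, \cdot)$ — but note that for $x \notin L$ there may be no honest prover/witness, so we cannot directly invoke zero-knowledge here. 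This is the main obstacle, and I address it below.

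**Handling $x \notin L$.** The key point is that for $x \notin L$ there is no honest prover $\prover$ to interact with, so the zero-knowledge guarantee formally says nothing. The standard resolution (following Goldreich–Oren) is that the \emph{simulator} $\simulator_{\verifier^*}$ is a fixed \qpt\ algorithm that exists and runs regardless of whether $x \in L$; it is a well-defined quantum channel on $(x, P)$. What we do instead is: suppose for contradiction that $M_L$ accepts $x \notin L$ with probability $> \epsilon$. Then the cheating prover $\tilde\prover$ described above makes $\verifier$ accept with exactly the probability that $M_L$ accepts (by construction, $M_L$ runs $\verifier$'s first-message function, then $\simulator_{\verifier^*}$, then $\verifier$'s verification, which is precisely the interaction $\tilde\prover \leftrightharpoons \verifier$). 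Hence $\tilde\prover$ breaks $\epsilon$-soundness of $\Pi$ against \qpt\ provers. This gives the contrapositive: if $\Pi$ is $\epsilon$-sound, then $M_L$ rejects every $x \notin L$ with probability $\geq \epsilon > \frac 23$ (using the hypothesis $\epsilon > \frac 23$, or equivalently $\epsilon < \frac 13$ stated as soundness error $1-\epsilon$ — I would match the normalization used in the theorem statement, which writes $\epsilon < \frac 13$ for the soundness error, so here ``accepts with probability $\epsilon > \frac 23$'' should read ``rejects with probability $> \frac 23$'', i.e. the acceptance probability of a cheating prover is at most the soundness error $< \frac 13$).

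**Order of steps.** First I would make precise that $\simulator_{\verifier^*}$ is a fixed \qpt\ channel defined independently of language membership. Second, I would observe that $M_L$ on input $x$ is \emph{syntactically identical} to the experiment ``honest $\verifier$ interacts with the prover $\tilde\prover := \simulator_{\verifier^*}$'', so $\Pr[M_L \text{ accepts } x] = \Pr[\tilde\prover(x) \leftrightharpoons \verifier(x) = 1]$. Third, I would invoke $\epsilon$-soundness of $\Pi$ against \qpt\ provers (which applies since $\simulator_{\verifier^*}$ is \qpt) to conclude $\Pr[M_L \text{ accepts } x] \leq \epsilon < \frac 13$ for $x \notin L$, equivalently $M_L$ rejects with probability $> \frac 23$. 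Combined with Lemma~\ref{lem:bqp-completeness}, $M_L$ decides $L$ with two-sided error bounded away from $\frac 12$, so $L \in \bqp$.

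**The main obstacle** is purely conceptual rather than technical: recognizing that zero-knowledge is used only on the $x \in L$ side (Lemma~\ref{lem:bqp-completeness}), while the $x \notin L$ side uses \emph{only} soundness together with the fact that the simulator is an efficient algorithm that runs on all inputs. No rewinding, no quantum-specific subtlety beyond noting that the purified verifier $\verifier^*$ is \qpt\ and hence so is its simulator, and that the composed circuit is \qpt. There is essentially no calculation to grind through.
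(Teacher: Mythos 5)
Your proof is correct and follows exactly the same route as the paper's: construct the cheating prover that runs $\simulator_{\verifier^*}(x,P)$ on the honest verifier's first message, observe that $M_L$'s acceptance probability on $x\notin L$ equals this cheating prover's success probability, and invoke $\epsilon$-soundness. The extra care you take to note that the simulator is a fixed \qpt{} channel independent of whether $x\in L$ (so soundness, not zero-knowledge, is what applies for $x\notin L$) is a valid and worthwhile clarification, and your observation about the $\epsilon$ normalization mismatch between the lemma statement and the theorem is accurate.
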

\begin{proof}
  Consider the cheating prover $\prover^*$ that acts as follows: on common input
  $x$ and register $P$ received from \verifier, compute
  $P'=\simulator_{\verifier^*}(x,P)$ and reply $P'$ to \verifier. Then the
  probability that $\verifier$ accepts in this interaction with a cheating
  prover is equal to the probability that $M_L$ accepts, which by soundness of
  the interactive proof is at most $\epsilon$. \qed
\end{proof}

\end{document}